\let \@sverbatim \@verbatim
\def \@verbatim {\@sverbatim \verbatimplus}
{\catcode`'=13 \gdef \verbatimplus{\catcode`'=13 \chardef '=13 }} 
\leaders\hbox{\normalfont$\m@th\mkern \@dotsep mu\hbox{.}\mkern \@dotsep mu$}\hfill}
\newcommand{\dl}{\textit{dl}}
\newcommand{\SigSCL}{\ensuremath{\Sigma_{\textup{SCL}}(A)}}
\newcommand{\SigCP}{\ensuremath{\Sigma_{\textup{CP}}(A)}}
\newcommand{\TCP}{\ensuremath{\mathbb{T}_{{\SigCP},\cal X}}}
\newcommand{\TSCL}{\ensuremath{\mathbb{T}_{{\SigSCL},\cal X}}}
\newcommand{\CPandneg}{\CP(\neg,\leftand,\leftor)}
\newcommand{\se}{\ensuremath{\textit{se}}}
\newcommand{\export}{\mathbin{\setlength{\unitlength}{1ex}
     \begin{picture}(2.0,1.8)(-.8,0)
     \put(-.5,1.6){\line(1,0){1.4}}
     \put(-.5,-0.2){\line(1,0){1.4}}
     \put(-.44,-0.2){\line(0,1){1.8}}
     \put(.84,-0.2){\line(0,1){1.8}}
     \end{picture}
     }}
\newcommand{\M}{\ensuremath{\mathbb{M}}}
\newcommand{\T}{\NT}
\newcommand{\axname}[1]{\textup{\ensuremath{\textrm{#1}}}}
\newcommand{\SCL}{\axname{SCL}}
\newcommand{\FSCL}{\axname{FSCL}}
\newcommand{\SCLe}{\axname{EqFSCL}}
\newcommand{\SCLi}{\axname{EqFSCL}^-}
\newcommand{\MSCLe}{\axname{EqMSCL}}
\newcommand{\MSCL}{\axname{MSCL}}
\newcommand{\SSCL}{\axname{SSCL}}
\newcommand{\SSCLe}{\axname{EqSSCL}}
\newcommand{\leftand}{~
     \mathbin{\setlength{\unitlength}{.9ex}
     \begin{picture}(1.6,1.8)(-.4,0)
     \put(-.8,0){\small$\wedge$}
     \put(-.66,-0.1){\textcolor{white}{\circle*{0.6}}}
     \put(-.66,-0.1){\circle{0.6}}
     \end{picture}
     }}
\newcommand{\fulland}{~
     \mathbin{\setlength{\unitlength}{.9ex}
     \begin{picture}(1.6,1.8)(-.4,0)
     \put(-.8,0){\small$\wedge$}
     \put(-.66,-0.1){\circle*{0.66}}
     \end{picture}
     }}
\newcommand{\leftor}{~
     \mathbin{\setlength{\unitlength}{.9ex}
     \begin{picture}(1.6,1.8)(-.4,0)
     \put(-.8,0){\small$\vee$}
     \put(-.66,1.3){\textcolor{white}{\circle*{0.6}}}
     \put(-.66,1.3){\circle{0.6}}
     \end{picture}
     }}
\newcommand{\PS}{\ensuremath{{\mathcal{C}_A}}}
\newcommand{\SP}{\ensuremath{{\mathcal{S}_A}}}
\newcommand{\NT}{\ensuremath{{\mathcal{T}_A}}}
\newcommand{\mem}{\ensuremath{\textit{mem}}}
\newcommand{\tr}{\ensuremath{{\sf T}}}
\newcommand{\fa}{\ensuremath{{\sf F}}}
\newcommand{\true}{\ensuremath{\textit{true}}}
\newcommand{\false}{\ensuremath{\textit{false}}}
\newcommand{\memt}{\ensuremath{\mathit{m}}}
\newcommand{\memse}{\ensuremath{\mathit{mse}}}
\newcommand{\Le}{L}
\newcommand{\Ri}{R}
\newcommand{\stse}{\ensuremath{\mathit{sse}}}
\newcommand{\Au}{A^{u}}
\newcommand{\SPf}[1]{\ensuremath{{\mathcal{S}_{#1}}}}
\newcommand{\PSf}[1]{\ensuremath{{\mathcal{C}_{#1}}}}
\newcommand{\lef}{\ensuremath{\scalebox{0.78}{\raisebox{.1pt}[0pt][0pt]{$\;\lhd\;$}}}}
\newcommand{\rig}{\ensuremath{\scalebox{0.78}{\raisebox{.1pt}[0pt][0pt]{$\;\rhd\;$}}}}
\renewcommand{\unlhd}{\ensuremath{\scalebox{0.78}{\raisebox{.1pt}[0pt][0pt]{$\;\trianglelefteq\;$}}}}
\renewcommand{\unrhd}{\ensuremath{\scalebox{0.78}{\raisebox{.1pt}[0pt][0pt]{$\;\trianglerighteq\;$}}}}
\newtheorem{theorem}{Theorem}[section]
\newtheorem{lemma}[theorem]{Lemma}
\newtheorem{proposition}[theorem]{Proposition}
\newtheorem{definition}[theorem]{Definition}
\newtheorem{fact}[theorem]{Fact}
\theoremstyle{definition}
\newtheorem{example}[theorem]{Example}
\newcommand{\CP}{\axname{CP}}
\newcommand{\CPmem}{\axname{\CP$_{\mem}$}}
\newcommand{\CPstat}{\axname{\CP$_{\mathit{s}}$}}
\newcommand{\qedex}{\textit{End~example.}}
\begin{document}

\title{Propositional logic with short-circuit evaluation: a non-commutative 
and a commutative variant}

\author{
	Jan A.\ Bergstra
	\qquad
	Alban Ponse \qquad 
	Daan J.C. Staudt\\[2mm]
  {\small 
  Section Theory of Computer Science, Informatics Institute}\\
 {\small  Faculty of Science, University of Amsterdam}\\[2mm]
	{\small \url{https://staff.science.uva.nl/{j.a.bergstra,a.ponse}}\quad\url{https://www.daanstaudt.nl}}
}

\date{}

\maketitle

\begin{abstract}
Short-circuit evaluation denotes the semantics of 
propositional connectives in which the second
argument is evaluated only if the first argument does not suffice 
to determine the value of the expression.
Short-circuit evaluation is widely used in programming, with
sequential conjunction and disjunction as primitive connectives.

We study the question which logical laws axiomatize short-circuit evaluation under the 
following assumptions:
compound statements are evaluated from left to right, 
each atom (propositional variable) evaluates to either true or false, 
and atomic evaluations can cause a side effect.
The answer to this question depends on the kind of atomic side effects that can occur
and leads to different ``short-circuit logics".
The basic case is FSCL (free short-circuit logic), which
characterizes the setting in which each atomic evaluation can cause a side effect. 
We recall some main results and then relate FSCL 
to MSCL (memorizing short-circuit logic), where
in the evaluation of a compound statement, the first evaluation result of each atom is memorized.
MSCL can be seen as a sequential variant of propositional logic:
atomic evaluations cannot cause a side effect and the sequential connectives are not 
commutative.
Then we relate MSCL to SSCL (static short-circuit logic), the variant of propositional 
logic that prescribes short-circuit evaluation with commutative sequential connectives.

We present evaluation trees as an intuitive semantics for short-circuit evaluation,
and simple equational axiomatizations for the short-circuit logics mentioned that
use negation and the sequential connectives only.
\\[3mm]
\emph{Keywords:}
Non-commutative conjunction,
conditional composition,
sequential connectives,
short-circuit evaluation,
side effect
\end{abstract}
{\small\tableofcontents}

\section{Introduction}
In this paper, we discern a 
fixed evaluation strategy to determine the truth of a propositional statement.
We proceed from some very simple points of departure:
\begin{itemize}
\item
Atoms (propositional variables) evaluate to either \true\ or \false, thus we exclude logics that 
comprise other truth values.
\item
The semantics of the binary propositional connectives (conjunction and disjunction) is determined
by \emph{short-circuit evaluation}:
the second argument is evaluated only if the first argument does not suffice to determine 
the (evaluation) value of the expression. 
\item
Once an atom in a compound
expression is evaluated to a truth value, each next atomic evaluation of that atom
evaluates to the same truth
value. For example, if $a$ evaluates to \true, then so does $a\wedge a$.
\end{itemize}

We consider conjunction as the primary connective and disjunction as a derived connective, and we
write 
\[\leftand \text{ and }~\leftor\]
for the case that these connectives \emph{prescribe} short-circuit evaluation.
This notation stems from~\cite{BBR95}, where
the small circle indicates that the left argument must be evaluated first. 
Other notations are \texttt{\&\&} and \texttt{||} as used in programming,
$\otimes$ and $\oplus$ from transaction logic (see, e.g.~\cite{BK15}), and
$\vartriangle$ and $\triangledown$ from computability logic (see, e.g.~\cite{Jap08}).
However, we prefer the asymmetric symbols and we will henceforth refer to these as
\emph{sequential connectives}.
Given a set of atoms (propositional variables), sequential propositions are built from atoms, 
sequential conjunction and disjunction as mentioned here, negation, and the
constants \tr\ and \fa\ for the values \true\ and \false.

Short-circuit evaluation combines well with negation, and sequential (equational) variants of 
De Morgan's laws are valid, such as
\[\neg(x\leftand y)=\neg x\leftor \neg y.\]

We first recall \emph{free short-circuit logic}, \FSCL\ for short, 
and relate this to two variants of propositional logic with short-circuit evaluation.
In \FSCL, two sequential propositions are identified if and only if
they always have the same evaluation value under short-circuit evaluation. 
Here ``always'' refers to any possible
assumption about the truth value of atoms in any evaluation state, 
{and} to the side effects that may occur in the evaluation process: 
we speak of an \emph{atomic side effect} if the evaluation of an atom in a compound expression changes
(influences) the evaluation result of the subsequent atoms that must be evaluated to 
determine value of the expression. \FSCL\ is a logic for equational reasoning about sequential
propositions that may have atomic side effects without any restriction. 
Stated differently, this logic is immune to all atomic side effects.
For example, in \FSCL\ the sequential proposition $a\leftand a$ is not equivalent with $a$ or with
$a\leftand(a\leftor a)$.
Two typical laws of \FSCL\ are $(x\leftand y)\leftand z=x\leftand (y\leftand z)$ and 
$x\leftand\fa=\neg x\leftand\fa$.

In this paper we study two short-circuit logics that comprise \FSCL:
\begin{description}

\item[$\MSCL,$] ``memorizing short-circuit logic", is a logic for equational reasoning about sequential
propositions with the property that atomic side effects do not occur: in the evaluation of 
a compound statement the first evaluation result of each atom is memorized. 
In this logic, the sequential connectives are not commutative, for example,
$a\leftand\fa$ and
$\fa\leftand a$ are not equivalent (the first sequential proposition requires evaluation
of atom $a$, the second one does not).
Typical laws of \MSCL\ are $x\leftand x=x$ and $x\leftand (y\leftand x)=x\leftand y$.

\item[$\SSCL,$] ``static short-circuit logic", is the (equational)
variant of propositional logic in which short-circuit evaluation is prescribed,
thus the sequential connectives are taken to be commutative.
Also this logic is based on the assumption that atomic side effects do not occur.
\end{description}

\paragraph{Structure of the paper and main results.}
In Section~\ref{sec:FSCL} we discuss evaluation trees, which model the evaluation of a 
sequential proposition and were defined in~\cite{Daan,PS17}.
We recall the main results on \FSCL, in particular its equational 
axiomatization \SCLe\ for closed terms.

In Section~\ref{sec:MSCL} we define memorizing evaluation trees by a transformation on the 
evaluation trees
for \FSCL,  introduce \MSCLe\ as an equational axiomatization of their equality, and
show that the axioms of \SCLe\ are derivable from \MSCLe. 

In Section~\ref{sec:Hoare} we recall the definitions of the short-circuit logics mentioned
above. These definitions employ  the \emph{conditional} |
a ternary connective introduced by Hoare in 1985 in~\cite{Hoa85} | 
as a hidden operator, and stem from~\cite{BPS13,BP12a}.

In Section~\ref{sec:Completeness} we prove that \MSCLe\ corresponds with \MSCL\ 
in the sense that both define the same equational theory, and that both axiomatize equality of 
memorizing evaluation trees.

In Section~\ref{sec:SSCL} we define \SSCLe\ as the extension of \MSCLe\ with a commutativity
axiom, and prove that \SSCLe\ is an equational axiomatization of \SSCL.
Then we show that both axiomatize equality of static evaluation trees as defined in~\cite{BP15}. 
Finally, we present four simple axioms for the conditional connective 
as an alternative for those in~\cite{Hoa85}.

Section~\ref{sec:Conc} contains some conclusions, in particular on viewing both \MSCL\ and \SSCL\ as
variants of propositional logic. 

\paragraph{Notes.}1.
All derivability results in this paper were checked with the theorem 
prover \emph{Prover9}, and all independence results were found with help of  
the tool \emph{Mace4}, see~\cite{BirdBrain} for both these tools. 
We added four appendices with detailed proofs of these results.

\noindent
2. Considerable parts of the text below stem from~\cite{Daan,BPS13,PS17}.
Together with~\cite{PS17}, this paper subsumes most of~\cite{BPS13}. 
Two topics discussed in~\cite{BPS13} and not in this paper are
`repetition-proof' and `contractive' short-circuit logic; we will deal with these topics 
in a forthcoming paper.

\section{Evaluation trees and axioms for short-circuit evaluation}
\label{sec:FSCL}
In this section we summarize the main results of~\cite{PS17}:
evaluation trees and an axiomatization of their equality are discussed 

\medskip
Given a non-empty set $A$ of atoms, we first define evaluation trees.

\begin{definition}
\label{def:treesN}
The set \NT\ of \textbf{evaluation trees} over $A$ with leaves in 
$\{\tr, \fa\}$ is defined inductively by
\[\tr\in\NT,\quad\fa\in\NT, \quad (X\unlhd a\unrhd Y)\in\NT 
~\text{ for any }X,Y \in \NT \text{  and } a\in A.\]
The operator $\_\unlhd a\unrhd\_$ is called 
\textbf{tree composition over $a$}.
In the evaluation tree $X \unlhd a \unrhd Y$, 
the root is represented by $a$,
the left branch by $X$ and the right branch by $Y$. 
\end{definition}
The leaves of an evaluation tree represent evaluation results (so we use 
the constants \tr\ and \fa\ for \true\ and \false). 
Next to the formal notation for evaluation
trees we also use a more pictorial representation. For example,
the tree
\[\fa\unlhd b\unrhd(\tr\unlhd a\unrhd\fa)\]
can be represented as follows, where $\unlhd$ yields a left branch, and $\unrhd$ a right branch:
\begin{equation}
\label{plaatje}
\tag{Picture 1}
\hspace{-12mm}
\begin{tikzpicture}[%
      level distance=7.5mm,
      level 1/.style={sibling distance=15mm},
      level 2/.style={sibling distance=7.5mm},
      baseline=(current bounding box.center)]
      \node (a) {$b$}
        child {node (b1) {$\fa$}
        }
        child {node (b2) {$a$}
          child {node (d1) {$\tr$}} 
          child {node (d2) {$\fa$}}
        };
      \end{tikzpicture}
\end{equation}

In order to define a short-circuit semantics for negation and the sequential 
connectives, we first define the \emph{leaf replacement} operator, 
`replacement' for short, on trees in \NT\ as follows. 
For $X\in\NT$, the replacement of \tr\ with $Y$ and $\fa$ with $Z$ in $X$, denoted
\[X[\tr\mapsto Y, \fa \mapsto Z]\]
is defined recursively by 
\begin{align*}
\tr[\tr\mapsto Y,\fa\mapsto Z]&= Y,\\
\fa[\tr\mapsto Y,\fa\mapsto Z]&= Z,\\
(X_1\unlhd a\unrhd X_2)[\tr\mapsto Y,\fa\mapsto Z]
&=X_1[\tr\mapsto Y,\fa\mapsto Z]\unlhd a\unrhd X_2[\tr\mapsto Y,\fa\mapsto Z].
\end{align*}
We note that the order in which the replacements of leaves of 
$X$ is listed
is irrelevant and adopt the convention of not listing  
identities inside the brackets, e.g., 
$X[\fa\mapsto Z]=X[\tr\mapsto \tr,\fa\mapsto Z]$.
By structural induction it follows that repeated replacements satisfy 
\begin{align*}
X[\tr\mapsto Y_1,\fa\mapsto Z_1][\tr\mapsto Y_2,\fa\mapsto Z_2]=
X[\tr\mapsto Y_1[\tr\mapsto Y_2,\fa\mapsto Z_2],~
\fa\mapsto Z_1[\tr\mapsto Y_2,\fa\mapsto Z_2]].
\end{align*}

We define the set \SP\ of closed (sequential) propositional statements over $A$
by the following grammar:
\[P ::= a\mid\tr\mid\fa\mid \neg P\mid P\leftand P\mid P\leftor P,\]
where $a\in A$, \tr\ is a constant for the truth  value \true, \fa\ for \false,
and refer to its signature by
\[\SigSCL=\{\leftand,\leftor,\neg,\tr,\fa,a\mid a\in A\}.\]
We interpret propositional statements in \SP\ as evaluation trees
by a function $se$ (abbreviating  short-circuit evaluation).

\begin{definition}
\label{def:se}
The unary \textbf{short-circuit evaluation function} $se : \SP \to\NT$ 
is defined as
follows, where $a\in A$:
\begin{align*}
se(\tr) &= \tr,
&se(\neg P)&=se(P)[\tr\mapsto \fa,\fa\mapsto \tr],\\
se(\fa) &= \fa,
&se(P \leftand Q)&= se(P)[\tr\mapsto se(Q)],\\
se(a)&=\tr\unlhd a\unrhd \fa,
&se(P \leftor Q)&= se(P)[\fa\mapsto se(Q)].
\end{align*}
\end{definition}

The overloading of the symbol \tr\ in $se(\tr)=\tr$ will not
cause confusion (and similarly for \fa).
As a simple example we derive the evaluation tree for $\neg b\leftand a$:
\[se(\neg b\leftand a)=se(\neg b)[\tr\mapsto se(a)]=
(\fa\unlhd b\unrhd\tr)[\tr\mapsto se(a)]=\fa\unlhd b\unrhd(\tr\unlhd a\unrhd\fa),\] 
which can be visualized as~\ref{plaatje} on page~\pageref{plaatje}.
Also, $se(\neg(b\leftor\neg a))=\fa\unlhd b\unrhd(\tr\unlhd a\unrhd\fa)$.
An evaluation tree $se(P)$ represents short-circuit evaluation in a way that can be
compared to the notion of a truth table for propositional logic in that it 
represents each possible evaluation of $P$. However, there are some important differences with
truth tables: in $se(P)$, the sequentiality
of $P$'s evaluation is represented, and
the same atom may occur multiple times in $se(P)$.

\begin{definition}
\label{def:freevc}
The binary relation \textbf{$se$-congruence}, notation $=_\se$, is defined on \SP\ by 
\[P=_\se Q~\iff~ se(P)=se(Q).\]
\end{definition}

\begin{table}
\centering
\hrule
\begin{align}
\label{SCL1}
\tag{F1}
\fa&=\neg\tr\\[0mm]
\label{SCL2}
\tag{F2}
x\leftor y&=\neg(\neg x\leftand\neg y)\\[0mm]
\label{SCL3}
\tag{F3}
\neg\neg x&=x\\[0mm]
\label{SCL4}
\tag{F4}
\tr\leftand x&=x\\[0mm]
\label{SCL5}
\tag{F5}
x\leftor\fa&=x\\[0mm]
\label{SCL6}
\tag{F6}
\fa\leftand x&=\fa\\[0mm]
\label{SCL7}
\tag{F7}
(x\leftand y)\leftand z&=x\leftand (y\leftand z)
\\[0mm]
\label{SCL8}
\tag{F8}
\qquad
\neg x\leftand \fa&= x \leftand\fa
\\[0mm]
\label{SCL9}
\tag{F9}
(x\leftand\fa)\leftor y
&=(x\leftor\tr)\leftand y\\
\label{SCL10}
\tag{F10}
(x\leftand y)\leftor(z\leftand\fa)&=
(x\leftor (z\leftand\fa))\leftand(y\leftor (z\leftand\fa))
\end{align}
\hrule
\caption{\SCLe, a set of axioms for $se$-congruence} 
\label{tab:SCL}
\end{table}

In~\cite{Daan,PS17} it is proved that the axioms in Table~\ref{tab:SCL}\footnote{%
  In~\cite{Daan}, the dual of axiom~\eqref{SCL5} is used.}   
constitute an equational axiomatization of $se$-congruence:
\begin{fact}
\label{fact:fscl}
For all $P,Q\in\SP$,
$\SCLe\vdash P=Q~\iff~P=_\se Q$.
\end{fact}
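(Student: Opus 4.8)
The plan is to prove the two directions by different means. The forward direction $\SCLe\vdash P=Q\Rightarrow P=_\se Q$ (soundness) I would handle by a direct semantic check. First note that $=_\se$ is a congruence on $\SP$: by Definition~\ref{def:se} the tree of a composite term is computed solely from the trees of its immediate subterms, so equal trees are preserved by $\neg$, $\leftand$ and $\leftor$. Soundness then follows by induction on the length of an $\SCLe$-derivation, once each axiom is shown to have $se$-equal sides. These are finite leaf-replacement computations using the composition identity for repeated leaf replacements; the instructive cases are \eqref{SCL8}, where both sides collapse to $se(x)[\tr\mapsto\fa,\fa\mapsto\fa]$ because overwriting every leaf with $\fa$ annihilates the leaf-swap induced by $\neg$, and \eqref{SCL9}, \eqref{SCL10}, where both sides reduce to an ``evaluate-then-continue'' tree in which all leaves of one tree are replaced by another.

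For the backward direction (completeness) I would isolate a class $\mathcal{N}\subseteq\SP$ of normal forms and prove: (a) every $P\in\SP$ is $\SCLe$-provably equal to some $P'\in\mathcal{N}$; and (b) $se$ is injective on $\mathcal{N}$, i.e.\ $se(P')=se(Q')$ with $P',Q'\in\mathcal{N}$ forces $P'=Q'$ syntactically. These combine into the result: given $P=_\se Q$, take normal forms $P',Q'$ as in (a); then $se(P')=se(P)=se(Q)=se(Q')$, so $P'=Q'$ by (b), and hence $\SCLe\vdash P=P'=Q'=Q$. For (a) I would use \eqref{SCL1}, \eqref{SCL2}, \eqref{SCL3} with the derivable De Morgan laws to drive all negations down onto the atoms, then the unit and constant laws \eqref{SCL4}--\eqref{SCL6} together with associativity \eqref{SCL7} to simplify and re-associate, and finally \eqref{SCL8}, \eqref{SCL9}, \eqref{SCL10} to reorganise the remaining nesting into a canonical branching shape; $\mathcal{N}$ must be designed so that this rewriting always terminates inside $\mathcal{N}$.

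The step I expect to be the main obstacle is the interaction between (a) and (b), which is forced by the fact that the Hoare conditional is not directly expressible over $\neg,\leftand,\leftor$: the tree $se(P)\unlhd a\unrhd se(Q)$ is \emph{not} obtained as $(a\leftand P)\leftor(\neg a\leftand Q)$, because the leaf-replacement semantics of $\leftor$ also rewrites leaves lying strictly inside $se(P)$. As a result $se$ is not surjective onto $\NT$, so $\mathcal{N}$ cannot be defined by naively recursing on an arbitrary tree; it must record, in a reachable syntactic form, which atom is tested first and what the two continuations are. Proving (b) is where this bites hardest: I would argue by induction, showing that the top of $se(P')$---its root atom and its two immediate subtrees---determines the top-level form of the normal form $P'$, so equal trees force equal normal forms. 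Designing $\mathcal{N}$ so that this read-back is unambiguous, and checking that the reorganisation in (a) indeed lands in $\mathcal{N}$ using precisely \eqref{SCL9} and \eqref{SCL10}, is the technical heart of the argument.
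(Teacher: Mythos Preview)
The paper does not prove this statement at all: it is introduced as a \emph{Fact} with the sentence ``In~\cite{Daan,PS17} it is proved that the axioms in Table~\ref{tab:SCL} constitute an equational axiomatization of $se$-congruence'', and no argument is given. So there is no proof in the paper to compare against; the result is imported wholesale from the cited references.

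That said, your sketch matches the strategy actually carried out in those references. Soundness is indeed a routine verification that each axiom has equal $se$-images, using the repeated-replacement identity, and completeness is obtained via a syntactic normal-form class on which $se$ is injective. Your identification of the main difficulty is accurate: since the conditional $y\lef x\rig z$ is not definable from $\neg,\leftand,\leftor$ modulo $=_\se$, the image of $se$ is a proper subset of $\NT$, and the normal forms cannot simply mirror arbitrary evaluation trees. In~\cite{Daan,PS17} this is handled by a layered grammar of normal forms (separating a ``$\tr$-part'' from an ``$\fa$-part'' and then $\ell$-terms built from literals), together with a normalization function and a careful read-back argument showing that the root structure of $se(P')$ determines the outermost constructor of the normal form $P'$. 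Your proposal anticipates exactly this shape of argument; what it does not supply is the concrete grammar for $\mathcal{N}$ and the termination/coverage check for the rewriting into $\mathcal{N}$, which is where the real work lies.
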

This implies that the axioms in Table~\ref{tab:SCL} axiomatize 
free short-circuit logic \FSCL\ (defined in
Section~\ref{sec:Hoare}) for closed terms,
and for this reason this set of axioms is named $\SCLe$.
Some comments on these axioms: \eqref{SCL1}-\eqref{SCL3} imply sequential versions of 
De~Morgan's laws, and thus a sequential variant of
the duality principle.
Axioms~\eqref{SCL4}-\eqref{SCL6} define how the constants \tr\ and \fa\
interact with the sequential connectives, and axiom~\eqref{SCL7} defines
the associativity of $\leftand$. 

Axiom~\eqref{SCL8} defines 
a typical property of a logic that characterizes immunity for side effects: 
although it is the case that for each $P\in\SP$,
the evaluation result of $P\leftand\fa$ is
\false, the evaluation of $P$ might also yield a side effect. 
However, the same side effect and evaluation result
are obtained upon evaluation of $\neg P\leftand\fa$.

Axiom~\eqref{SCL9} expresses another property that concerns
possible side effects: because the 
evaluation result of $P\leftand\fa$ for each possible evaluation  of 
the atoms in $P$ is \false, $Q$ is always evaluated in $(P\leftand\fa)\leftor Q$
and determines the evaluation result, which is also the case in $(P\leftor\tr)\leftand Q$.
Note that the evaluations of $P\leftor\tr$ and $P\leftand\fa$ accumulate the same side effects,
which perhaps is more easily seen if one replaces $Q$ by either \tr\ or \fa.

Axiom~\eqref{SCL10} defines a restricted form of 
right-distributivity of ${\leftor}$ and (by duality) of ${\leftand}$.
This axiom holds because if $x$ evaluates to \true, both sides further evaluate
$y\leftor (z\leftand\fa)$, and if $x$ evaluates to \false,
$z\leftand\fa$ determines the further evaluation result (which is then \false,
and by axiom~\eqref{SCL6},  $y\leftor (z\leftand \fa)$ is not evaluated in the right-hand side).

The dual of $P\in\SP$, notation $P^{\dl}$, is defined as follows (for $a\in A$):
\begin{align*}
\tr^{\dl}&=\fa,
&a^{\dl}&=a,
&(P\leftand Q)^{\dl}&= P^{\dl}\leftor Q^{\dl},\\
\fa^{\dl}&=\tr,
&(\neg P)^{\dl}&=\neg P^{\dl},
&(P\leftor Q)^{\dl}&= P^{\dl}\leftand Q^{\dl}.
\end{align*} 
The duality mapping $(\:)^{\dl}:\SP\to\SP$ is an involution, that is, $(P^{\dl})^{\dl}=P$.
Setting $x^{\dl}=x$ for each variable $x$, the duality principle
extends to equations, e.g.,
the dual of axiom~\eqref{SCL7} is $(x\leftor y)\leftor z = x\leftor (y\leftor z)$.
From~\eqref{SCL1}-\eqref{SCL3} it immediately follows that \SCLe\ satisfies the duality principle, 
that is, for all terms $s,t$ over \SigSCL,
\[\SCLe\vdash s=t\quad\iff\quad\SCLe\vdash s^{\dl}=t^{\dl}.\]

We conclude this section with some more properties of $\SCLe$ that were proved in~\cite{PS17}.
\begin{fact}
\label{thm:cfscl}
Let $\SCLi=\SCLe\setminus\{\eqref{SCL1},\eqref{SCL3}\}$. Then
\[\text{$\SCLi\setminus\{\eqref{SCL8},\eqref{SCL10}\}
\vdash\eqref{SCL1},\eqref{SCL3}$, and thus $\SCLi\vdash\SCLe$,}\]
and the axioms of $\SCLi$ are independent if $A$ contains at least two atoms.
\end{fact}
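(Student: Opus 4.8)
The statement has two parts. For the derivability claim $\SCLi\setminus\{\eqref{SCL1},\eqref{SCL3}\}\vdash\eqref{SCL1},\eqref{SCL3}$, the plan is to first eliminate $\leftor$ entirely. Since \eqref{SCL2} \emph{defines} $x\leftor y$ as $\neg(\neg x\leftand\neg y)$, I would rewrite the only two remaining laws of the fragment that mention $\leftor$, namely \eqref{SCL5} and \eqref{SCL9}, into equations over $\{\neg,\leftand,\tr,\fa\}$ alone. Instantiating \eqref{SCL2} at $y:=\fa$ and combining with \eqref{SCL5} yields the key ``folding'' identity $x=\neg(\neg x\leftand\neg\fa)$, which says that the map $t\mapsto\neg(\neg t\leftand\neg\fa)$ is the identity; rewriting \eqref{SCL9} in the same way gives $\neg(\neg(x\leftand\fa)\leftand\neg y)=\neg(\neg x\leftand\neg\tr)\leftand y$. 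These two identities, together with the constant laws \eqref{SCL4}, \eqref{SCL6} and associativity \eqref{SCL7}, are the only tools available in the fragment.

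First I would pin down the negated constants. Specializing the folding identity and the rewritten form of \eqref{SCL9} at $\tr$ and $\fa$ (using $\tr\leftand\fa=\fa$ from \eqref{SCL4} and $\fa\leftand\fa=\fa$ from \eqref{SCL6}) produces a small system of equations among the unknown elements $\neg\tr$ and $\neg\fa$; the aim is to chain these, routed through associativity \eqref{SCL7}, until $\neg\tr$ collapses to $\fa$, which is \eqref{SCL1}. Re-substituting $\neg x$ for $x$ in the folding identity and applying the congruence rule for $\neg$ then relates $\neg\neg x$ to $x$; reducing the residual term by the constant laws and the just-derived \eqref{SCL1} should close the gap to the involution $\neg\neg x=x$, i.e. \eqref{SCL3}. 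Since $\SCLi\supseteq\SCLi\setminus\{\eqref{SCL8},\eqref{SCL10}\}$, the same derivations go through over $\SCLi$, and because $\SCLi$ already contains \eqref{SCL2},\eqref{SCL4}--\eqref{SCL10}, adjoining the derived \eqref{SCL1},\eqref{SCL3} gives all ten axioms, so $\SCLi\vdash\SCLe$.

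The hard part here is the non-commutativity of $\leftand$: the folding identity constrains $\neg$ only when the doubly-negated term sits on the correct side of a $\leftand$, so one cannot simply cancel a common factor, and every step must be routed through the specific instances of \eqref{SCL9} whose arguments line up. This is precisely the kind of bounded equational search that the cited \emph{Prover9} verification settles; my plan would be to guide it by isolating $\neg\tr=\fa$ as a lemma before attacking the involution.

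For independence of the eight axioms of $\SCLi$, the plan is the standard semantic one: for each axiom $E\in\SCLi$ I would exhibit a finite $\SigSCL$-algebra $M_E$, equipped with at least two distinguished atom-elements, that satisfies every axiom of $\SCLi$ except $E$. For the model separating \eqref{SCL2} one must interpret $\leftor$ as an operation that is \emph{not} the De Morgan combination (so that \eqref{SCL5},\eqref{SCL9},\eqref{SCL10} still hold while \eqref{SCL2} fails), whereas in every other $M_E$ one may simply define $\leftor$ by \eqref{SCL2}. These eight small algebras are exactly what \emph{Mace4} produces, and checking each is a finite verification that I would collect in an appendix, flagging which atoms are substituted to refute the omitted axiom. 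The hypothesis that $A$ has at least two atoms is essential: the witnesses refuting some of these axioms substitute two \emph{distinct} atom-elements for $x$ and $y$, and with only one atom the offending instances collapse, so that those axioms are no longer independent. The main obstacle is thus not conceptual but the model search itself, which is what Mace4 discharges.
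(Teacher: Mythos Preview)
The paper does not prove this statement: it is presented as a \emph{Fact} and explicitly attributed to~\cite{PS17} (``some more properties of $\SCLe$ that were proved in~\cite{PS17}''). There is therefore no proof in the present paper to compare your proposal against. Your plan is, however, fully in the spirit of the paper's methodology---Prover9 for the derivability part and Mace4 for the independence models---and the high-level strategy you sketch (use \eqref{SCL2} and \eqref{SCL5} to obtain the folding identity $x=\neg(\neg x\leftand\neg\fa)$, then combine with instances of \eqref{SCL9} and the constant laws \eqref{SCL4}, \eqref{SCL6}) is a reasonable route to \eqref{SCL1} and \eqref{SCL3}.

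One small slip: in your opening line you write ``$\SCLi\setminus\{\eqref{SCL1},\eqref{SCL3}\}\vdash\eqref{SCL1},\eqref{SCL3}$'', but the subset in the statement is $\SCLi\setminus\{\eqref{SCL8},\eqref{SCL10}\}$. Since $\eqref{SCL1},\eqref{SCL3}\notin\SCLi$ anyway, your version just collapses to $\SCLi\vdash\eqref{SCL1},\eqref{SCL3}$, which is weaker than what is claimed. The rest of your text, though, correctly works in the fragment $\{\eqref{SCL2},\eqref{SCL4},\eqref{SCL5},\eqref{SCL6},\eqref{SCL7},\eqref{SCL9}\}$, so this appears to be a labeling typo rather than a misunderstanding.
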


\section{Evaluation trees and axioms for memorizing short-circuit evaluation}
\label{sec:MSCL}
In this section memorizing evaluation trees and an axiomatization of their equality are introduced,
as well as a congruence that identifies more than $se$-congruence.

\medskip

A short-circuit evaluation is \emph{memorizing} if 
in the evaluation of 
a compound statement the first evaluation result of each atom is memorized.
Typically, the following sequential version of the absorption law holds under memorizing evaluations:
\begin{align}
\label{MSCL1}
\tag{Abs}
x\leftand (x \leftor y)
&=x.
\end{align}
Equation~\eqref{MSCL1} can be explained as follows: 
if $x$ evaluates to \false, then $x\leftand (x \leftor y)$ evaluates to \false\ as a result of the evaluation of the left occurrence of $x$ (and $(x\leftor y)$
is not evaluated);
if $x$ evaluates to \true,  the second evaluation of
$x$ in the subterm $(x\leftor y)$ also results in \true\ (because it is 
memorizing) and therefore $y$ is not evaluated.

A perhaps less obvious property of memorizing evaluations is the following:
\begin{align}
\label{Mem}
\tag{Mem}
(x\leftor y)\leftand z
&=(\neg x\leftand (y\leftand z))\leftor (x\leftand z).
\end{align}
If $x$ evaluates to \true, then $z$ 
determines the evaluation result of both expressions because the evaluation 
result of $x$ is memorized;
if $x$ evaluates to \false, the evaluation result of both expressions is 
determined by $y\leftand z$ because the right disjunct $(x\leftand z)$
also evaluates to \false. 

Below we define the \emph{memorizing evaluation function} as
a transformation on evaluation trees.
This transformation implements the characteristic of memorizing evaluations
starting at the root of an evaluation tree, and removes each second occurrence
of a label $a$ according to its first evaluation result. 
Intuitively, memorizing evaluations are those of propositional logic,
except that the sequential connectives are not commutative. As an example, 
$a\leftand b$ and
$b\leftand a$ represent different evaluations, and hence are not equivalent.
\begin{definition}
\label{def:memse}
The unary 
\textbf{memorizing evaluation function} 
\[\memse:\SP\to \T\]
yields \textbf{memorizing evaluation trees} and is defined by
\begin{align*}
\memse(P)&=\memt(se(P)).
\end{align*}
The auxiliary function $\memt:\T\to\T$ is defined as follows ($a\in A$):
\begin{align*}
\memt(\tr)&=\tr,\\
\memt(\fa)&=\fa,\\
\memt(X\unlhd a\unrhd Y)&=\memt(\Le_a(X))\unlhd a\unrhd\memt(\Ri_a(Y)).
\end{align*}
For $a\in A$, the auxiliary functions $\Le_a: \T\to\T$ (``Left $a$-reduction'')
and $\Ri_a: \T\to\T$ (``Right $a$-reduction'')
are defined by
\begin{align*}
\Le_a(\tr)&=\tr,\\
\Le_a(\fa)&=\fa,\\
\Le_a(X\unlhd b\unrhd Y)&=
\begin{cases}
\Le_a(X)
&\text{if $b=a$},\\
\Le_a(X)\unlhd b\unrhd \Le_a(Y)
&\text{otherwise},
\end{cases}
\end{align*}
and
\begin{align*}
\Ri_a(\tr)&=\tr,\\
\Ri_a(\fa)&=\fa,\\
\Ri_a(X\unlhd b\unrhd Y)&=
\begin{cases}
\Ri_a(Y)
&\text{if }b= a,\\
\Ri_a(X)\unlhd b\unrhd \Ri_a(Y)
&\text{otherwise}.
\end{cases}
\end{align*}
\end{definition}

As an example we depict $se(a\leftand( b\leftand a))$
and the memorizing evaluation tree $\memse(a\leftand( b\leftand a))$:
\[
\begin{array}{ll}
\begin{array}{l}
\begin{tikzpicture}[%
level distance=7.5mm,
level 1/.style={sibling distance=30mm},
level 2/.style={sibling distance=15mm},
level 3/.style={sibling distance=7.5mm}
]
\node (a) {$a$}
  child {node (b1) {$b$}
    child {node (c1) {$a$}
      child {node (d1) {$\tr$}} 
      child {node (d2) {$\fa$}}
    }
    child {node (c2) {$\fa$}
    }
  }
  child {node (b2) {$\fa$}
  };
\end{tikzpicture}
\end{array}
&\qquad
\begin{array}{l}
\qquad
\begin{tikzpicture}[%
level distance=7.5mm,
level 1/.style={sibling distance=30mm},
level 2/.style={sibling distance=15mm},
level 3/.style={sibling distance=7.5mm}
]
\node (a) {$a$}
  child {node (b1) {$b$}
    child {node (c1) {$\tr$}
    }
    child {node (c2) {$\fa$}
    }
  }
  child {node (b2) {$\fa$}
  };
\end{tikzpicture}
\\[8mm]
\end{array}\end{array}
\]

From a more general point of view, a memorizing evaluation tree is a  
\emph{decision tree}, that is
a labeled, rooted, binary tree with internal nodes labeled from 
$A$ and leaves labeled from $\{\tr,\fa\}$ such that
for any path from the root to a leaf, the internal nodes receive distinct labels (cf.~\cite{Moret}). 

Equality of memorizing evaluation trees defines a congruence on \SP.

\begin{definition}
\label{def:MSCL}
\textbf{Memorizing $se$-congruence}, notation $=_\memse^{\cal S}$,
is defined on \SP\ by
\[P=_\memse^{\cal S} Q ~\iff~ \memse(P)=\memse(Q).\]
\end{definition}
The superscript ${\cal S}$ in $=_\memse^{\cal S}$ is used as a reference to \SP\
because later on we will consider a close variant of this congruence.
In Section~\ref{sec:Completeness} we argue why $=_\memse^{\cal S}$ is a congruence.
Memorizing $se$-congruence identifies much more
than $se$-congruence, but not as much as propositional logic, e.g.,
${\leftand}$ and ${\leftor}$ are not commutative: $\fa\leftand a\ne_\memse^{\cal S} a\leftand\fa$.

\medskip

In Table~\ref{tab:MSCL} we present
a set of equational axioms for $=_\memse^{\cal S}$ and we call this set \MSCLe\ (this is a simplified
version of \MSCLe\ as introduced in~\cite{BPS13,BP12a}). One of our main results, proved in 
Section~\ref{sec:Completeness}, is the following:
\[
\tag{\text{Thm~\ref{thm:mscl2}}}
\text{For all $P,Q\in\SP$, $\MSCLe\vdash P=Q~\iff~ P=_\memse^{\cal S} Q$.}
\]
To enhance readability, we renamed the \SCLe-axioms used: 
$\eqref{SCL1}\to\eqref{Neg}$, $\eqref{SCL2}\to\eqref{Or}$, and $\eqref{SCL4}\to\eqref{Tand}$.
  
\begin{table}
\centering
\hrule
\begin{align}
\label{Neg}
\tag{Neg}
\fa&=\neg\tr
\\
\label{Or}
\tag{Or}
x\leftor y&=\neg(\neg x\leftand\neg y)
\\ 
\label{Tand}
\tag{Tand}
\tr\leftand x&=x
\\ 
\tag{\ref{MSCL1}}
x\leftand(x\leftor y)&=x\\
\label{MSCL3}
\tag{\ref{Mem}}
(x\leftor y)\leftand z
&=(\neg x\leftand (y\leftand z))\leftor (x\leftand z)
\end{align}
\hrule
\caption{$\MSCLe$, a set of axioms for memorizing $se$-congruence}
\label{tab:MSCL}
\end{table}

\begin{theorem}
\label{thm:Indep}
The axioms of $\MSCLe$ are independent.
\end{theorem}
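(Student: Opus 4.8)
The plan is to establish independence by the standard model-theoretic method for equational logic. By Birkhoff's completeness theorem an equation is derivable from a set of equations $E$ if and only if it holds in every algebra satisfying $E$; hence an axiom $\phi\in\MSCLe$ is not derivable from $\MSCLe\setminus\{\phi\}$ exactly when there is an algebra over the signature $\{\tr,\fa,\neg,\leftand,\leftor\}$ that validates the four equations in $\MSCLe\setminus\{\phi\}$ (read as universally quantified) but refutes $\phi$ under some assignment of its variables. Independence of $\MSCLe$ thus reduces to exhibiting five such algebras, one per axiom. Since the axioms contain no atoms, the atoms play no role and it suffices to interpret the connectives and constants on a finite carrier; soundness (the ``if'' direction of Birkhoff) is immediate, so the only real task is the construction and verification of the models.

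A useful simplification narrows the search for four of the five cases. For the independence of \eqref{Neg}, \eqref{Tand}, \eqref{MSCL1} and \eqref{Mem} one may interpret $\leftor$ by the term $\neg(\neg x\leftand\neg y)$, so that \eqref{Or} holds by construction; it then remains only to specify $\neg$, $\leftand$, $\tr$, $\fa$ on the carrier and to check the retained axioms while refuting the target. Only for the independence of \eqref{Or} itself must $\leftor$ be given an interpretation genuinely different from $\neg(\neg x\leftand\neg y)$. Following the note in the introduction, the five witnessing algebras will be small finite models located by \emph{Mace4}, and the routine finite verifications that each validates its four retained axioms, together with the concrete assignments refuting the fifth, are carried out with \emph{Prover9}/\emph{Mace4}~\cite{BirdBrain} and recorded in the appendices. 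The body of the proof will then consist of displaying the operation tables of these models together with the falsifying assignment in each case.

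The main obstacle is not the method but the construction of the models, and the difficulty is concentrated in the two structurally rich axioms \eqref{MSCL1} and \eqref{Mem}, which turn out to be highly rigid. Indeed, already on the two-element carrier $\{0,1\}$ with $\tr=1$, $\fa=0$, $\neg$ the swap and $\leftand$ the minimum, one checks that \eqref{MSCL1} and \eqref{Mem} together force $\leftor$ to be the maximum, whence \eqref{Or} holds as well; so no two-element model of this standard shape can separate these axioms, and the witnessing algebras must use a larger carrier or nonstandard base operations. The delicate cases are therefore the independence of \eqref{MSCL1} and of \eqref{Mem}: one must break the target axiom while still validating the other member of the pair, so that the candidate algebra lies just outside the variety defined by the full set yet inside the variety defined by the rest, and this is precisely where an automated finite-model search pays off. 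The remaining three cases are comparatively easy, since dropping \eqref{Neg} or \eqref{Tand} frees the behaviour of $\neg$ or the unit law for $\tr$, and dropping \eqref{Or} frees the interpretation of $\leftor$.
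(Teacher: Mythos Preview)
Your approach is correct: exhibiting, for each axiom $\phi\in\MSCLe$, a finite algebra satisfying $\MSCLe\setminus\{\phi\}$ but refuting $\phi$ is exactly what is needed, and your observation that atoms play no role (since the axioms are purely in the variables and constants) is sound. The simplification of defining $\leftor$ via De~Morgan to guarantee \eqref{Or} in four of the five cases is also legitimate.

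The paper, however, does not carry out this search for $\MSCLe$ directly. Instead it defers to Theorem~\ref{thm:Indep2}, which establishes independence of the larger axiom set $\SSCLe=\MSCLe\cup\{\eqref{C1}\}$. The point is that if $\phi\in\MSCLe$ and some algebra validates $\SSCLe\setminus\{\phi\}$ while refuting $\phi$, then a fortiori it validates the subset $\MSCLe\setminus\{\phi\}$ while refuting $\phi$; hence independence of the superset yields independence of $\MSCLe$ for free. The concrete models (found with \emph{Mace4}) are displayed only once, in Appendix~\ref{app:Indep2}, and serve both theorems simultaneously.

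What each route buys: the paper's organization is more economical---one battery of models for two results---at the cost of a slightly harder search, since each countermodel must additionally satisfy commutativity \eqref{C1}. Your direct route would in principle allow simpler (possibly non-commutative) countermodels for the five $\MSCLe$ cases, but you would then have to repeat the exercise for $\SSCLe$ later. Neither is wrong; the paper simply exploits the inclusion $\MSCLe\subset\SSCLe$ to avoid duplication.
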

\begin{proof} 
By Theorem~\ref{thm:Indep2} (that states that a superset of 
 $\MSCLe$ is independent).
\end{proof}
The next theorem states that the \SCLe-axioms are derivable from \MSCLe,
and  hence implies that $se$-congruence is subsumed by memorizing $se$-congruence.

\begin{theorem}
\label{thm:extraM}
$\MSCLe\vdash\SCLe$.
\end{theorem}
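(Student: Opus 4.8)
The plan is to derive the ten \SCLe-axioms one at a time inside the equational theory \MSCLe. Three of them cost nothing: \eqref{SCL1}, \eqref{SCL2} and \eqref{SCL4} are literally the axioms \eqref{Neg}, \eqref{Or} and \eqref{Tand} (this is exactly why those \SCLe-axioms were renamed), so only \eqref{SCL3} and \eqref{SCL5}--\eqref{SCL10} require derivation. Note first that we cannot short-circuit the argument by invoking the completeness statement Theorem~\ref{thm:mscl2}: that result is proved only in Section~\ref{sec:Completeness} and its proof relies on the present inclusion, so the derivation must be purely syntactic.

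The first substantive target is the involution law \eqref{SCL3}, $\neg\neg x=x$, because it is the key that unlocks everything else. A convenient warm-up is $\tr\leftor x=\tr$: instantiating \eqref{MSCL1} at $x:=\tr$ gives $\tr\leftand(\tr\leftor y)=\tr$, and \eqref{Tand} collapses the left-hand side to $\tr\leftor y$, so $\tr\leftor y=\tr$. Together with the idempotence and right-identity laws $x\leftand x=x$ and $x\leftand\tr=x$ (themselves early consequences of \eqref{MSCL1}, \eqref{Tand} and \eqref{Mem}), this lets us evaluate $\neg$ on the constants and then bootstrap $\neg\neg x=x$. Pinning down exactly which instances of \eqref{MSCL1} and \eqref{Mem} must be combined, and in which order, to get the constant laws and \eqref{SCL3} off the ground is the most delicate point of the whole proof; once it is obtained, the rest is comparatively mechanical.

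Once \eqref{SCL3} is available, \eqref{Neg} and \eqref{Or} yield the sequential De~Morgan laws and, more importantly, the full duality principle for \MSCLe: one checks that \MSCLe\ proves the duals of each of its own axioms (the duals of \eqref{Tand}, \eqref{MSCL1} and \eqref{Mem} all follow from the primal axioms together with \eqref{SCL3} and De~Morgan). With duality in hand the remaining work roughly halves: \eqref{SCL6}, $\fa\leftand x=\fa$, is exactly the dual of the warm-up identity $\tr\leftor x=\tr$, and \eqref{SCL5}, $x\leftor\fa=x$, is the dual of the right-identity law $x\leftand\tr=x$.

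What remains is to derive the associativity law \eqref{SCL7} and the three ``side-effect/distributivity'' laws \eqref{SCL8}, \eqref{SCL9} and \eqref{SCL10}, all in their primal form (the self-dual shape of \eqref{SCL9} and \eqref{SCL10} means each of these needs to be proved only once). Here the workhorse is \eqref{Mem}, the only axiom that genuinely rearranges a term: the strategy is to use \eqref{Mem} to expand the $\leftor$-on-the-left patterns, simplify the resulting $\neg x\leftand(\cdots)$ and $x\leftand(\cdots)$ summands with the constant laws, \eqref{SCL3} and associativity, and then re-assemble via \eqref{MSCL1}. I expect \eqref{SCL10} to be the hardest, since it is a restricted right-distributivity law whose derivation seems to need several nested applications of \eqref{Mem} and \eqref{MSCL1}, while \eqref{SCL7} is laborious because associativity must be bootstrapped rather than assumed. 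As the paper notes, all of these derivations were independently verified with \emph{Prover9}, so the role of the proof is to exhibit a workable order of lemmas --- basic constant laws, then \eqref{SCL3} and duality, then \eqref{SCL7}, and finally \eqref{SCL8}--\eqref{SCL10} --- rather than to discover the identities.
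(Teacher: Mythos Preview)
Your high-level plan---prove \eqref{SCL3} first via the warm-up $\tr\leftor x=\tr$, establish duality, and then work through the remaining \SCLe-axioms---matches the paper's strategy, and your derivation of $\tr\leftor x=\tr$ is exactly the one in Appendix~\ref{app:mscl}. However, two points of your sketch diverge from the paper in ways that matter.

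First, the order you propose for the remaining axioms (``then \eqref{SCL7}, and finally \eqref{SCL8}--\eqref{SCL10}'') is the reverse of what the paper does, and this is not cosmetic. In Appendix~\ref{app:mscl} the derivations of \eqref{SCL8} and \eqref{SCL9} come \emph{before} associativity and are used repeatedly as tools in the proof of \eqref{SCL7}: the key auxiliary identities \eqref{ar 2}, \eqref{ar 5}, \eqref{62}, \eqref{above 1} all invoke \eqref{SCL8} or \eqref{SCL9}. So your plan to bootstrap associativity first and only afterwards turn to \eqref{SCL8}--\eqref{SCL10} would require a genuinely different (and, as far as the paper knows, undiscovered) route to \eqref{SCL7}. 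Relatedly, your difficulty assessment is inverted: the paper singles out associativity, not \eqref{SCL10}, as the hard case (see the ``Future work'' remark in Section~\ref{sec:Conc}); \eqref{SCL10} falls out in a few lines once \eqref{M1} and \eqref{M2} are available.

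Second, your claim that idempotence $x\leftand x=x$ and the right-identity $x\leftand\tr=x$ are ``early consequences'' available \emph{before} \eqref{SCL3} is not substantiated, and the paper does not proceed that way: it derives \eqref{SCL3} without either of these, then obtains \eqref{SCL5} (using $\neg\fa=\tr$, which needs \eqref{SCL3}), and only then gets idempotence. Your sketch of the \eqref{SCL3} argument therefore does not quite line up with a working derivation; the paper's route goes through $\neg(\fa\leftand x)=\tr$ instead.
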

\begin{proof}
With help of the theorem prover \emph{Prover9}, see Appendix~\ref{app:mscl}.
\end{proof} 

In the proof of Theorem~\ref{thm:extraM}, the \SCLe-axioms are derived in a particular order,
as to obtain useful intermediate results.
Axiom~\eqref{SCL3}, that is $\neg\neg x=x$, is derived first, which justifies the use of 
the duality principle in subsequent derivations.
For easy reference we mention here some particular results 
that are used in Section~\ref{sec:Completeness}.

\begin{fact}
\label{fact:mscl}
The following equations are derivable from \MSCLe\ (and proved in Appendix~\ref{app:mscl}).

\noindent
Axioms~\eqref{MSCL1} and~\eqref{SCL5}, that is $x\leftor\fa=x$ (easy to derive), 
imply \textbf{idempotence} of ${\leftand}$:
\[
x=x\leftand(x\leftor\fa)=x\leftand x.\]
Two auxiliary results (with simple proofs) that are repeatedly used 
are the following:
\begin{align*}
\tag{\ref{ar 2}}
&x\leftand y=(x\leftand\fa)\leftor (x\leftand y),\\
\tag{\ref{ar 1}}
&x\leftor y=(\neg x\leftand y)\leftor x.
\end{align*}
The following two intermediate results are used in the derivation of axiom~\eqref{SCL10}
and in Section~\ref{sec:Hoare}
(note that with memorizing evaluations, these terms all express 
``\textup{\texttt{if $x$ then $y$ else $z$}}''):
\begin{align}
\tag{\ref{M1}}
&(x\leftand y)\leftor(\neg x\leftand z)
=(\neg x\leftor y)\leftand(x\leftor z),
\\
\tag{\ref{M2}}
&(x\leftand y)\leftor(\neg x\leftand z)=(\neg x\leftand z)\leftor(x\leftand y).
\\\nonumber
\end{align}
\end{fact}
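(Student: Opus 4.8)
The decisive enabling fact is Theorem~\ref{thm:extraM}: since $\MSCLe\vdash\SCLe$, in every derivation below I may freely use all ten \SCLe-axioms, and—because \eqref{SCL3}, that is $\neg\neg x=x$, is among the derived equations—I may invoke the duality principle. So I have available not only \eqref{SCL4}--\eqref{SCL6} and \eqref{SCL9}, but also their duals, in particular $\fa\leftor x=x$, the dual absorption $x\leftor(x\leftand y)=x$ (dual of \eqref{MSCL1}), and the dual of \eqref{Mem}. With this in hand, idempotence is immediate exactly as displayed: instantiating \eqref{MSCL1} with $y:=\fa$ gives $x=x\leftand(x\leftor\fa)$, and \eqref{SCL5} rewrites $x\leftor\fa$ to $x$, yielding $x=x\leftand x$.

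For the two auxiliary identities the plan is to rewrite the connective-heavy side and push it toward the simple side, using \eqref{Mem} and its dual to force a case split on $x$. For \eqref{ar 2}, $x\leftand y=(x\leftand\fa)\leftor(x\leftand y)$, I would expand the right-hand side by the dual of \eqref{Mem}, collapse the inner $\fa\leftor(x\leftand y)$ to $x\leftand y$ and the factor $x\leftor(x\leftand y)$ to $x$ (dual absorption), and finish by recognising the remaining term as an instance of \eqref{ar 1} read from right to left. Identity \eqref{ar 1}, $x\leftor y=(\neg x\leftand y)\leftor x$, I would obtain more economically by proving its dual $x\leftand y=(\neg x\leftor y)\leftand x$ directly from \eqref{Mem} (applied with first argument $\neg x\leftor y$), simplifying $\neg\neg x$ by \eqref{SCL3}, and then invoking duality. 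Because \eqref{ar 1} and \eqref{ar 2} feed into one another in this way, part of the work is simply to fix a derivation order so that no step is circular.

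The two if-then-else identities are the substantive ones. For \eqref{M1}, both sides compute ``if $x$ then $y$ else $z$'', and I would prove it by bringing them to a common normal form: expand the left-hand side $(x\leftand y)\leftor(\neg x\leftand z)$ with the dual of \eqref{Mem} and the right-hand side $(\neg x\leftor y)\leftand(x\leftor z)$ with \eqref{Mem}, then collapse both using \eqref{Tand}, $\fa\leftor x=x$, idempotence, and the auxiliaries \eqref{ar 1}--\eqref{ar 2}. Given \eqref{M1}, I would derive \eqref{M2} by applying \eqref{M1} once with $x$ and once with $\neg x$ (using \eqref{SCL3} to turn $\neg\neg x$ into $x$): the first instance rewrites the left-hand side of \eqref{M2} to $(\neg x\leftor y)\leftand(x\leftor z)$ and the second rewrites the right-hand side to $(x\leftor z)\leftand(\neg x\leftor y)$, so that \eqref{M2} reduces to the guarded commutativity of these two particular conjuncts, which one discharges by expanding both orientations to the same if-then-else form.

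The main obstacle I anticipate is precisely the point where the memorizing character must be exploited. Because $\leftand$ and $\leftor$ are \emph{not} commutative here, the routine Boolean shuffling is unavailable, so every collapse has to be engineered through \eqref{Mem} (or its dual) to produce a case split, and two memorizing facts do the real work: the ``contradiction'' law $\neg x\leftand x=x\leftand\fa$ (valid under memorizing evaluation but false in \FSCL), which makes a repeated occurrence of $x$ disappear, and the characteristic \MSCL-identity $x\leftand(y\leftand x)=x\leftand y$. The heaviest bookkeeping is in normalising \eqref{M1} to a single canonical if-then-else form and then reconciling the two orientations for \eqref{M2} without ever silently commuting a connective. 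Since all of these are purely equational manipulations, each step is mechanically checkable, consistent with the \emph{Prover9} verification recorded in Appendix~\ref{app:mscl}.
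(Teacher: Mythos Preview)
Your plan for idempotence matches the paper exactly, and your overall strategy of leaning on Theorem~\ref{thm:extraM} to import all of \SCLe\ (and hence duality) is sound. But two points deserve comment.

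For \eqref{ar 2} and \eqref{ar 1} the paper's derivations are far more direct than what you propose: rather than expanding the complicated side with the dual of \eqref{Mem} and then threading through the other auxiliary, the paper simply pads the \emph{simple} side with a constant and applies \eqref{Mem} once. Concretely, $x\leftor y=(x\leftor y)\leftand\tr\stackrel{\eqref{Mem}}{=}(\neg x\leftand(y\leftand\tr))\leftor(x\leftand\tr)=(\neg x\leftand y)\leftor x$ gives \eqref{ar 1} in one shot, and $x\leftand y=(x\leftor\fa)\leftand y\stackrel{\eqref{Mem}}{=}(\neg x\leftand(\fa\leftand y))\leftor(x\leftand y)=(x\leftand\fa)\leftor(x\leftand y)$ (using \eqref{SCL6} and \eqref{SCL8}) gives \eqref{ar 2}. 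No mutual dependence, no ordering issue.

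More seriously, your proposed discharge of \eqref{M2} is circular. You correctly reduce \eqref{M2} via two applications of \eqref{M1} to the guarded commutativity $(\neg x\leftor y)\leftand(x\leftor z)=(x\leftor z)\leftand(\neg x\leftor y)$. But if you then ``expand both orientations'' with \eqref{Mem}, the left side collapses (via $x\leftand(y\leftand(x\leftor z))=x\leftand y$ and $\neg x\leftand(x\leftor z)=\neg x\leftand z$) to $(x\leftand y)\leftor(\neg x\leftand z)$, while the right side collapses symmetrically to $(\neg x\leftand z)\leftor(x\leftand y)$: you are back to \eqref{M2} itself. The paper breaks this loop with an extra, non-obvious auxiliary, namely $(x\leftor y)\leftand z=(x\leftand z)\leftor((x\leftor y)\leftand z)$ (equation~\eqref{3381} in Appendix~\ref{app:mscl}), together with $x\leftand(y\leftand x)=x\leftand y$ (\eqref{ar 6}); these let one rewrite $(\neg x\leftor y)\leftand(x\leftor z)$ asymmetrically and reach the swapped disjunction without presupposing it. Your sketch is missing precisely this ingredient.
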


A typical \MSCLe-consequence is $x\leftand (y\leftand x)=x\leftand y$ (cf.\ the last
\label{typical}
example on memorizing evaluation trees).
First derive 
\begin{equation}
\label{verbeterd}
\neg x\leftand \fa\stackrel{\eqref{SCL8}}=x\leftand \fa\stackrel{\eqref{ar 1}'}=(\neg x\leftor \fa)\leftand x\stackrel{\eqref{SCL5}}=
\neg x\leftand x,
\end{equation}
where \eqref{ar 1}$'$ is the dual of~\eqref{ar 1}.
Hence,
\begin{align*}
x\leftand y
&=(\neg x\leftor y)\leftand x
&&\text{by~\eqref{ar 1}$'$}\\
&=(x\leftand (y\leftand x))\leftor(\neg x\leftand x)
&&\text{by~\eqref{Mem}, \eqref{SCL3}}\\
&=(x\leftand (y\leftand x))\leftor(\neg x\leftand \fa)
&&\text{by~\eqref{verbeterd}}\\
&=(\neg x\leftand \fa)\leftor (x\leftand(y\leftand x))
&&\text{by~\eqref{M2}}\\
&=x\leftand(y\leftand x).
&&\text{by~\eqref{SCL8}, \eqref{ar 2}}
\end{align*}

Another convenient result on \MSCLe, used in Section~\ref{sec:Completeness}, is the following.

\begin{theorem}
\label{thm:dist}
The following equations are derivable from \MSCLe, where~\eqref{D1} abbreviates left-distributivity
of $\leftand$.
\begin{align}
\label{M3}
\tag{M3}
((x\leftand y)\leftor(\neg x\leftand z))\leftand u
&=(x\leftand (y\leftand u))\leftor(\neg x\leftand (z\leftand u)),
\\
\label{D1}
\tag{LD}
x\leftand(y\leftor z)
&=(x\leftand y)\leftor(x\leftand z).
\end{align}
\end{theorem}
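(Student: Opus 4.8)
The plan is to use Theorem~\ref{thm:extraM} throughout, so that all \SCLe-axioms are at our disposal together with the \emph{duality principle} (justified by~\eqref{SCL3}), as well as idempotence of $\leftand$, the auxiliary laws~\eqref{ar 1},~\eqref{ar 2},~\eqref{M1},~\eqref{M2}, the identity $x\leftand(y\leftand x)=x\leftand y$, and the contradiction law $\neg x\leftand x=x\leftand\fa$ from~\eqref{verbeterd}. By duality I may also freely use that $\leftor$ is associative and idempotent and that $x\leftor(y\leftor x)=x\leftor y$. I would prove~\eqref{D1} first (via its dual) and then derive~\eqref{M3} from~\eqref{D1},~\eqref{M1},~\eqref{M2} and~\eqref{Mem}.

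For~\eqref{D1} I would instead prove its dual $x\leftor(y\leftand z)=(x\leftor y)\leftand(x\leftor z)$ and then invoke duality. Applying~\eqref{Mem} to the right-hand side gives $(\neg x\leftand(y\leftand(x\leftor z)))\leftor(x\leftand(x\leftor z))$; the second disjunct collapses to $x$ by~\eqref{MSCL1}, and folding $x$ back in with~\eqref{ar 1} reduces the whole claim to the single \emph{collapse identity}
\[
\neg x\leftand(y\leftand(x\leftor z))=\neg x\leftand(y\leftand z).
\]
This I would establish by a short ``conjugation'': rewrite the inner $y\leftand(x\leftor z)$ with the dual of~\eqref{ar 1} as $(\neg y\leftor(x\leftor z))\leftand y$ and reassociate by~\eqref{SCL7}, so that the guard $\neg x$ now meets the \emph{purely disjunctive} term $\neg y\leftor(x\leftor z)$; there the spurious occurrence of $x$ is removed using $\leftor$-associativity together with idempotence and $x\leftor(y\leftor x)=x\leftor y$, yielding $\neg y\leftor z$. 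Reassembling and using the absorption $(\neg y\leftor z)\leftand y=y\leftand z$ (which follows from~\eqref{Mem}, the identity $x\leftand(y\leftand x)=x\leftand y$,~\eqref{SCL8} and~\eqref{ar 2}) gives exactly $\neg x\leftand(y\leftand z)$.

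For~\eqref{M3} I would first bring the conditional $(x\leftand y)\leftor(\neg x\leftand z)$ into its conjunctive form $(\neg x\leftor y)\leftand(x\leftor z)$ by~\eqref{M1}, reassociate by~\eqref{SCL7}, apply~\eqref{Mem} to $(x\leftor z)\leftand u$ and reorder with~\eqref{M2}, obtaining $(\neg x\leftor y)\leftand\bigl((x\leftand u)\leftor(\neg x\leftand(z\leftand u))\bigr)$. Now~\eqref{D1} distributes the leading factor over the two summands. The \emph{selected} summand $(\neg x\leftor y)\leftand(x\leftand u)$ simplifies by~\eqref{SCL7} and the absorption $(\neg x\leftor y)\leftand x=x\leftand y$ to $x\leftand(y\leftand u)$, matching the first half of~\eqref{M3}. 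The \emph{trace} summand $(\neg x\leftor y)\leftand(\neg x\leftand(z\leftand u))$ reduces, via~\eqref{Mem} and the contradiction-collapse $x\leftand(y\leftand(\neg x\leftand V))=x\leftand(y\leftand\fa)$, to $(x\leftand(y\leftand\fa))\leftor(\neg x\leftand(z\leftand u))$. Reassociating the resulting $\leftor$ and folding the two $x\leftand(y\leftand{-})$ summands back with~\eqref{D1} in reverse and $u\leftor\fa=u$ (\eqref{SCL5}) leaves precisely $(x\leftand(y\leftand u))\leftor(\neg x\leftand(z\leftand u))$, as required.

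The main obstacle is the \emph{memorizing-collapse} identities — the one displayed above and the contradiction-collapse $x\leftand(y\leftand(\neg x\leftand V))=x\leftand(y\leftand\fa)$. Their subtlety is that the ``obvious'' branchwise simplifications are genuinely \emph{false} in isolation: in the trace summand the guard $\neg x$ is already decided, yet evaluating the other operand still \emph{records} a visit to its atoms, so one cannot simply delete it (this is exactly where non-commutativity and the absence of atomic side effects interact). What makes the collapses go through is the conjugation by the dual of~\eqref{ar 1}, which moves the offending duplicate atom into a purely disjunctive (dually, purely conjunctive) context, where $\leftor$-idempotence and $x\leftor(y\leftor x)=x\leftor y$ (dually $\leftand$-idempotence and $x\leftand(y\leftand x)=x\leftand y$) eliminate it; the duality principle then halves the remaining work. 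Everything else is bookkeeping with~\eqref{SCL7}, De~Morgan and the constant axioms.
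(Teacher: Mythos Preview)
Your overall architecture is sensible, and your reduction of \eqref{D1} to the collapse identity
\[
\neg x\leftand(y\leftand(x\leftor z))=\neg x\leftand(y\leftand z)
\]
is exactly the right target (it is the dual of the paper's equation~(\ref*{1660})). The gap is in how you propose to discharge it. After your conjugation you face $\neg x\leftand(\neg y\leftor(x\leftor z))$, and you claim the buried $x$ can be removed ``using $\leftor$-associativity together with idempotence and $x\leftor(y\leftor x)=x\leftor y$''. These laws act \emph{purely inside a disjunctive chain} and can only delete a \emph{repeated} disjunct; the chain $\neg y\leftor x\leftor z$ contains a single $x$, so nothing happens. What is actually required is an interaction with the outer guard: first use~\eqref{ar 4}$'$ (i.e.\ $\neg x\leftand w=\neg x\leftand(x\leftor w)$) to insert a leading $x$ into the disjunction, then use the identity $x\leftor(y\leftor(x\leftor z))=x\leftor(y\leftor z)$ (the paper's~(\ref*{94}), which follows from associativity together with $(x\leftor y)\leftor x=x\leftor y$, i.e.~\eqref{ar 7}$'$) to kill the now-duplicated $x$, and finally~\eqref{ar 4}$'$ again to remove the inserted copy. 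Your ``contradiction-collapse'' $x\leftand(y\leftand(\neg x\leftand V))=x\leftand(y\leftand\fa)$ has the same defect and needs the dual of~(\ref*{94}) plus~\eqref{ar 5}; the conjugation by~\eqref{ar 1}$'$ does not by itself put the offending atom into a context where idempotence alone removes it.

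For comparison, the paper avoids this detour entirely for~\eqref{M3}. It first rewrites via~\eqref{M1} and associativity to $(\neg x\leftor y)\leftand((x\leftor z)\leftand u)$ and then applies~\eqref{Mem} \emph{once} at the top, producing the two summands $x\leftand(y\leftand((x\leftor z)\leftand u))$ and $\neg x\leftand((x\leftor z)\leftand u)$. Both simplify directly: the first by the same-polarity collapse $x\leftand(y\leftand(x\leftor z))=x\leftand y$ (the dual of~(\ref*{6048})), the second by~\eqref{ar 4}$'$. No use of~\eqref{D1}, no~\eqref{M2}, no contradiction-collapse. For~\eqref{D1} the paper proceeds independently via the opposite-polarity collapse~(\ref*{1660}) and the dual of~\eqref{Mem}. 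So the paper proves~\eqref{M3} before~\eqref{D1}, each with its own small collapse lemma, whereas your route couples them and needs both collapses in a way your cited laws do not deliver.
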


\begin{proof} 
With help of the theorem prover \emph{Prover9}, see Appendix~\ref{app:LD}.
\end{proof}

We end this section by mentioning two alternatives for \MSCLe.
\begin{proposition}
\label{prop:X}
Replacing axiom~\eqref{Mem} in \MSCLe\ by~\eqref{M1}, and either~\eqref{M3} or
\[((x\leftand y)\leftor(\neg x\leftand z))\leftand u
=(\neg x\leftand (z\leftand u))\leftor(x\leftand (y\leftand u))\]
(thus, \eqref{M3}'s commutative variant) constitutes an alternative for \MSCLe.
\end{proposition}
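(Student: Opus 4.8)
The plan is to prove that the two axiom systems generate the same equational theory by establishing mutual derivability over their shared core \eqref{Neg}, \eqref{Or}, \eqref{Tand}, \eqref{MSCL1}. Writing $\axname{Alt}$ for the proposed set, obtained from \MSCLe\ by replacing \eqref{Mem} with \eqref{M1} together with \eqref{M3}, the task splits into two directions: (a) $\MSCLe$ proves \eqref{M1} and \eqref{M3}, and (b) $\axname{Alt}$ proves \eqref{Mem}. The variant in which \eqref{M3} is replaced by its commutative form is treated along the same lines.

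Direction (a) requires nothing new: \eqref{M1} is listed in Fact~\ref{fact:mscl} and \eqref{M3} is part of Theorem~\ref{thm:dist}, so both are already \MSCLe-theorems. The commutative variant of \eqref{M3} differs from \eqref{M3} only by interchanging the two disjuncts on its right-hand side, which is precisely an instance of \eqref{M2} (also in Fact~\ref{fact:mscl}); hence \MSCLe\ proves it as well. Consequently every axiom of $\axname{Alt}$, in either form, is an \MSCLe-theorem, so that $\axname{Alt}\vdash s=t$ implies $\MSCLe\vdash s=t$ for all terms $s,t$.

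The real content is direction (b). Reading $(x\leftand a)\leftor(\neg x\leftand b)$ as ``if $x$ then $a$ else $b$'', axiom \eqref{M1} identifies such a conditional with its dual form $(\neg x\leftor a)\leftand(x\leftor b)$, while \eqref{M3} lets one distribute a trailing $\leftand u$ into both branches. The plan is to first rewrite $x\leftor y$ into the conditional shape $(x\leftand\tr)\leftor(\neg x\leftand y)$, then apply \eqref{M3} with $u:=z$ to reach $(x\leftand z)\leftor(\neg x\leftand(y\leftand z))$, and finally swap the two disjuncts to obtain the right-hand side $(\neg x\leftand(y\leftand z))\leftor(x\leftand z)$ of \eqref{Mem}. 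When \eqref{M3} is replaced by its commutative variant this last swap is produced automatically, which is why either form serves equally well.

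The step I expect to be the main obstacle is that the auxiliary laws used freely in the rewriting above --- in particular $\neg\neg x=x$ (\eqref{SCL3}), the unit and zero laws \eqref{SCL5} and \eqref{SCL6}, associativity \eqref{SCL7}, and the disjunct-swap \eqref{M2} --- were originally obtained within \MSCLe\ by derivations that pass through \eqref{Mem} (cf.\ Theorem~\ref{thm:extraM}). For direction (b) they must instead be re-established from $\axname{Alt}$, i.e.\ from the core axioms together with \eqref{M1} and \eqref{M3} (resp.\ its commutative variant), with no appeal to \eqref{Mem}; here \eqref{M1} does the essential work, since it is the dual-conditional relation from which $\neg\neg x=x$ and the swap \eqref{M2} can be recovered. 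Once this infrastructure is available from $\axname{Alt}$, the three-step rewriting goes through and yields \eqref{Mem}. Finding these infrastructure derivations and assembling them into a derivation of \eqref{Mem} that uses only $\axname{Alt}$ is the laborious part, and is exactly the kind of equational search discharged with \emph{Prover9} in the appendix.
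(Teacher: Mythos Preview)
Your proposal is correct and matches the paper's approach. The paper likewise treats direction~(a) as immediate from Fact~\ref{fact:mscl} and Theorem~\ref{thm:dist}, and for direction~(b) states that \emph{Prover9} yields simple derivations of \eqref{ar 2}, \eqref{ar 1}, \eqref{M2}, and then \eqref{Mem} from the alternative axiom set; your three-step rewriting of $(x\leftor y)\leftand z$ via the conditional form, an application of \eqref{M3}, and a final swap by \eqref{M2} is a concrete instantiation of exactly that route, and your identification of the infrastructure (\eqref{SCL3}, the unit/zero laws, \eqref{M2}) as the part requiring automated search mirrors the paper's deferral to \emph{Prover9}.
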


Both these sets of axioms are independent (by~\emph{Mace4}~\cite{BirdBrain}). 
With~\emph{Prover9}~\cite{BirdBrain}, derivations of~\eqref{ar 2}, \eqref{ar 1}, 
\eqref{M2},
and \eqref{Mem}, respectively, are simple. 
In contrast to the proof of Theorem~\ref{thm:extraM}, a
derivation of the associativity of $\leftand$ (axiom~\eqref{SCL7}) 
is so simple that we show it here:
\begin{align*}
(x\leftand y)\leftand z
&=((\neg x\leftand\fa)\leftor(x\leftand y))\leftand z
&&\text{by~\eqref{ar 2}, \eqref{SCL8}}\\
&=(\neg x\leftand (\fa\leftand z))\leftor (x\leftand(y\leftand z))
&&\text{by~\eqref{M3}, \eqref{SCL3}}\\
&=((x\leftand\fa)\leftor (x\leftand(y\leftand z)))
&&\text{by~\eqref{SCL6}, \eqref{SCL8}}\\
&=x\leftand(y\leftand z).
&&\text{by~\eqref{ar 2}}
\end{align*}

\section{The conditional connective and three short-circuit logics}
\label{sec:Hoare}
In this section we consider Hoare's \emph{conditional}, a ternary connective that can be used for 
defining the sequential connectives of $\SigSCL=\{\leftand,\leftor,\neg,\tr,\fa,a\mid a\in A\}$.
Then we recall the definitions of free short-circuit logic (\FSCL),  
memorizing short-circuit logic (\MSCL), and static short-circuit logic (\SSCL)
that were published earlier.

\medskip
In 1985, Hoare introduced the \emph{conditional} (\cite{Hoa85}), a
ternary connective with notation
\[x\lef y \rig z.\]
A more common expression for the conditional $x\lef y\rig z$
is ``$\texttt{if }y\texttt{ then }x\texttt{ else }z$'',
which emphasizes that $y$ is evaluated \emph{first}, and depending
on the outcome of this partial evaluation, either $x$ or $z$ is evaluated,
which then determines the evaluation result. So, the evaluation strategy
prescribed by this form of if-then-else is a prime example of a sequential
evaluation strategy.
In order to reason algebraically with conditional expressions, 
Hoare's `operator like' notation $x\lef y\rig z$ seems indispensable.
In~\cite{Hoa85} an equational axiomatization of 
propositional logic is provided that only uses the conditional. Furthermore it is 
described how the sequential connectives and negation are expressed in this set-up,
although the sequential nature of the conditional's
evaluation is not discussed in this paper.
Hoare's axiomatization over the signature $\SigCP=\{\_\lef\_\rig\_\,,\tr,\fa,a\mid a\in A\}$
consists of eleven axioms,
including those in Table~\ref{CP}.
In Section~\ref{sec:SSCL} we present a concise and simple alternative for this axiomatization.

\begin{table}
\centering
\rule{1\textwidth}{.4pt}
\begin{align*}
\label{cp1}
\tag{CP1} x \lef \tr \rig y &= x\\
\label{cp2}\tag{CP2}
x \lef \fa \rig y &= y\\
\label{cp3}\tag{CP3}
\tr \lef x \rig \fa  &= x\\
\label{cp4}\tag{CP4}
\qquad
    x \lef (y \lef z \rig u)\rig v &= 
	(x \lef y \rig v) \lef z \rig (x \lef u \rig v)
\end{align*}
\hrule
\caption{The set \CP\ of axioms for proposition algebra}
\label{CP}
\end{table}

We extend the definition of the function $se$ 
(Definition~\ref{def:se}) to closed terms over $\SigCP$ by adding the clause
\begin{equation}
\label{def:extr}
se(P\lef Q\rig R)=se(Q)[\tr\mapsto se(P),\fa\mapsto se(R)].
\end{equation}
The four axioms in Table~\ref{CP}, named \CP\ (for Conditional Propositions),
establish a complete axiomatization of $se$-congruence over the signature \SigCP: 
\begin{equation*}
\text{For all closed terms $P,Q$ over \SigCP, $\CP\vdash P=Q~\iff~ se(P)=se(Q)$.}
\end{equation*}
A simple proof of this fact is recorded in~\cite[Thm.2.11]{BP15} (and repeated in~\cite{PS17}).

With the conditional connective and the constants \tr\ and \fa, 
the sequential connectives prescribing short-circuit evaluation are definable:
\begin{align}
\label{defneg}
\neg x &=\fa\lef x\rig \tr,\\
\label{defand}
x\leftand y &=y\lef x\rig\fa,\\
\label{defor}
x\leftor y &=\tr\lef x\rig y.
\end{align}
Note that these equations agree with the extension of the definition of 
the function $se$ in~\eqref{def:extr} above:
$se(\neg P)=se(\fa\lef P\rig\tr)$, $se(P\leftand Q)=se(Q\lef P\rig\fa)$, and 
$se(P\leftor Q)=se(\tr\lef P\rig Q)$.
Thus, the axioms in Table~\ref{CP} combined with equations~\eqref{defneg}-\eqref{defor}, say 
\[\CPandneg,\]
axiomatize equality of evaluation trees for closed terms over the enriched signature
$\SigCP\cup\SigSCL$.

\medskip

In order to capture memorizing evaluations, the following axiom is formulated in~\cite{BP10}:
\begin{align}
\label{CPmem}
\tag{CPmem}
x\lef y\rig(z\lef u\rig(v\lef y\rig w))
&= x\lef y\rig(z\lef u\rig w)
\end{align}
The axiom~\eqref{CPmem} expresses that the first evaluation value of $y$ is memorized.
We define 
\[\CPmem=\CP \cup\{\eqref{CPmem}\}. \]
In forthcoming proofs we use the fact that replacing the variable $y$ in axiom~\eqref{CPmem} by 
$\fa\lef y\rig\tr$ and/or the variable $u$ by $\fa\lef u\rig\tr$ yields equivalent versions of 
this axiom:
\begin{align}
\label{CPmem1}
\tag{CPmem1}
\qquad
(x\lef y\rig(z\lef u\rig v))\lef u\rig w&=
(x\lef y\rig z)\lef u\rig w,\\
\label{CPmem2}
\tag{CPmem2}
x\lef y\rig((z\lef y\rig u)\lef v\rig w)
&= x\lef y\rig(u\lef v\rig w),\\
\label{CPmem3}
\tag{CPmem3}
((x\lef y\rig z)\lef u\rig v)\lef y\rig w
&= (x\lef u\rig v)\lef y\rig w.
\end{align}
This follows easily with~\eqref{cp4}, \eqref{cp2}, \eqref{cp1}.
Furthermore, if we replace $u$ by $\fa$ in~\eqref{CPmem}, we find the \emph{contraction law}
\begin{equation}
\label{CPcon1}
\tag{CPcon1}
\qquad
x\lef y\rig(v\lef y\rig w)=x\lef y\rig w,
\end{equation} 
and replacing $u$ by $\tr$ in axiom~\eqref{CPmem3}
yields the symmetric contraction law 
\begin{equation}
\label{CPcon2}
\tag{CPcon2}
\qquad
(x\lef y\rig z)\lef y\rig w= x\lef y\rig w.
\end{equation}
With help of the tool \emph{Mace4}~\cite{BirdBrain} it easily follows 
that the axioms of \CPmem\ are independent, and therefore those of \CP\ are
also independent.

We write $\CPmem(\neg,\leftand,\leftor)$
for the axioms of $\CPmem$ extended with equations \eqref{defneg}-\eqref{defor}.
An important property of $\CPmem(\neg,\leftand,\leftor)$ is that the conditional connective
can be expressed with the sequential connectives and negation.
First, observe that it is trivial to derive 
\[\neg x\leftand z=\fa\lef x\rig z,\]
and hence
\begin{align}
\nonumber
(x\leftand y)\leftor(\neg x\leftand z)
&=\tr\lef(y\lef x\rig\fa)\rig(\fa\lef x\rig z)
&&\text{by~\eqref{defneg}-\eqref{defor} and the above}\\
\nonumber
&=(\tr\lef y\rig(\fa\lef x\rig z))\lef x\rig
(\fa\lef x\rig z)
&&\text{by \eqref{cp4}, \eqref{cp2}}\\
\nonumber
&=(\tr\lef y\rig\fa)\lef x\rig z
&&\text{by \eqref{CPmem1}, \eqref{CPcon1}}\\
\label{eq:indef}
&=y\lef x\rig z.
&&\text{by \eqref{cp3}}
\end{align}
In some cases it is convenient to use other equations: 
\begin{align}
\label{eq:indef2}
(\neg x\leftand z)\leftor(x\leftand y)
&=y\lef x\rig z,\\
\label{eq:indef3}
(x\leftor z)\leftand(\neg x\leftor y)
&=y\lef x\rig z,\\
\label{eq:indef4}
(\neg x\leftor y)\leftand(x\leftor z)
&=y\lef x\rig z,
\end{align}
which can all be proved from $\CPmem(\neg,\leftand,\leftor)$ in a similar way.

\medskip

In~\cite{BP12a,BPS13} a set-up is provided for defining short-circuit 
logics in a generic way with help of the
conditional by restricting the consequences of some \CP-axiomati\-zation extended with
equation~\eqref{defneg} (that is, $\neg x=\fa\lef x\rig\tr$) and equation~\eqref{defand} 
(i.e., $x\leftand y=y\lef x\rig\fa$) to the signature $\SigSCL$. 
So, the conditional connective is considered a \emph{hidden operator}.

The definition below uses the export operator ${\export}$ of \emph{Module algebra}~\cite{BHK90} 
to express this in a concise way:
in module algebra, $S\export X$  is the operation that 
exports the signature $S$ from module $X$ while declaring 
other signature elements hidden. 

\begin{definition}
\label{def:SCL}
A \textbf{short-circuit logic}
is a logic that implies the consequences
of the module expression
\begin{align*}
\SCL=\{\tr,\neg,\leftand\}\export(&\CP\cup\{\eqref{defneg},\eqref{defand}\}).
\end{align*}
\end{definition}

As a first example, $~\SCL\vdash \neg\neg x=x$~ 
can be proved as follows:
\begin{align}
\nonumber
\neg \neg x&=\fa\lef(\fa\lef x\rig\tr)\rig\tr
&&\text{by \eqref{defneg}}\\
\nonumber
&=(\fa\lef\fa\rig\tr)\lef x\rig(\fa\lef\tr\rig\tr)
&&\text{by~\eqref{cp4}}\\
\nonumber
&=\tr\lef x\rig\fa
&&\text{by~\eqref{cp2}, \eqref{cp1}}\\
\label{eq:dns}
&=x.
&&\text{by~\eqref{cp3}}
\end{align}

In~\cite{BP12a,BPS13}, the following short-circuit logics were defined:

\begin{definition}
\label{def:FSCL}
\textbf{Free short-circuit logic $(\FSCL)$}
is the short-circuit logic that implies no other 
consequences than those of the module expression \SCL.

\noindent
\textbf{Memorizing short-circuit logic $(\MSCL)$}
is the short-circuit logic that implies no other 
consequences than those of the module expression 
\[
\{\tr,\neg,\leftand\}\export(\CP\cup\{\eqref{defneg},\eqref{defand},\eqref{CPmem}\}).
\]
\textbf{Static short-circuit logic $(\SSCL)$}
is the short-circuit logic that implies no other 
consequences than those of the module expression 
\[
\{\tr,\neg,\leftand\}\export(\CP\cup\{\eqref{defneg},\eqref{defand},\eqref{CPmem}\}
\cup\{\fa\lef x\rig\fa=\fa\}).
\]
\end{definition}
To enhance readability, we extend these short-circuit logics with the constant \fa\
and its defining equation~\eqref{Neg}, which is justified by the \SCL-derivation
\begin{align}
\nonumber
\fa&=\fa\lef \tr\rig\tr
&&\text{by~\eqref{cp1}}\\
\label{eq:scl1}
&=\neg\tr,
&&\text{by~\eqref{defneg}}
\end{align}
and with the connective $\leftor$ and its defining 
equation~\eqref{Or} (thus, $x\leftor y=\neg(\neg x\leftand\neg y)$) by admitting
equation~\eqref{defor} in \SCL-derivations,
that is, $x\leftor y=\tr\lef x\rig y$. This last extension is justified by 
\begin{align}
\nonumber
\neg(\neg x\leftand\neg y)
&=\fa\lef(\neg y\lef(\fa\lef x\rig\tr)\rig\fa)\rig\tr
&&\text{by~\eqref{defneg}, \eqref{defand}}\\
\nonumber
&=\fa\lef(\fa\lef x\rig\neg y)\rig\tr
&&\text{by~\eqref{cp4}, \eqref{cp2}, \eqref{cp1}}\\
\nonumber
&=(\fa\lef\fa\rig\tr)\lef x\rig(\fa\lef\neg y\rig\tr)
&&\text{by~\eqref{cp4}}\\
\label{eq:scl2}
&=\tr\lef x\rig y.
&&\text{by~\eqref{cp2}, \eqref{defneg}, \eqref{eq:dns}}
\end{align}

In~\cite{Daan,PS17} the following results are proved:
\[\text{For all $P,Q\in\SP$, $\FSCL\vdash P=Q ~\iff~ \SCLe\vdash P=Q~\iff~P=_\se Q$.}\]
In the remainder of the paper we will prove similar results for \MSCL\ and  
\SSCL\ .

\section{Completeness of \MSCLe}
\label{sec:Completeness}

In this section we prove that $\MSCLe$ and \MSCL\ are equally
strong, that is, both define the same equational theory. Furthermore, both constitute
a complete axiomatization of memorizing $se$-congruence.

\medskip
Given a signature $\Sigma$, we write 
\[\mathbb T_{\Sigma,\cal X}\]
for the set of open terms over 
$\Sigma$ with variables in $\cal X$ (typical elements of $\cal X$ are $x,y,z,u,v,w$).
\begin{definition}
\label{def:functions}
Define the following two functions between sets of open terms: 
\begin{description}
\item
[$f:\TSCL\to\TCP$] is defined by 
\begin{align*}
f(bl)&=bl~\text{ for }bl\in\{\tr,\fa\},
&f(\neg t)&=\fa\lef f(t)\rig\tr,\\
f(a)&=a~\text{ for }a\in A,
&f(t_1\leftand t_2)&=f(t_2)\lef f(t_1)\rig\fa,\\
f(x)&=x~\text{ for }x\in \cal X,
&f(t_1\leftor t_2)&=\tr\lef f(t_1)\rig f(t_2).
\hspace{32mm}
\end{align*}
\item
[$g:\TCP\to\TSCL$] is defined by 
\begin{align*}
g(bl)&=bl~\text{ for }bl\in\{\tr,\fa\},
&g(x)&=x~\text{ for }x\in \cal X,
\\
g(a)&=a~\text{ for }a\in A,
&g(t_1\lef t_2\rig t_3)&=(g(t_2)\leftand g(t_1))\leftor(\neg g(t_2)\leftand g(t_3)).
\end{align*}
\end{description}
\end{definition}

\begin{lemma}
\label{lemma1}
For all $t\in\TSCL$, $\CPmem(\neg,\leftand,\leftor)\vdash f(t)=t$.
\end{lemma}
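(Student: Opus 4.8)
The plan is to proceed by structural induction on $t\in\TSCL$, and the key observation driving the whole argument is that the recursive clauses defining $f$ on compound terms are \emph{verbatim} the right-hand sides of the defining equations \eqref{defneg}--\eqref{defor}. As a result, only these three equations are needed, together with reflexivity, transitivity and congruence of equational logic; the \CPmem-axioms proper play no role in this particular lemma, which is why the (stronger) theory $\CPmem(\neg,\leftand,\leftor)$ certainly derives the stated equality.

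For the base cases, when $t$ is one of $\tr$, $\fa$, an atom $a\in A$, or a variable $x\in\cal X$, the function $f$ acts as the identity, so $f(t)$ and $t$ are literally the same term and $f(t)=t$ holds by reflexivity.

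For the inductive step I would treat the three connectives separately. If $t=\neg t'$, then by definition $f(\neg t')=\fa\lef f(t')\rig\tr$; the induction hypothesis gives $\CPmem(\neg,\leftand,\leftor)\vdash f(t')=t'$, so substituting equals into the context $\fa\lef\_\rig\tr$ yields $\fa\lef f(t')\rig\tr=\fa\lef t'\rig\tr$, and finally \eqref{defneg} rewrites the latter to $\neg t'$. The cases $t=t_1\leftand t_2$ and $t=t_1\leftor t_2$ are entirely analogous: one applies the induction hypothesis to both subterms, uses congruence to replace $f(t_1),f(t_2)$ by $t_1,t_2$ inside the surrounding conditional, and then closes with \eqref{defand} (i.e.\ $x\leftand y=y\lef x\rig\fa$) and \eqref{defor} (i.e.\ $x\leftor y=\tr\lef x\rig y$) respectively.

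There is no genuine obstacle here: the statement is essentially a bookkeeping lemma recording that, modulo the defining equations, applying $f$ to an $\SigSCL$-term produces a \CP-term provably equal to the original, and the proof is a direct unfolding of $f$ matched against \eqref{defneg}--\eqref{defor}. The only point worth flagging is that each inductive case requires an appeal to congruence in order to push the induction hypothesis through the surrounding conditional context; this is routine, but it is precisely what makes the conclusion a derivability claim rather than a syntactic identity.
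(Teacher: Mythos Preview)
Your proposal is correct and follows exactly the approach indicated in the paper, whose proof consists of the single line ``By structural induction on $t$.'' You have simply spelled out the routine details of that induction, correctly observing that only the defining equations \eqref{defneg}--\eqref{defor} together with congruence are needed.
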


\begin{proof}
By structural induction on $t$.
\end{proof}

\begin{lemma}
\label{lemma2}
For all $s,t\in\TCP$, $\CPmem(\neg,\leftand,\leftor)\vdash s=t~\Rightarrow~\CPmem\vdash s=t$.
\end{lemma}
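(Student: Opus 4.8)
The plan is to exploit the fact that the three equations \eqref{defneg}, \eqref{defand}, \eqref{defor} turn $\CPmem(\neg,\leftand,\leftor)$ into a \emph{definitional extension} of $\CPmem$: they define the new symbols $\neg$, $\leftand$, $\leftor$ explicitly in terms of the conditional, so the extension is conservative over $\SigCP$. Concretely, I would introduce a translation $h$ from terms over the combined signature $\SigCP\cup\SigSCL$ to terms over $\SigCP$ that leaves the conditional, the constants, the atoms and the variables untouched and replaces the sequential connectives by their definitions, i.e.\ $h(\neg t)=\fa\lef h(t)\rig\tr$, $h(t_1\leftand t_2)=h(t_2)\lef h(t_1)\rig\fa$, and $h(t_1\leftor t_2)=\tr\lef h(t_1)\rig h(t_2)$. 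This $h$ extends the map $f$ of Definition~\ref{def:functions} to terms that may already contain conditionals, and it is the identity on $\TCP$; in particular $h(s)=s$ and $h(t)=t$ for the given $s,t\in\TCP$.

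First I would record two routine facts about $h$. It is compositional, in the sense that $h(C[p])=h(C)[h(p)]$ for every one-hole context $C$ over the combined signature, and it commutes with substitution: writing $\sigma'$ for the substitution $x\mapsto h(\sigma(x))$, a straightforward structural induction gives $h(\sigma(u))=\sigma'(h(u))$ for every combined term $u$ and every substitution $\sigma$. Both facts hold because $h$ is defined homomorphically and acts as the identity on variables.

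The core step is to translate a given derivation of $s=t$ from $\CPmem(\neg,\leftand,\leftor)$ into a derivation of $h(s)=h(t)$ from $\CPmem$ by applying $h$ to every term occurring in it. I would check that each inference rule of equational logic survives this translation: reflexivity, symmetry and transitivity are immediate, and the congruence (replacement) rule is preserved by the contextual identity $h(C[p])=h(C)[h(p)]$. For an axiom instance $\sigma(u)=\sigma(v)$ the commutation with substitution turns it into $\sigma'(h(u))=\sigma'(h(v))$, and here two cases arise. If $u=v$ is one of the $\CP$-axioms or \eqref{CPmem}, then $u,v$ are already over $\SigCP$, so $h(u)=u$ and $h(v)=v$ and the translated step is precisely the corresponding instance of that $\CPmem$-axiom. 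If $u=v$ is one of \eqref{defneg}--\eqref{defor}, then $h$ maps both sides to the \emph{same} $\SigCP$-term---for instance $h(\neg x)=\fa\lef x\rig\tr=h(\fa\lef x\rig\tr)$---so the translated step is an instance of reflexivity and needs no axiom at all. Assembling these translated steps yields $\CPmem\vdash h(s)=h(t)$, and since $h$ fixes $\TCP$ this is exactly $\CPmem\vdash s=t$.

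The one point that requires genuine care---rather than being purely mechanical---is the commutation of $h$ with substitution, together with the observation that intermediate terms in the original derivation may freely mix conditionals with sequential connectives. One must ensure that $h$ is defined on all such combined terms (not merely on $\TSCL$, the domain of $f$) and that the inductions for the compositionality and substitution lemmas range over the combined signature. Once that bookkeeping is in place, every remaining verification is a direct unfolding of the definitions.
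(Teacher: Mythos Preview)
Your proposal is correct and follows essentially the same idea as the paper: both arguments use that \eqref{defneg}--\eqref{defor} make $\CPmem(\neg,\leftand,\leftor)$ a definitional extension of $\CPmem$, so that after expanding the sequential connectives the defining axioms become identities and only $\CPmem$-axioms remain. The paper's proof is much terser---it simply remarks that each use of a defining equation can be replaced by the corresponding $\TCP$-identity---whereas you spell out the translation $h$, its commutation with substitution, and the preservation of each equational inference rule; this is exactly the bookkeeping the paper leaves implicit.
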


\begin{proof}
In an equational proof of $\CPmem(\neg,\leftand,\leftor)\vdash s=t$, each occurrence of one of 
the equations~\eqref{defneg}, \eqref{defand}, and \eqref{defor} can be replaced by the corresponding
\TCP-identity. 
More precisely, any occurrence of $\neg x= \fa\lef x\rig \tr$ can be replaced by 
$\fa\lef x\rig \tr= \fa\lef x\rig \tr$, and similar for applications of 
\eqref{defand} and \eqref{defor}. 

Because $s$ and $t$ do not contain occurrences of $\neg,\leftand$, 
and $\leftor$, this yields an equational proof of $s=t$ in $\CPmem$.
\end{proof}

\begin{lemma}
\label{lemma33}
For all $s,t\in\TCP$, $\CPmem\vdash s=t~\Rightarrow~ \MSCLe\vdash g(s)=g(t)$.
\end{lemma}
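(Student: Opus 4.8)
The plan is to regard $g$ as a translation of equational derivations and to prove the implication by induction on the length of a $\CPmem$-derivation of $s=t$; equivalently, I show that $g$ sends every $\CPmem$-derivable equation over $\TCP$ to an $\MSCLe$-derivable equation over $\TSCL$. Three ingredients reduce this to a finite check. First, $g$ commutes with substitution: since $g(x)=x$ and $g$ maps each operation of $\SigCP$ to a fixed term context built from $\leftand,\leftor,\neg$, a routine structural induction gives $g(t\sigma)=g(t)[x\mapsto g(\sigma(x))]$ as a syntactic identity, so that instances of axioms reduce to the axiom schemes themselves. Second, $g$ respects the inference rules of equational logic: reflexivity, symmetry and transitivity are immediate, and the congruence rule is preserved because $g$ of a compound $\SigCP$-term is a context over $\leftand,\leftor,\neg$ applied to the $g$-images of its subterms, and these connectives are congruent in $\MSCLe$. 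Third, it therefore suffices to verify, for each axiom $\ell=r$ of $\CPmem$, that $\MSCLe\vdash g(\ell)=g(r)$.

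The axioms \eqref{cp1}, \eqref{cp2}, \eqref{cp3} are straightforward. For \eqref{cp1} we have $g(x\lef\tr\rig y)=(\tr\leftand x)\leftor(\neg\tr\leftand y)$, which reduces to $x\leftor(\fa\leftand y)=x\leftor\fa=x$ using \eqref{Tand}, \eqref{Neg}, and the derivable laws \eqref{SCL6} and \eqref{SCL5} (available since $\MSCLe\vdash\SCLe$ by Theorem~\ref{thm:extraM}). Axiom \eqref{cp2} is handled dually, using $\neg\fa=\tr$ and the dual of \eqref{Tand}. For \eqref{cp3}, $g(\tr\lef x\rig\fa)=(x\leftand\tr)\leftor(\neg x\leftand\fa)$; applying \eqref{SCL8} to the right disjunct and then left-distributivity \eqref{D1} (Theorem~\ref{thm:dist}) backwards yields $x\leftand(\tr\leftor\fa)=x\leftand\tr=x$.

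The substantial cases are \eqref{cp4} and \eqref{CPmem}, and this is where I expect the real work to lie. Expanding $g$ on both sides of \eqref{cp4} produces two nested ``if-then-else'' terms of the form $(u_1\leftand u_2)\leftor(\neg u_1\leftand u_3)$; reconciling them in $\MSCLe$ uses the right-factorization law \eqref{M3} of Theorem~\ref{thm:dist} to push the outer conjuncts inward, the equation \eqref{M1} to rewrite the negation of an if-then-else form into the dual if-then-else form, \eqref{M2} to reorder the two disjuncts, and the contraction consequences of \eqref{MSCL1} and \eqref{Mem} to collapse the duplicated guard. For \eqref{CPmem}, after applying $g$ the memorized guard $y$ occurs both as an outer and as a deeply nested condition, and the required identity is exactly the sequential contraction phenomenon exhibited by the $\MSCLe$-consequence $x\leftand(y\leftand x)=x\leftand y$ on page~\pageref{typical}; eliminating the inner copy of the guard reduces to repeated use of \eqref{MSCL1}, \eqref{Mem}, \eqref{M3}, duality, and left-distributivity.

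Both derivations are long but mechanical, and I would verify them with \emph{Prover9} in the style of Theorems~\ref{thm:extraM} and~\ref{thm:dist}, recording them in an appendix. The main obstacle is thus not the inductive scaffolding, which is routine, but the bookkeeping needed to show that $g$ transports the distributive axiom \eqref{cp4} and the memorizing axiom \eqref{CPmem} into consequences of the five $\MSCLe$-axioms.
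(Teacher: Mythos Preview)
Your proposal is correct and follows essentially the same approach as the paper: reduce the implication to checking that the $g$-image of each $\CPmem$-axiom is $\MSCLe$-derivable, and handle the hard cases \eqref{cp4} and \eqref{CPmem} via the derived laws \eqref{M1}, \eqref{M2}, \eqref{M3}, and \eqref{D1}. One practical simplification the paper exploits that you might adopt: rather than tackling \eqref{CPmem} directly, it verifies the equivalent variant \eqref{CPmem1}, which after applying $g$ reduces to a goal of the form $x\leftand M = x\leftand N$ and is then dispatched by a short chain using the auxiliary identity $x\leftand\fa=(x\leftand\neg x)\leftand y$ together with \eqref{D1} and \eqref{M2}.
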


\begin{proof}
The $g$-translation of each \CPmem-axiom is derivable in \MSCLe.

\noindent
{Axiom~\eqref{cp1}}. $g(x\lef\tr\rig y)=(\tr\leftand x)\leftor(\neg \tr\leftand y)=x=g(x)$.

\noindent
{Axiom~\eqref{cp2}}. $g(x\lef\fa\rig y)=(\fa\leftand x)\leftor (\neg\fa\leftand y)=
\fa\leftor y=y=g(y)$.

\noindent
{Axiom~\eqref{cp3}}. $g(\tr\lef x\rig\fa)= (x\leftand \tr)\leftor(\neg x\leftand\fa)=
x\leftor (\neg x\leftand\fa)\stackrel{\eqref{SCL8}}=x\leftor (x\leftand\fa)
\stackrel{\eqref{MSCL1}'}=x=g(x)$.

\noindent
{Axiom~\eqref{cp4}}. We write ``Assoc" for applications of associativity, and we use
use~\eqref{M1}, \eqref{M2}, \eqref{M3}, and~\eqref{D1} (see Fact~\ref{fact:mscl}).
\begin{align*}
g(&x\lef (y\lef z\rig u)\rig v)\\
&=(g(y\lef z\rig u)\leftand x)\leftor(\neg g(y\lef z\rig u)\leftand v)\\
&=([(z\leftand y)\leftor(\neg z\leftand u)]\leftand x)\leftor 
(\neg[(z\leftand y)\leftor (\neg z\leftand u)]\leftand v )\\
&=([(z\leftand y)\leftor(\neg z\leftand u)]\leftand x)\leftor 
([(\neg z\leftor\neg y)\leftand (z\leftor\neg u)]\leftand v )\\
&=([(z\leftand y)\leftor(\neg z\leftand u)]\leftand x)\leftor 
([(z\leftand\neg y)\leftor (\neg z\leftand\neg u)]\leftand v )
&&\text{by~\eqref{M1}}\\
&=[(z\leftand (y\leftand x))\leftor(\neg z\leftand (u\leftand x))]\leftor 
  [(z\leftand(\neg y\leftand v))\leftor (\neg z\leftand(\neg u\leftand v))],~
&&\text{by~\eqref{M3}}
\end{align*}
and
\begin{align*}
g(&(x\lef y\rig v)\lef z\rig(x\lef u\rig v))\\
&=(z\leftand g(x\lef y\rig v))\leftor(\neg z\leftand g(x\lef u\rig v))\\
&=(z\leftand ((y\leftand x)\leftor (\neg y\leftand v)))\leftor
(\neg z\leftand ((u\leftand x)\leftor (\neg u\leftand v)))\\
&=([(z\leftand (y\leftand x))\leftor (z\leftand(\neg y\leftand v))])\leftor
([(\neg z\leftand (u\leftand x))\leftor (\neg z\leftand(\neg u\leftand v))])
&&\text{by~\eqref{D1}}\\
&=(z\leftand (y\leftand x))\leftor[((z\leftand 
(\neg y\leftand v))
\leftor
(\neg z\leftand (u\leftand x)))
\leftor
(\neg z\leftand (\neg u\leftand v))]
&&\text{by~Assoc}\\
&=(z\leftand (y\leftand x))\leftor[(
(\neg z\leftand (u\leftand x))
\leftor
(z\leftand (\neg y\leftand v))
)
\leftor
(\neg z\leftand (\neg u\leftand v))]
&&\text{by~\eqref{M2}}\\
&=[(z\leftand (y\leftand x))\leftor
(\neg z\leftand (u\leftand x))]
\leftor
[(z\leftand (\neg y\leftand v))
\leftor
(\neg z\leftand (\neg u\leftand v))].
&&\text{by~Assoc}
\end{align*}

\noindent
{Axiom~\eqref{CPmem}}. As argued in Section~\ref{sec:MSCL},
it is sufficient to derive axiom~\eqref{CPmem1}, that is,
\[(w\lef y\rig(z\lef x\rig u))\lef 
x\rig v=(w\lef y\rig z)\lef x\rig v.\]
Derive 
\begin{align*}
g((w\lef y\rig(z\lef x\rig u))\lef x\rig v)
&=(x\leftand g(w\lef y\rig(z\lef x\rig u)))\leftor(\neg x\leftand v)
\\
&=(x\leftand M)\leftor(\neg x\leftand v),
\\[2mm]
g((w\lef y\rig z)\lef x\rig v)
&=(x\leftand g(w\lef y\rig z))\leftor(\neg x\leftand v)
&&\text{by~\eqref{eq:indef}}\\
&=(x\leftand N)\leftor(\neg x\leftand v),
\end{align*}
so it suffices to derive $x\leftand M=x\leftand N$. 
We use one auxiliary result and we write $(n)'$ for the dual version of equation~$(n)$.
\begin{align}
\nonumber
x\leftand\fa
&=(x\leftand\fa)\leftand y
&&\text{by~\eqref{SCL6}, Assoc}\\
\nonumber
&=((x\leftand\fa)\leftor\fa)\leftand y
&&\text{by~\eqref{SCL5}}\\
\nonumber
&=((\neg x\leftor(\fa\leftor\fa))\leftand(x\leftor\fa))\leftand y
&&\text{by~\eqref{Mem}$'$}\\
\nonumber
&=(\neg x\leftand x)\leftand y
&&\text{by~\eqref{SCL5}}\\
\label{hulpje}
&=(x\leftand\neg x)\leftand y.
&&\text{by~\eqref{M2}}
\end{align}
Hence,
\begin{align*}
x\leftand M
&=x\leftand g(w\lef y\rig(z\lef x\rig u))
\\
&=x\leftand ((y\leftand w)\leftor(\neg y\leftand ((x\leftand z)\leftor(\neg x\leftand u))))
\\
&=x\leftand((\neg y\leftor w)\leftand(y\leftor ((x\leftand z)\leftor(\neg x\leftand u))))
&&\text{by~\eqref{M1}}\\
&=x\leftand((y\leftor ((x\leftand z)\leftor(\neg x\leftand u)))\leftand(\neg y\leftor w))
&&\text{by~\eqref{M2}}\\
&=x\leftand((y\leftor ((\neg x\leftand u)\leftor(x\leftand z)))
\leftand(\neg y\leftor w))
&&\text{by~\eqref{M2}}\\
&=[x\leftand(y\leftor ((\neg x\leftand u)\leftor(x\leftand z)))]
\leftand(\neg y\leftor w)
&&\text{by~Assoc}
\\
&=[(x\leftand y)\leftor (x\leftand((\neg x\leftand u)\leftor(x\leftand z)))]
\leftand(\neg y\leftor w)
&&\text{by~\eqref{D1}}
\\
&=[(x\leftand y)\leftor ((x\leftand(\neg x\leftand u))\leftor(x\leftand(x\leftand z)))]
\leftand(\neg y\leftor w)
&&\text{by~\eqref{D1}}
\\
&=[(x\leftand y)\leftor (((x\leftand\neg x)\leftand u)\leftor(x\leftand z))]
\leftand(\neg y\leftor w)
&&\text{by~Assoc, idempotence}
\\
&=[(x\leftand y)\leftor ((x\leftand\fa)\leftor(x\leftand z))]
\leftand(\neg y\leftor w)
&&\text{by \eqref{hulpje}}
\\
&=[(x\leftand y)\leftor (x\leftand(\fa\leftor z))]
\leftand(\neg y\leftor w)
&&\text{by~\eqref{D1}}
\\
&=(x\leftand(y\leftor z))\leftand(\neg y\leftor w)
&& \text{by~\eqref{SCL4}$'$, \eqref{D1}}\\
&=x\leftand((y\leftor z)\leftand(\neg y\leftor w))
&& \text{by~Assoc}\\
&=x\leftand g(w\lef y\rig z)
&& \text{by~\eqref{eq:indef3}}\\
&=x\leftand N.
&&\qedhere
\end{align*}
\end{proof}

\begin{theorem}
\label{thm:corr1}
For all terms $s,t$ over $\SigSCL$, $\MSCLe\vdash s=t~\iff~\MSCL\vdash s=t$.
\end{theorem}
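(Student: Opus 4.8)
The plan is to reduce provability in $\MSCL$ to provability in $\CPmem(\neg,\leftand,\leftor)$ and then shuttle equations between the two signatures with the translations $f$ and $g$ of Definition~\ref{def:functions}. By the definition of $\MSCL$ as a module expression over the visible signature $\{\tr,\neg,\leftand\}$, extended with $\fa$ and $\leftor$ via the admissibility derivations~\eqref{eq:scl1} and~\eqref{eq:scl2}, we have for all $s,t\in\TSCL$ that $\MSCL\vdash s=t$ holds exactly when $\CPmem(\neg,\leftand,\leftor)\vdash s=t$. Hence it suffices to prove the two implications $\CPmem(\neg,\leftand,\leftor)\vdash s=t\Rightarrow\MSCLe\vdash s=t$ and $\MSCLe\vdash s=t\Rightarrow\CPmem(\neg,\leftand,\leftor)\vdash s=t$ for $s,t\in\TSCL$.

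For the first implication I would run $s$ and $t$ through the round trip $\TSCL\to\TCP\to\TSCL$. Assume $\CPmem(\neg,\leftand,\leftor)\vdash s=t$. By Lemma~\ref{lemma1}, $\CPmem(\neg,\leftand,\leftor)\vdash f(s)=s$ and $\vdash f(t)=t$, so $\CPmem(\neg,\leftand,\leftor)\vdash f(s)=f(t)$; since $f(s),f(t)\in\TCP$ contain none of $\neg,\leftand,\leftor$, Lemma~\ref{lemma2} upgrades this to $\CPmem\vdash f(s)=f(t)$, and Lemma~\ref{lemma33} then yields $\MSCLe\vdash g(f(s))=g(f(t))$. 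It therefore remains to close the loop with the key auxiliary claim that for all $t\in\TSCL$,
\[
\MSCLe\vdash g(f(t))=t,
\]
from which $\MSCLe\vdash s=g(f(s))=g(f(t))=t$ follows at once.

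I would prove this auxiliary claim by structural induction on $t$; the constant, atom and variable cases are immediate. For the connectives, unfolding $f$ and then $g$ and applying the induction hypothesis gives
\[
g(f(\neg t))=(t\leftand\fa)\leftor(\neg t\leftand\tr),\qquad
g(f(t_1\leftand t_2))=(t_1\leftand t_2)\leftor(\neg t_1\leftand\fa),
\]
\[
g(f(t_1\leftor t_2))=(t_1\leftand\tr)\leftor(\neg t_1\leftand t_2).
\]
Each right-hand side reduces to the expected term using the $\SCLe$-consequences of $\MSCLe$ (Theorem~\ref{thm:extraM}) and the auxiliary identities of Fact~\ref{fact:mscl}: the law $x\leftand\tr=x$, available as the dual of~\eqref{SCL5} since $\MSCLe\vdash\SCLe$ and $\SCLe$ respects duality; axiom~\eqref{SCL8} to interchange $t\leftand\fa$ with $\neg t\leftand\fa$; \eqref{M2} to reorder the two disjuncts; and \eqref{ar 1}, \eqref{ar 2} to absorb the $\leftand\fa$ summand. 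For instance the $\leftand$-case runs $g(f(t_1\leftand t_2))=(t_1\leftand t_2)\leftor(\neg t_1\leftand\fa)\stackrel{\eqref{M2}}{=}(\neg t_1\leftand\fa)\leftor(t_1\leftand t_2)\stackrel{\eqref{SCL8}}{=}(t_1\leftand\fa)\leftor(t_1\leftand t_2)\stackrel{\eqref{ar 2}}{=}t_1\leftand t_2$, and the $\neg$- and $\leftor$-cases are analogous (using~\eqref{ar 1} for the latter).

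For the converse implication I would show that every $\MSCLe$-axiom is derivable in $\CPmem(\neg,\leftand,\leftor)$; since $\MSCL$ is closed under equational reasoning over $\SigSCL$, this gives $\MSCLe\vdash s=t\Rightarrow\MSCL\vdash s=t$. Here axioms~\eqref{Neg} and~\eqref{Or} are exactly~\eqref{eq:scl1} and~\eqref{eq:scl2}; \eqref{Tand} is $\tr\leftand x=x\lef\tr\rig\fa=x$ by~\eqref{defand} and~\eqref{cp1}; \eqref{MSCL1} follows from $x\leftand(x\leftor y)=(\tr\lef x\rig y)\lef x\rig\fa\stackrel{\eqref{CPcon2}}{=}\tr\lef x\rig\fa\stackrel{\eqref{cp3}}{=}x$; and~\eqref{Mem} is obtained by rewriting both sides to conditional terms and applying the memorizing axiom~\eqref{CPmem} (equivalently~\eqref{CPmem1}), exactly as in the derivation of~\eqref{eq:indef}. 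The only genuine work is the bookkeeping in the induction of the auxiliary claim (the reorderings of $\leftand\fa$-summands) and the conditional-algebra derivation of~\eqref{Mem} from~\eqref{CPmem}; I expect the latter to be the main obstacle, while everything else is a direct application of Lemmas~\ref{lemma1}--\ref{lemma33}.
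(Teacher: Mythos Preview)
Your proposal is correct and follows essentially the same approach as the paper: the $(\Rightarrow)$ direction derives each $\MSCLe$-axiom in $\CPmem(\neg,\leftand,\leftor)$ (the paper gives the explicit conditional-algebra calculation for~\eqref{Mem} that you leave as a sketch), and the $(\Leftarrow)$ direction is the same Lemma~\ref{lemma1}--\ref{lemma2}--\ref{lemma33} chain followed by the inductive proof of $\MSCLe\vdash g(f(t))=t$ (the paper only spells out the $\leftor$-case, while you spell out all three and handle the $\leftand$-case via~\eqref{M2}, \eqref{SCL8}, \eqref{ar 2} rather than the paper's route for $\leftor$).
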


\begin{proof}
($\Rightarrow$) 
It suffices to derive the axioms of \MSCLe\ from \MSCL.

\noindent
{Axiom~\eqref{Neg}}. See~\eqref{eq:scl1}.

\noindent
{Axiom~\eqref{Or}}. This follows from~\eqref{eq:scl2}.

\noindent
{Axiom~\eqref{Tand}}. $\tr\leftand x=x\lef \tr\rig\fa=x$.

\noindent
{Axiom~\eqref{MSCL1}}. $\smash{x\leftand(x\leftor y)=(\tr\lef x\rig y)\lef x\rig\fa\stackrel{\eqref{CPcon2}}=
\tr\lef x\rig\fa=x}$.

\noindent
{Axiom~\eqref{Mem}}. Denote $(x\leftor y)\leftand z=(\neg x\leftand (y\leftand z))
\leftor (x\leftand z)$ by $L=R$. Then
\begin{align*}
L&=z\lef(\tr\lef x\rig y)\rig \fa
\\
&=z\lef x\rig(z\lef y\rig\fa),
&&\text{by \eqref{cp4}, \eqref{cp1}}
\\[2mm]
R
&=\tr\lef((z\lef y\rig\fa)\lef(\fa\lef x\rig\tr)\rig\fa)\rig (z\lef x\rig\fa)
\\
&=\tr\lef(\fa\lef x\rig(z\lef y\rig\fa))\rig (z\lef x\rig\fa)
&&\text{by \eqref{cp4}, \eqref{cp2}, \eqref{cp1}}\\
&=[z\lef x\rig\fa]\lef x\rig [\tr\lef (z\lef y\rig\fa)\rig(z\lef x\rig\fa)]
&&\text{by \eqref{cp4}, \eqref{cp2}}\\
&=z\lef x\rig (\tr\lef(z\lef y\rig\fa)\rig\fa)
&&\text{by~\eqref{CPcon2}, \eqref{CPmem2}}\\
&=z\lef x\rig (z\lef y\rig\fa).
&&\text{by~\eqref{cp3}}
\end{align*}

\noindent
($\Leftarrow$)
\vspace{-4mm}
\begin{align*}
\MSCL\vdash s=t
&~\Rightarrow~ \CPmem(\neg,\leftand,\leftor)\vdash s=t
&&\text{by definition}\\
&~\Rightarrow~ \CPmem(\neg,\leftand,\leftor)\vdash f(s)=f(t)
&&\text{by Lemma~\ref{lemma1}}\\
&~\Rightarrow~ \CPmem\vdash f(s)=f(t)
&&\text{by Lemma~\ref{lemma2}}\\
&~\Rightarrow~ \MSCLe\vdash g(f(s))=g(f(t)).
&&\text{by Lemma~\ref{lemma33}}
\end{align*}
Hence, it suffices to derive for all $t\in\TSCL$, $\MSCLe\vdash g(f(t))=t$.
This follows easily by structural induction, we only show the inductive case $t=t_1\leftor t_2$:
\[
g(f(t_1\leftor t_2))\stackrel{\text{IH}}=(t_1\leftand \tr)\leftor(\neg t_1\leftand t_2)
\stackrel{\eqref{M2},\eqref{SCL5}'}=(\neg t_1\leftand t_2)\leftor t_1
\stackrel{\eqref{ar 1}}=t_1\leftor t_2.
\]
\end{proof}

Now, let $\PS$ be the set of closed terms over $\SigCP$.

\begin{definition}
\label{def:cal}
The binary relation $=_\memse^{\cal C}$ on $\PS$, \textbf{memorizing valuation congruence}, 
is defined  by 
\[P=_\memse^{\cal C} Q~\iff~\memse^{\cal C}(P)=\memse^{\cal C}(Q),\]
where the function $\memse^{\cal C}:\PS\to\T$ is defined as in Definition~\ref{def:memse},
except that the function $se$ is now defined as in~\eqref{def:extr}, that is,
\[se(P\lef Q\rig R)=se(Q)[\tr\mapsto se(P),\fa\mapsto se(R)].\]
\end{definition}

This definition stems from~\cite[Def.5.12]{BP15}.
In~\cite[Thm.5.14]{BP15} we prove this completeness result:
\begin{equation}
\label{eq:BP15}
\text{For all $P,Q\in\PS$, $\CPmem\vdash P=Q~\iff~P=_\memse^{\cal C} Q$.}
\end{equation}
This result depends on a non-trivial proof of the fact 
that $=_\memse^{\cal C}$ is a congruence on $\PS$. 

\begin{lemma}
\label{lemma3}
For all $P,Q\in\SP$, $\MSCLe\vdash P=Q~\Rightarrow~ P=_\memse^{\cal S} Q$.
\end{lemma}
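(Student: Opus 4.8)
The plan is to transport the claim along the translation $f:\SP\to\PS$ into the world of the conditional, where the hard work—the completeness result~\eqref{eq:BP15} and, underlying it, the fact that $=_\memse^{\cal C}$ is a congruence—has already been done. Concretely, I would reduce $P=_\memse^{\cal S}Q$ to $f(P)=_\memse^{\cal C}f(Q)$, reduce $\MSCLe\vdash P=Q$ to $\CPmem\vdash f(P)=f(Q)$, and then splice the two reductions together with~\eqref{eq:BP15}. This route is the natural one because it inherits the congruence property from the $\cal C$-side rather than re-proving it for $=_\memse^{\cal S}$.

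First I would establish the bridging identity $\memse^{\cal C}(f(P))=\memse(P)$ for every $P\in\SP$. Since both $\memse$ and $\memse^{\cal C}$ are $\memt$ post-composed with $se$ (the latter using the extended clause~\eqref{def:extr}), it suffices to prove $se(f(P))=se(P)$ by structural induction on $P$. Each inductive case is immediate because the defining clause of $f$ for a connective is exactly the conditional whose $se$-value, computed via~\eqref{def:extr}, reproduces the corresponding clause of Definition~\ref{def:se}: e.g. $se(f(P_1\leftand P_2))=se(f(P_1))[\tr\mapsto se(f(P_2))]$, which by the induction hypothesis equals $se(P_1)[\tr\mapsto se(P_2)]=se(P_1\leftand P_2)$, and similarly for $\neg$ and $\leftor$. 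Consequently $P=_\memse^{\cal S}Q$ holds iff $f(P)=_\memse^{\cal C}f(Q)$, and by~\eqref{eq:BP15} the latter is equivalent to $\CPmem\vdash f(P)=f(Q)$.

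It then remains to show that $\MSCLe\vdash P=Q$ implies $\CPmem\vdash f(P)=f(Q)$. Inspecting the $(\Rightarrow)$-direction of the proof of Theorem~\ref{thm:corr1}, every axiom of $\MSCLe$ is derivable from $\CPmem(\neg,\leftand,\leftor)$; hence an $\MSCLe$-derivation of $P=Q$ can be replayed, with each axiom-use replaced by its derivation, to give $\CPmem(\neg,\leftand,\leftor)\vdash P=Q$. Composing with Lemma~\ref{lemma1} (which gives $\CPmem(\neg,\leftand,\leftor)\vdash f(R)=R$ for every term $R$) via transitivity yields $\CPmem(\neg,\leftand,\leftor)\vdash f(P)=f(Q)$; since $f(P),f(Q)\in\PS$ contain none of $\neg,\leftand,\leftor$, Lemma~\ref{lemma2} upgrades this to $\CPmem\vdash f(P)=f(Q)$. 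By the previous paragraph this is exactly $P=_\memse^{\cal S}Q$, as required. The only genuinely technical point is the bridging induction of the second paragraph; everything else is bookkeeping, since the substantive obstacle—the congruence of $=_\memse^{\cal C}$ on $\PS$—is absorbed into the cited result~\eqref{eq:BP15}.
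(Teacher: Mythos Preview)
Your argument is correct, and both the bridging identity $\memse(P)=\memse^{\cal C}(f(P))$ and the end-game via~\eqref{eq:BP15} coincide with the paper. However, the route from $\MSCLe\vdash P=Q$ to the $\cal C$-side is genuinely different. The paper argues \emph{semantically}: it first transports the congruence property of $=_\memse^{\cal C}$ along $f$ to show that $=_\memse^{\cal S}$ is itself a congruence on $\SP$, and then checks that every closed instance of each $\MSCLe$-axiom is valid for $=_\memse^{\cal S}$ (using the bridge and~\eqref{eq:BP15}). You instead argue \emph{syntactically}: you transport the derivation, going $\MSCLe\vdash P=Q\Rightarrow\CPmem(\neg,\leftand,\leftor)\vdash P=Q\Rightarrow\CPmem\vdash f(P)=f(Q)$ via the $(\Rightarrow)$-part of Theorem~\ref{thm:corr1}, Lemma~\ref{lemma1}, and Lemma~\ref{lemma2}, and only then invoke~\eqref{eq:BP15}. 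Your route is more economical in that it reuses the five explicit axiom-derivations already carried out for Theorem~\ref{thm:corr1} rather than re-verifying the axioms semantically. The paper's route buys something extra: it explicitly establishes that $=_\memse^{\cal S}$ is a congruence on $\SP$, a fact that is announced separately after Definition~\ref{def:MSCL} and is of independent interest. There is no circularity in your use of Theorem~\ref{thm:corr1}, since its proof does not appeal to the present lemma.
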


\begin{proof}
We first show that $=_\memse^{\cal S}$ is a congruence on $\SP$.
By structural induction, 
$se(P)=se(f(P))$ for all $P\in\SP$, 
where the function $f$ is defined in Definition~\ref{def:functions}. 
Hence, 
\begin{equation}
\label{h2}
\memse(P)=\memse^{\cal C}(f(P)).
\end{equation}
Assume $P_i=_\memse^{\cal S} P_i'$ for $i\in\{1,2\}$.
By~\eqref{h2},
$f(P_i)=_\memse^{\cal C} f(P_i')$, and because
$=_\memse^{\cal C}$ is a congruence on \PS, 
\[f(P_1\leftand P_2)=f(P_2)\lef f(P_1)\rig \fa=_\memse^{\cal C} f(P_2')\lef f(P_1')\rig \fa
= f(P_1'\leftand P_2'),\]
and thus $\memse^{\cal C}(f(P_1\leftand P_2))=\memse^{\cal C}(f(P_1'\leftand P_2'))$. 
By~\eqref{h2}, $\memse(P_1\leftand P_2)=
\memse(P_1'\leftand P_2')$, and thus
$P_1\leftand P_2=_\memse^{\cal S} P_1'\leftand P_2'$. The remaining cases follow in a similar way.

\medskip

Next, for all $P,Q\in\SP$, if $\MSCLe\vdash P=Q$ then $P=_\memse^{\cal S} Q$.
This follows from the facts that $=_\memse^{\cal S}$ is a congruence on $\SP$ and that
each closed instance of each axiom of \MSCLe\ satisfies $=_\memse^{\cal S}$.\footnote{%
  Without loss of generality it can be assumed that substitutions happen first in equational proofs 
  (see, e.g., \cite{Groote}).}
We only show this for axiom~\eqref{MSCL1}:
\begin{align*}
\memse(P\leftand(P\leftor Q))
&=\memse^{\cal C}(f(P\leftand(P\leftor Q)))
&&\text{by~\eqref{h2}}\\
&=\memse^{\cal C}((\tr\lef f(P)\rig f(Q))\lef f(P)\rig\fa)
&&\text{by definition of $f$}\\
&=\memse^{\cal C}(\tr\lef f(P)\rig\fa)
&&\text{by~\eqref{eq:BP15}, \eqref{CPcon2}}\\
&=\memse^{\cal C}(f(P))
&&\text{by~\eqref{eq:BP15}, \eqref{cp3}}\\
&=\memse(P).
&&\text{by~\eqref{h2}}
\end{align*}
\end{proof}

\begin{theorem}
\label{thm:mscl}
For all $P,Q\in\SP$, $\MSCL\vdash P=Q~\iff~ P=_\memse^{\cal S} Q$.
\end{theorem}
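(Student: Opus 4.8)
The plan is to reduce the biconditional to a statement purely about the equational theory $\MSCLe$ and then import the completeness result already available for the conditional signature. By Theorem~\ref{thm:corr1} we have $\MSCL\vdash P=Q\iff\MSCLe\vdash P=Q$ for all $P,Q\in\SP$, so it suffices to prove that for all $P,Q\in\SP$, $\MSCLe\vdash P=Q\iff P=_\memse^{\cal S}Q$. This splits into a soundness half and a completeness half.

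The soundness half, $\MSCLe\vdash P=Q\Rightarrow P=_\memse^{\cal S}Q$, requires no new work: it is exactly Lemma~\ref{lemma3}, which already verifies that $=_\memse^{\cal S}$ is a congruence on $\SP$ and that every closed instance of each \MSCLe-axiom respects it.

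For the completeness half, $P=_\memse^{\cal S}Q\Rightarrow\MSCLe\vdash P=Q$, I would route the argument through the conditional signature $\SigCP$ and the translation $f\colon\TSCL\to\TCP$. Assume $\memse(P)=\memse(Q)$. The semantic bridge \eqref{h2}, namely $\memse(P)=\memse^{\cal C}(f(P))$, turns this into $\memse^{\cal C}(f(P))=\memse^{\cal C}(f(Q))$, that is $f(P)=_\memse^{\cal C}f(Q)$. Now the completeness result \eqref{eq:BP15} for memorizing valuation congruence gives $\CPmem\vdash f(P)=f(Q)$. Translating back with $g$ via Lemma~\ref{lemma33} yields $\MSCLe\vdash g(f(P))=g(f(Q))$, and since the proof of Theorem~\ref{thm:corr1} established the round-trip identity $\MSCLe\vdash g(f(t))=t$ for every $t\in\TSCL$ (in particular for the closed terms $P$ and $Q$), we conclude $\MSCLe\vdash P=g(f(P))=g(f(Q))=Q$, as required.

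The real content of the theorem is therefore concentrated in the completeness half, and specifically in its dependence on \eqref{eq:BP15} together with the agreement of the two memorizing evaluation functions across $f$ (equation \eqref{h2}). I expect the only delicate point to be checking that \eqref{h2} genuinely holds, i.e.\ that $se(P)=se(f(P))$ for all $P\in\SP$ so that applying the shared $\memt$-transformation gives $\memse(P)=\memse^{\cal C}(f(P))$, since everything else is a mechanical composition of Theorem~\ref{thm:corr1}, Lemma~\ref{lemma3}, Lemma~\ref{lemma33}, and the imported completeness statement \eqref{eq:BP15}. Once \eqref{h2} and the round-trip identity are in hand, the proof is a short diagram chase with no remaining calculation.
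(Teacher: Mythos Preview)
Your proposal is correct and uses the same key ingredients as the paper: Lemma~\ref{lemma3} for soundness, and for completeness the bridge~\eqref{h2} together with the imported result~\eqref{eq:BP15}. The only difference is in how you close the loop in the $(\Leftarrow)$ direction. You pass from $\CPmem\vdash f(P)=f(Q)$ to $\MSCLe$ via Lemma~\ref{lemma33} and the round-trip identity $\MSCLe\vdash g(f(t))=t$ (recycling the machinery of Theorem~\ref{thm:corr1}), and then appeal to Theorem~\ref{thm:corr1} once more to get back to $\MSCL$. The paper instead stays on the $\MSCL$ side: from $\CPmem\vdash f(P)=f(Q)$ it passes to $\CPmem(\neg,\leftand,\leftor)\vdash f(P)=f(Q)$, then uses Lemma~\ref{lemma1} (which gives $f(P)=P$ in that theory) to obtain $\CPmem(\neg,\leftand,\leftor)\vdash P=Q$, and finally invokes the definition of $\MSCL$ directly. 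The paper's route is slightly shorter and avoids the detour through $g$; your route effectively proves Theorem~\ref{thm:mscl2} first and derives Theorem~\ref{thm:mscl} from it, whereas the paper does the reverse. Either order is fine.
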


\begin{proof}
($\Rightarrow$)
If $\MSCL\vdash P=Q$, then by Theorem~\ref{thm:corr1}, $\MSCLe\vdash P=Q$, and by Lemma~\ref{lemma3}, 
$P=_\memse^{\cal S} Q$.

($\Leftarrow$)
If $P=_\memse^{\cal S} Q$, then by~\eqref{h2}, $f(P)=_\memse^{\cal C} f(Q)$. 
By~\eqref{eq:BP15}, $\CPmem\vdash f(P)=f(Q)$,
and thus $\CPmem(\neg,\leftand,\leftor)\vdash f(P)=f(Q)$,
and thus by Lemma~\ref{lemma1}, $\CPmem(\neg,\leftand,\leftor)\vdash P=Q$. 
By definition of \MSCL\ it follows that $\MSCL\vdash P=Q$.
\end{proof}

Theorem~\ref{thm:corr1} establishes that
\MSCL\ is axiomatized by the equational logic \MSCLe,
and Theorem~\ref{thm:mscl} establishes that \MSCL\ axiomatizes equality of memorizing evaluation 
trees.
Combining these results leads to a final theorem on this matter, which establishes that
``memorizing short-circuit logic'' as a concept is independent of the conditional connective,
with memorizing evaluation trees
at hand as a simple and natural semantics for representing memorizing short-circuit evaluations.
This is fully in line with Fact~\ref{fact:fscl} on ``free short-circuit logic''.

\begin{theorem}
\label{thm:mscl2}
For all $P,Q\in\SP$, $\MSCLe\vdash P=Q~\iff~ P=_\memse^{\cal S} Q$.
\end{theorem}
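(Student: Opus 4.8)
The plan is to obtain Theorem~\ref{thm:mscl2} simply by chaining the two immediately preceding results. Theorem~\ref{thm:corr1} states that for all terms over $\SigSCL$ (in particular for all closed terms $P,Q\in\SP$), we have $\MSCLe\vdash P=Q$ if and only if $\MSCL\vdash P=Q$. Theorem~\ref{thm:mscl} states that for all $P,Q\in\SP$, $\MSCL\vdash P=Q$ if and only if $P=_\memse^{\cal S} Q$. Combining these two biconditionals along their shared middle term $\MSCL\vdash P=Q$ yields exactly the claimed equivalence $\MSCLe\vdash P=Q\iff P=_\memse^{\cal S} Q$. There is no genuine obstacle here, since all the substantive work has already been discharged: the hard direction of Theorem~\ref{thm:corr1} relied on Lemmas~\ref{lemma1}--\ref{lemma33} (the round-trip through the conditional signature via $f$ and $g$), and the $(\Leftarrow)$ direction of Theorem~\ref{thm:mscl} relied on the nontrivial completeness result~\eqref{eq:BP15} from~\cite{BP15} together with the congruence property established in Lemma~\ref{lemma3}.

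The proof I would write is therefore a one-line deduction. For the forward direction, suppose $\MSCLe\vdash P=Q$; by the $(\Leftarrow)$ implication of Theorem~\ref{thm:corr1} this gives $\MSCL\vdash P=Q$, and then the $(\Rightarrow)$ implication of Theorem~\ref{thm:mscl} gives $P=_\memse^{\cal S} Q$. For the converse, suppose $P=_\memse^{\cal S} Q$; by the $(\Leftarrow)$ implication of Theorem~\ref{thm:mscl} we obtain $\MSCL\vdash P=Q$, and then the $(\Rightarrow)$ implication of Theorem~\ref{thm:corr1} gives $\MSCLe\vdash P=Q$.

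Concretely, the proof body reads as follows:
\begin{proof}
For all $P,Q\in\SP$ we have
\[
\MSCLe\vdash P=Q
~\stackrel{\text{Thm~\ref{thm:corr1}}}{\iff}~
\MSCL\vdash P=Q
~\stackrel{\text{Thm~\ref{thm:mscl}}}{\iff}~
P=_\memse^{\cal S} Q,
\]
which is the desired equivalence.
\end{proof}
The only thing worth double-checking is that the quantifier ranges match: Theorem~\ref{thm:corr1} is stated for arbitrary (open) terms over $\SigSCL$, whereas Theorem~\ref{thm:mscl2} concerns only closed terms $P,Q\in\SP$, so the instantiation to closed terms is immediate and no restriction is lost. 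Thus the combined statement records the conceptual payoff flagged in the surrounding text: \MSCLe\ axiomatizes memorizing $se$-congruence directly, so that memorizing short-circuit logic can be understood purely in terms of memorizing evaluation trees, independently of Hoare's conditional connective.
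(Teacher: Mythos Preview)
Your proposal is correct and matches the paper's approach exactly: the paper presents Theorem~\ref{thm:mscl2} without a separate proof, explicitly stating just before it that the result follows by combining Theorem~\ref{thm:corr1} and Theorem~\ref{thm:mscl}. One cosmetic slip: in your prose you have the $(\Rightarrow)/(\Leftarrow)$ labels for Theorem~\ref{thm:corr1} swapped (going from $\MSCLe$ to $\MSCL$ is its $(\Rightarrow)$ direction), though your displayed chain is stated correctly.
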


\section{Static short-circuit logic}
\label{sec:SSCL}
Static short-circuit logic covers the case in which 
the sequential connectives are taken to be commutative.
In this section we first discuss two axiomatizations,
one that is an extension of \MSCLe\ with a commutativity axiom~\eqref{C1}, and 
the one used in \SSCL's definition (Def.~\ref{def:FSCL}).
Then we discuss static evaluation trees and two completeness results.
Finally, we provide four simple \CP-equations that axiomatize static valuation congruence.

\begin{table}
\centering
\hrule
\begin{align}
\fa&=\neg\tr
\tag{\ref{Neg}}
\\
x\leftor y&=\neg(\neg x\leftand\neg y)
\tag{\ref{Or}}
\\[0mm]
\tag{\ref{Tand}}
\tr\leftand x&=x
\\ 
\tag{\ref{MSCL1}}
x\leftand(x\leftor y)&=x
\\[0mm]
\tag{\ref{Mem}}
(x\leftor y)\leftand z&=(\neg x\leftand (y\leftand z))\leftor(x\leftand z)
\\[0mm]
\label{C1}
\tag{Comm}
x\leftand y&=y\leftand x
\end{align}
\hrule
\caption{$\SSCLe$, a set of axioms for $\SSCL$}
\label{tab:SSCL}
\end{table}

\medskip

All axioms in Table~\ref{tab:SSCL} represent common laws for
propositional logic when forgetting the prescribed short-circuit evaluation, except
axiom~\eqref{Mem}.
We name this set of axioms \SSCLe, and first
prove some familiar laws without making use of axioms~\eqref{Neg} 
and~\eqref{Tand}, and thus without using the constants \tr\ and \fa. 

\begin{theorem}
\label{thm:assoc}
The four \SSCLe-axioms~\eqref{Or}, \eqref{MSCL1}, \eqref{Mem}, and~\eqref{C1}
imply idempotence and associativity of ${\leftand}$ and $\leftor$, 
the double negation shift $\neg\neg x=x$ (that is, axiom~\eqref{SCL3}), and  
the equations
\begin{align}
\label{Tdef}
\tag{Tdef}
(x\leftor\neg x)\leftand y&=y,\\
\nonumber
\tag{\ref{D1}}
x\leftand(y\leftor z)
&=(x\leftand y)\leftor(x\leftand z).
\end{align}
Furthermore, if $|A|\geq 2$, these four axioms are independent.
\end{theorem}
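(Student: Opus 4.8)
The plan is to build up, from the four axioms \eqref{Or}, \eqref{MSCL1}, \eqref{Mem} and \eqref{C1}, the familiar lattice-theoretic and Boolean-like structure in stages, while avoiding the constants \tr\ and \fa\ (and hence axioms \eqref{Neg} and \eqref{Tand}) throughout. As a warm-up, commutativity of $\leftor$ is immediate: by \eqref{Or} and \eqref{C1}, $x\leftor y=\neg(\neg x\leftand\neg y)=\neg(\neg y\leftand\neg x)=y\leftor x$. The single most useful lever is the instance of \eqref{Mem} obtained by taking $z:=x$: its left-hand side $(x\leftor y)\leftand x$ collapses to $x$ by \eqref{C1} and the absorption axiom \eqref{MSCL1}, so one obtains the identity $x=(\neg x\leftand(y\leftand x))\leftor(x\leftand x)$, valid for every $y$. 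Specialising $y$ in this identity, together with the dual instance obtained by replacing $x$ by $\neg x$, provides the raw material for the first real step.

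First I would derive idempotence of $\leftand$ (hence of $\leftor$) and the double-negation shift $\neg\neg x=x$ (axiom \eqref{SCL3}), and I expect this bootstrap to be the main obstacle. The difficulty is that the short derivation of idempotence recalled in Fact~\ref{fact:mscl} crucially uses the constant \fa\ (via $x=x\leftand(x\leftor\fa)=x\leftand x$), which is unavailable here, and that the De~Morgan duality furnished by \eqref{Or} is only \emph{half} usable until double negation is in hand; so idempotence, the dual absorption law $x\leftor(x\leftand y)=x$, and $\neg\neg x=x$ are mutually entangled and must be untied \emph{simultaneously} rather than one at a time. Concretely I would attack this by repeatedly specialising the all-$y$ identity above and its $\neg x$-dual and feeding the resulting equations back into \eqref{Mem}; as with Theorems~\ref{thm:extraM} and~\ref{thm:dist}, the precise chain is most reliably produced and verified by \emph{Prover9}. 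Once $\leftand$ is idempotent, \eqref{Or} gives $x\leftor x=\neg(\neg x\leftand\neg x)=\neg\neg x$, so $\neg\neg x=x$ reduces to idempotence of $\leftor$, which in turn follows from dual absorption.

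With idempotence, commutativity and both absorption laws available, $(\SP,\leftand,\leftor)$ is a lattice; adding \eqref{Or} and $\neg\neg x=x$ makes it a De~Morgan lattice, after which associativity of $\leftand$ and $\leftor$ follows (for associativity I would imitate the short \eqref{Mem}-based computation shown after Proposition~\ref{prop:X}, now carried out without the constants). It then remains to derive left-distributivity \eqref{D1} and \eqref{Tdef}. For \eqref{D1} I would apply \eqref{Mem} to $x\leftand(y\leftor z)$ after rewriting $y\leftor z$ via \eqref{Or}, using double negation to restore full De~Morgan duality and collapsing the $\neg$-terms with the absorption and idempotence laws; for \eqref{Tdef} the point is that $x\leftor\neg x$ behaves as a top element, i.e.\ a left and right unit for $\leftand$, which one reads off from \eqref{Mem} with $y:=\neg x$ together with $\neg\neg x=x$.

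Finally, for independence when $|A|\ge 2$, I would exhibit four finite $\SigSCL$-algebras (each interpreting at least two atoms, which is where the hypothesis $|A|\ge 2$ is used, since over a single atom one of the axioms becomes derivable) such that each algebra validates three of the four axioms \eqref{Or}, \eqref{MSCL1}, \eqref{Mem}, \eqref{C1} while refuting the remaining one; these separating models are obtained with \emph{Mace4}~\cite{BirdBrain}. The existence of the four models shows that no axiom is a consequence of the other three, which is exactly the asserted independence.
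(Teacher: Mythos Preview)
Your overall strategy matches the paper's: derive commutativity of $\leftor$, then idempotence of $\leftand$ and $\leftor$ and the double negation shift, then the remaining laws, and establish independence with \emph{Mace4} models interpreting two atoms. The paper likewise relies on \emph{Prover9} for the bootstrap and gives explicit derivations for idempotence, $\neg\neg x=x$, and \eqref{Tdef} that do not use the constants.

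There is, however, a genuine ordering problem in your plan. You propose to obtain associativity \emph{before} \eqref{Tdef}, appealing to the short computation displayed after Proposition~\ref{prop:X}. That computation uses $\fa$ essentially (via \eqref{ar 2}, \eqref{SCL6}, \eqref{SCL8}) and also assumes~\eqref{M3} as an axiom, so ``carrying it out without the constants'' is not a simple matter of syntactic cleanup. The paper avoids this by reversing the order: it proves \eqref{Tdef} \emph{first}, by a direct chain that needs only idempotence, \eqref{Mem}, \eqref{MSCL1}, \eqref{C1} and the auxiliary $x\leftand(x\leftand y)=x\leftand y$, and only \emph{then} derives associativity and \eqref{D1}. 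The point of this ordering is that once \eqref{Tdef} is available, $u\leftor\neg u$ (for a fresh $u$) serves as $\tr$ and its negation as $\fa$, so the full \MSCLe-derivations of associativity and \eqref{D1} from Appendices~\ref{app:mscl} and~\ref{app:LD} can be reused verbatim after this substitution. Your route forfeits precisely that shortcut.

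A related slip: you write that idempotence, commutativity and both absorption laws already make $(\SP,\leftand,\leftor)$ a lattice, ``after which associativity follows''. Associativity is part of the definition of a lattice, not a consequence of the other identities, so this passage is circular as stated; you still owe an independent associativity argument, which is exactly where the ordering issue above bites. Your independence argument is the same as the paper's.
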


\begin{proof}
The mentioned derivabilities follow
with help of the theorem prover \emph{Prover9}, see Appendix~\ref{app:assoc}.
For independence, see the proof of Theorem~\ref{thm:Indep2}. 
\end{proof}

This result is relevant because in \SSCLe\ the constants \tr\ and \fa\ are 
redundant (by equation~\eqref{Tdef} and axiom~\eqref{C1},
$x\leftor \neg x=y\leftor\neg y$). Note that in the setting without these constants,
the duality principle is captured by 
axiom~\eqref{Or} and the double negation shift.

By definition of \SSCLe\ we have the following theorem.

\begin{theorem}
\label{thm:cons2}
$\SSCLe\vdash\MSCLe$.
\end{theorem}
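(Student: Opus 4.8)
The plan is to observe that the theorem is immediate from the way \SSCLe\ is defined, and that it requires nothing beyond comparing the two axiom tables. Concretely, I would place Table~\ref{tab:SSCL} (the axioms of \SSCLe) next to Table~\ref{tab:MSCL} (the axioms of \MSCLe) and check that the first five entries of \SSCLe, namely \eqref{Neg}, \eqref{Or}, \eqref{Tand}, \eqref{MSCL1}, and \eqref{Mem}, coincide verbatim with the five axioms of \MSCLe. Thus \SSCLe\ is obtained from \MSCLe\ purely by adjoining the single commutativity axiom \eqref{C1}, so as axiom sets we have the inclusion $\MSCLe\subseteq\SSCLe$.

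Given this inclusion, each axiom of \MSCLe\ is already an axiom of \SSCLe, hence derivable from \SSCLe\ in one step as an instance of itself (reflexivity of equational derivability). Since every \MSCLe-axiom is \SSCLe-derivable, it follows that $\SSCLe\vdash\MSCLe$, which is exactly the claim.

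There is no genuine obstacle here: the entire content of the proof is the syntactic observation that the five \MSCLe-axioms reappear unchanged in the definition of \SSCLe. It is worth noting that the commutativity axiom \eqref{C1} is not used at all in establishing the containment; it is precisely the extra axiom that will later make \SSCLe\ strictly stronger than \MSCLe, but its presence is harmless for deriving the \MSCLe-axioms. Accordingly, I would record the theorem as an immediate consequence of the definition of \SSCLe, with the one-line justification that \SSCLe\ extends \MSCLe\ by \eqref{C1}.
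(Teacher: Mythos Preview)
Your proposal is correct and matches the paper's approach exactly: the paper states the theorem with the one-line justification ``By definition of \SSCLe\ we have the following theorem'' and gives no further proof, since \SSCLe\ is defined in Table~\ref{tab:SSCL} precisely as \MSCLe\ together with the additional axiom~\eqref{C1}.
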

Furthermore, we have the following result, which implies that the axioms of
\MSCLe\ are independent as well (cf.\ Theorem~\ref{thm:Indep}).
\begin{theorem}
\label{thm:Indep2}
The axioms of $\SSCLe$ are independent.
\end{theorem}
\begin{proof} 
With help of the tool \emph{Mace4}~\cite{BirdBrain}, see Appendix~\ref{app:Indep2}.
\end{proof}

We now return to static short-circuit logic \SSCL\ as defined in Definition~\ref{def:FSCL}.
In Table~\ref{tab:stat}, the \CP-axiom
\begin{align}
\label{CPstat}
\tag{CPs}
\fa\lef x\rig\fa
&= \fa
\end{align}
is added to \CPmem\ and the resulting set of axioms is denoted $\CPstat$. 
This set of axioms stems from~\cite{BPS13,BP12a}.
First, we formulate the analogue of Lemma~\ref{lemma2} and
establish a correspondence result for \SSCLe\ and \SSCL.

\begin{table}
\centering
\rule{1\textwidth}{.4pt}
\begin{align*}
\tag{\ref{cp1}} 
x \lef \tr \rig y &= x\\
\tag{\ref{cp2}}
x \lef \fa \rig y &= y\\
\tag{\ref{cp3}}
\tr \lef x \rig \fa  &= x\\
\tag{\ref{cp4}}
\qquad
    x \lef (y \lef z \rig u)\rig v &= 
	(x \lef y \rig v) \lef z \rig (x \lef u \rig v)
\\
\tag{\ref{CPmem}}
x\lef y\rig(z\lef u\rig(v\lef y\rig w))
&= x\lef y\rig(z\lef u\rig w)\\
\tag{\ref{CPstat}}
\fa\lef x\rig\fa
&= \fa
\end{align*}
\hrule
\caption{$\CPstat$, the set of \CP-axioms used in \SSCL's definition (Def.~\ref{def:SCL})}
\label{tab:stat}
\end{table}

\begin{lemma}
\label{lemma2s}
For all $s,t\in\TCP$, $\CPstat(\neg,\leftand,\leftor)\vdash s=t~\Rightarrow~\CPstat\vdash s=t$.
\end{lemma}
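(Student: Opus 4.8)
The plan is to mirror the proof of Lemma~\ref{lemma2} essentially verbatim, since $\CPstat$ differs from $\CPmem$ only by the additional axiom~\eqref{CPstat}, which is itself a $\SigCP$-equation and involves none of the sequential connectives. The statement is a conservativity property of a definitional extension: the equations~\eqref{defneg}, \eqref{defand}, and~\eqref{defor} merely \emph{define} $\neg$, $\leftand$, and $\leftor$ in terms of the conditional, so adjoining them cannot prove any new $\SigCP$-equation.

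First I would take an arbitrary equational derivation witnessing $\CPstat(\neg,\leftand,\leftor)\vdash s=t$. The theory $\CPstat(\neg,\leftand,\leftor)$ is $\CPstat$ together with the three defining equations~\eqref{defneg}-\eqref{defor}. The key step is to eliminate every use of these defining equations by applying, uniformly throughout the whole derivation, the substitution that replaces each subterm of the form $\neg r$ by $\fa\lef r\rig\tr$, each $r_1\leftand r_2$ by $r_2\lef r_1\rig\fa$, and each $r_1\leftor r_2$ by $\tr\lef r_1\rig r_2$.

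Under this substitution, every application of one of \eqref{defneg}-\eqref{defor} collapses to a trivial instance of reflexivity (for example, an application of $\neg r=\fa\lef r\rig\tr$ becomes $\fa\lef r\rig\tr=\fa\lef r\rig\tr$), while every application of a genuine $\CPstat$-axiom remains a valid application of that same axiom, because all axioms of $\CPstat$—including~\eqref{CPstat}—are $\SigCP$-equations and are therefore untouched by the substitution. Since $s$ and $t$ lie in $\TCP$ and hence already contain no occurrences of $\neg,\leftand,\leftor$, their images under the substitution are $s$ and $t$ themselves. The result is a legitimate $\CPstat$-derivation of $s=t$, which is exactly what is required.

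The argument is essentially mechanical, and I do not expect any real obstacle. The only point worth verifying is that adjoining the extra axiom~\eqref{CPstat} to $\CPmem$ does not disturb the elimination step; this is immediate, precisely because~\eqref{CPstat} contains no sequential connectives and is thus invariant under the substitution, so the reasoning of Lemma~\ref{lemma2} carries over with $\CPmem$ replaced by $\CPstat$ throughout.
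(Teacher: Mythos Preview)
Your proposal is correct and is essentially the same as the paper's own proof, which simply reads ``See the proof of Lemma~\ref{lemma2}.'' You have spelled out in detail precisely the argument that the paper leaves implicit by that cross-reference.
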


\begin{proof}
See the proof of Lemma~\ref{lemma2}.
\end{proof}

\begin{theorem}
\label{thm:corr2}
For all terms $s,t$ over $\SigSCL$, $\SSCLe\vdash s=t~\iff~\SSCL\vdash s=t$.
\end{theorem}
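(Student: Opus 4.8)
The statement to prove is Theorem~\ref{thm:corr2}: for all terms $s,t$ over $\SigSCL$, $\SSCLe\vdash s=t \iff \SSCL\vdash s=t$. The plan is to mirror exactly the structure of the proof of Theorem~\ref{thm:corr1}, replacing \MSCLe\ by \SSCLe, \CPmem\ by \CPstat, and \MSCL\ by \SSCL\ throughout. The two directions are handled separately, and each reuses the translation functions $f$ and $g$ from Definition~\ref{def:functions} together with the analogue lemmas already available in the \CPstat-setting.

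For the forward direction ($\Rightarrow$), I would show that every axiom of \SSCLe\ is derivable from \SSCL. Since \SSCL\ is defined by exporting $\{\tr,\neg,\leftand\}$ from $\CPstat\cup\{\eqref{defneg},\eqref{defand}\}$, and $\CPstat\supseteq\CPmem$, all the derivations already carried out in the $(\Rightarrow)$ part of Theorem~\ref{thm:corr1} go through verbatim for axioms~\eqref{Neg}, \eqref{Or}, \eqref{Tand}, \eqref{MSCL1}, and~\eqref{Mem}. The only new axiom is commutativity~\eqref{C1}, $x\leftand y=y\leftand x$; the task is to derive it from \SSCL\ using the extra axiom~\eqref{CPstat}, i.e.\ $\fa\lef x\rig\fa=\fa$. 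Translating via~\eqref{defand}, one computes $x\leftand y=y\lef x\rig\fa$ and $y\leftand x=x\lef y\rig\fa$, and the goal is to prove these two conditional terms equal in \CPstat; I expect this to follow by a short calculation combining~\eqref{cp4}, \eqref{CPmem}, and~\eqref{CPstat}, and this is the one genuinely new computation in this direction.

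For the backward direction ($\Leftarrow$), I would reproduce the chain of implications from Theorem~\ref{thm:corr1}, substituting the static analogues:
\begin{align*}
\SSCL\vdash s=t
&~\Rightarrow~ \CPstat(\neg,\leftand,\leftor)\vdash s=t
&&\text{by definition}\\
&~\Rightarrow~ \CPstat(\neg,\leftand,\leftor)\vdash f(s)=f(t)
&&\text{by Lemma~\ref{lemma1}}\\
&~\Rightarrow~ \CPstat\vdash f(s)=f(t)
&&\text{by Lemma~\ref{lemma2s}}\\
&~\Rightarrow~ \SSCLe\vdash g(f(s))=g(f(t)).
&&\text{(static analogue of Lemma~\ref{lemma33})}
\end{align*}
Here Lemma~\ref{lemma1} applies unchanged since $\CPstat(\neg,\leftand,\leftor)$ extends $\CPmem(\neg,\leftand,\leftor)$, and Lemma~\ref{lemma2s} is the static counterpart of Lemma~\ref{lemma2} already stated. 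The last implication requires a static analogue of Lemma~\ref{lemma33}: that the $g$-translation of each \CPstat-axiom is derivable in \SSCLe. All the \CPmem-axiom translations were already shown derivable in \MSCLe, hence in \SSCLe\ by Theorem~\ref{thm:cons2}; what remains is to derive the $g$-image of the one additional axiom~\eqref{CPstat}, namely $g(\fa\lef x\rig\fa)=(x\leftand\fa)\leftor(\neg x\leftand\fa)$, and show \SSCLe\ proves it equal to $g(\fa)=\fa$. Finally, to close the argument I would invoke $\SSCLe\vdash g(f(t))=t$ for all $t\in\TSCL$, which is the same structural induction used at the end of Theorem~\ref{thm:corr1} and goes through verbatim since every step there used only \MSCLe-consequences, now available in \SSCLe.

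The main obstacle is the small amount of genuinely new work isolated in the two directions: deriving commutativity~\eqref{C1} inside \CPstat\ (forward) and deriving $g(\eqref{CPstat})=\fa$ inside \SSCLe\ (backward). Both hinge on the interaction of the static axiom $\fa\lef x\rig\fa=\fa$ with the memorizing machinery; I expect the second to reduce quickly using the \SSCLe-consequence $x\leftand\fa=\neg x\leftand\fa$ (the translation of~\eqref{SCL8}, available via Theorem~\ref{thm:cons2} and Theorem~\ref{thm:extraM}) so that $(x\leftand\fa)\leftor(\neg x\leftand\fa)=(x\leftand\fa)\leftor(x\leftand\fa)=x\leftand\fa$, after which one argues $x\leftand\fa=\fa$ holds in \SSCLe. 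Everything else is bookkeeping that transfers directly from the \MSCL\ case.
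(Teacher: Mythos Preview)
Your proposal is correct and follows essentially the same approach as the paper: reduce $(\Rightarrow)$ to deriving \eqref{C1} in \SSCL\ (the paper does this via the auxiliary identity $x\lef y\rig x=x$, obtained from \eqref{CPstat} and \eqref{cp4}), and handle $(\Leftarrow)$ by the same chain of implications together with a static analogue of Lemma~\ref{lemma33}, whose only new obligation is $g(\fa\lef x\rig\fa)=\fa$. The paper dispatches this last equation slightly more directly than you suggest---it applies \eqref{C1} immediately to get $(\fa\leftand x)\leftor(\fa\leftand\neg x)$ and then uses \eqref{SCL6} and \eqref{SCL5}---whereas your route via \eqref{SCL8} and idempotence still ultimately needs \eqref{C1} to conclude $x\leftand\fa=\fa$; both work.
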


\begin{proof}
($\Rightarrow$)
It suffices to derive the axioms of \SSCLe\ from \SSCL, so by the proof of Theorem~\ref{thm:corr1}
we have to derive axiom~\eqref{C1}. First derive
\begin{align}
x\stackrel{\eqref{cp2}}=z\lef\fa\rig x\stackrel{\eqref{CPstat}}=z\lef(\fa\lef y\rig\fa)\rig x
\stackrel{\eqref{cp4}}=(z\lef\fa\rig x)\lef y\rig(z\lef\fa\rig x)
\stackrel{\eqref{cp2}}=x\lef y\rig x.
\label{eq:resu2}
\end{align}
Hence
\begin{align*}
x\leftand y
&=y\lef x\rig\fa\\
&=y\lef (x\lef y\rig x)\rig\fa
&&\text{by~\eqref{eq:resu2}}\\
&=((\tr\lef y\rig\fa)\lef x\rig\fa)\lef y\rig((\tr\lef y\rig\fa)\lef x\rig\fa)
&&\text{by~\eqref{cp4}, \eqref{cp1}}\\
&=(\tr\lef x\rig\fa)\lef y\rig(\fa\lef x\rig\fa)
&&\text{by~\eqref{CPmem3}, \eqref{CPmem2}}\\
&=x\lef y\rig\fa
&&\text{by~\eqref{cp1}, \eqref{CPstat}}\\
&=y\leftand x.
\end{align*}

\noindent
($\Leftarrow$)
Consider the functions $f$ and $g$ defined in Definition~\ref{def:functions}.
We extend Lemma~\ref{lemma33} to \CPstat:
\begin{equation}
\label{3stat}
\text{For all $s,t\in\TCP$, $\CPstat\vdash s=t~\Rightarrow~\SSCLe\vdash g(s)=g(t)$.}
\end{equation}
The additional proof obligation  is to show that the $g$-translation of the 
axiom~\eqref{CPstat} is derivable in \SSCLe\ (cf.\ Lemma~\ref{lemma33}):
\begin{align*}
g(\fa\lef x\rig\fa)
&=(x\leftand\fa)\leftor(\neg x\leftand\fa)\\
&=(\fa\leftand x)\leftor(\fa\leftand \neg x)
&&\text{by~\eqref{C1}}\\
&=\fa
&&\text{by~\eqref{SCL6}, \eqref{SCL5} (and Theorems~\ref{thm:cons2}, \ref{thm:extraM})}\\
&=g(\fa).
\end{align*} 
We adapt the $(\Leftarrow)$-part of the proof of Theorem~\ref{thm:corr1} to \SSCL. 
\begin{align*}
\SSCL\vdash s=t
&~\Rightarrow~ \CPstat(\neg,\leftand,\leftor)\vdash s=t
&&\text{by definition}\\
&~\Rightarrow~ \CPstat(\neg,\leftand,\leftor)\vdash f(s)=f(t)
&&\text{by Lemma~\ref{lemma1}}\\
&~\Rightarrow~ \CPstat\vdash f(s)=f(t)
&&\text{by Lemma~\ref{lemma2s}}\\
&~\Rightarrow~ \SSCLe\vdash g(f(s))=g(f(t)).
&&\text{by~\eqref{3stat}}
\end{align*}
Hence, it suffices to show for all $t\in\TSCL$, $\SSCLe\vdash g(f(t))=t$, and
by $\SSCLe\vdash\MSCLe$ (Thm.\ref{thm:cons2})
this follows as in the $(\Leftarrow)$-part of the proof of Theorem~\ref{thm:corr1}.
\end{proof}

\medskip

In~\cite{BP15}, static evaluation trees for conditional propositions
are defined with help of memorizing evaluation trees.
The crux is that given a conditonal proposition $P$ and a finite set of atoms
$A'$ that contains all atoms in $P$'s evaluation, the evaluation tree of $P$ is defined relative
to an ordering of $A'$. We denote such an ordering as a string of length $|A'|$
that covers $A'$, for example, the orderings of $A'=\{a,b\}$ are denoted by $ab$ and $ba$.
We write 
\[\Au\]
for the set of strings representing all such orderings, and 
\(\SPf{\sigma}\)
for the set of sequential propositions with atoms in $\sigma\in\Au$.
Before defining the static evaluation function, we give an example.

\begin{example}
\label{ex:st} Let $P=\neg a\leftor (b\leftand a)$. We depict 
$se(P)$ at the left-hand side, and
two static evaluation trees for $P$. 
\[
\begin{array}{lll}
\begin{array}{l}
\begin{tikzpicture}[%
level distance=7.5mm,
level 1/.style={sibling distance=30mm},
level 2/.style={sibling distance=15mm},
level 3/.style={sibling distance=7.5mm}
]
\node (a) {$a$}
  child {node (b1) {$b$}
    child {node (c1) {$a$}
      child {node (d1) {$\tr$}} 
      child {node (d2) {$\fa$}}
    }
    child {node (c2) {$\fa$}
    }
  }
  child {node (b2) {$\tr$}
  };
\end{tikzpicture}
\end{array}
&\quad
\begin{array}{l}
\quad
\begin{tikzpicture}[%
level distance=7.5mm,
level 1/.style={sibling distance=15mm},
level 2/.style={sibling distance=7.5mm},
level 3/.style={sibling distance=7.5mm}
]
\node (a) {$a$}
  child {node (b1) {$b$}
    child {node (c1) {$\tr$}
    }
    child {node (c2) {$\fa$}
    }
  }
  child {node (b2) {$b$}
    child {node (c3) {$\tr$}
    }
    child {node (c4) {$\tr$}
    }
  };
\end{tikzpicture}
\\[8mm]
\end{array}
&\quad
\begin{array}{l}
\qquad
\begin{tikzpicture}[%
level distance=7.5mm,
level 1/.style={sibling distance=15mm},
level 2/.style={sibling distance=7.5mm},
level 3/.style={sibling distance=7.5mm}
]
\node (a) {$b$}
  child {node (b1) {$a$}
    child {node (c1) {$\tr$}
    }
    child {node (c2) {$\tr$}
    }
  }
  child {node (b2) {$a$}
    child {node (c3) {$\fa$}
    }
    child {node (c4) {$\tr$}
    }
  };
\end{tikzpicture}
\\[8mm]
\end{array}
\end{array}
\]
The two static evaluation trees correspond to the different ways in which one 
can present a (minimal) truth table for $P$, that is, the different possible
orderings of the valuation values of the atoms occurring in $P$:
\[
\renewcommand*{\arraystretch}{1.2}
\begin{array}{ll|c}
a&b&~\neg a\leftor (b\leftand a)~\\\hline
\tr&\tr~&\tr\\
\tr&\fa&\fa\\
\fa&\tr&\tr\\
\fa&\fa&\tr
\end{array}
\hspace{1cm}
\begin{array}{ll|c}
b&a&~\neg a\leftor (b\leftand a)~\\\hline
\tr&\tr~&\tr\\
\tr&\fa&\tr\\
\fa&\tr&\fa\\
\fa&\fa&\tr
\end{array}
\]

The idea is that each proposition with atoms in $\{a,b\}$ has a static evaluation tree
that is either of the form of the middle tree, or of the tree on the right,
depending on which $\sigma\in \Au$ is chosen,
and that the leaves represent the appropriate evaluation results. E.g., the leaves
in the static evaluation trees for \fa\ and for $\fa\leftand P$ are all \fa. 
\hfill
\qedex
\end{example}

Because static evaluation trees do not necessary reflect the order of atomic evaluations, we
do not take the trouble to define these directly for \SP, but reuse their definition for 
\PS, taken from~\cite[Def.6.13]{BP15}.\footnote{%
  We come back to this point in Section~\ref{sec:Conc}.}
For $\sigma\in\Au$, let
$\PSf{\sigma}$ be the set of closed terms over $\SigCP$ with atoms in $\sigma$.

\begin{definition}
\label{def:6.13}
Let $\sigma\in\Au$. The unary \textbf{static evaluation function} 
\[\stse_\sigma^{\cal C}:\PSf{\sigma}\to \T\]
yields \textbf{static evaluation trees} and is defined as follows:
\[\stse_\sigma^{\cal C}(P)=\memse^{\cal C}(\tr\lef E_\sigma\rig P),\]
with $\memse^{\cal C}$ as in Definition~\ref{def:cal},
and $E_\sigma$ defined by $E_{a\rho}=E_\rho\lef a\rig E_\rho$ if $\sigma=a\rho$ for $a\in A$,
and $E_\epsilon=\fa$ with $\epsilon$ the empty string.
\end{definition}

As an example, the static evalution tree
$\stse_{ab}^{\cal C}(\fa)=\stse_{ab}^{\cal C}(\fa\lef a\rig\fa)=
\stse_{ab}^{\cal C}(\fa\lef b\rig\fa)$ is depicted at the left-hand side, and
$\stse_{ba}^{\cal C}(\fa)=\stse_{ba}^{\cal C}(\fa\lef a\rig\fa)=
\stse_{ba}^{\cal C}(\fa\lef b\rig\fa)$ is the other tree.
\[
\begin{array}{lll}
\begin{array}{l}
\qquad
\begin{tikzpicture}[%
level distance=7.5mm,
level 1/.style={sibling distance=30mm},
level 2/.style={sibling distance=15mm},
level 3/.style={sibling distance=7.5mm}
]
\node (a) {$a$}
  child {node (b1) {$b$}
    child {node (c1) {$\fa$}
    }
    child {node (c2) {$\fa$}
    }
  }
  child {node (b2) {$b$}
    child {node (c3) {$\fa$}
    }
    child {node (c4) {$\fa$}
    }
  };
\end{tikzpicture}
\end{array}
&\qquad
\begin{array}{l}
\quad
\begin{tikzpicture}[%
level distance=7.5mm,
level 1/.style={sibling distance=30mm},
level 2/.style={sibling distance=15mm},
level 3/.style={sibling distance=7.5mm}
]
\node (a) {$b$}
  child {node (b1) {$a$}
    child {node (c1) {$\fa$}
    }
    child {node (c2) {$\fa$}
    }
  }
  child {node (b2) {$a$}
    child {node (c3) {$\fa$}
    }
    child {node (c4) {$\fa$}
    }
  };
\end{tikzpicture}
\end{array}
\end{array}
\]

Static evaluation trees are perfect binary trees, where each level characterises 
the evaluation of a single atom. 

\begin{definition}
\label{def:SSCL}
Let $\sigma\in A^u$. The binary relation $=_{\stse,\sigma}^{\cal C}$ on $\PSf{\sigma}$,
\textbf{static valuation congruence over $\sigma$}, 
is defined by
\[P=_{\stse,\sigma}^{\cal C} Q ~\iff~ \stse_\sigma^{\cal C}(P)=\stse_\sigma^{\cal C}(Q).\]
\end{definition}

This definition stems from~\cite[Def.6.14]{BP15}.
In~\cite[Thm.6.16]{BP15} we prove this completeness result:
\begin{equation}
\label{eq:BP152}
\text{Let $\sigma\in A^u$.
For all $P,Q\in\PSf{\sigma}$, $\CPstat\vdash P=Q~\iff~P=_{\stse,\sigma}^{\cal C} Q$.}
\end{equation}
This result depends on a non-trivial proof of the fact that $=_{\stse,\sigma}^{\cal C}$ 
is a congruence on $\PSf{\sigma}$. 
We define the following variants of static evaluation trees and static 
valuation congruence for $\SPf{\sigma}$.

\begin{definition}
\label{def:SSCL2}
Let $\sigma\in A^u$.
The unary \textbf{static evaluation function} $\stse_\sigma:\SPf{\sigma}\to\T$
is defined by $\stse_\sigma(P)=\stse_\sigma^{\cal C}(f(P))$, where $\stse_\sigma^{\cal C}$ 
and $f$ are defined in Definitions~\ref{def:6.13} and~\ref{def:functions}.

The binary relation $=_{\stse,\sigma}^{\cal S}$ on $\SPf{\sigma}$,
\textbf{static $se$-congruence over $\sigma$}, 
is defined on $\SPf{\sigma}$ by
\[P=_{\stse,\sigma}^{\cal S} Q ~\iff~ f(P)=_{\stse,\sigma}^{\cal C} f(Q).\]
\end{definition}

Hence, the two trees in the example above are also the static evaluation trees  
$\stse_{ab}(\fa)=\stse_{ab}(a\leftand\fa)=\stse_{ab}(b\leftand\fa)$ and
$\stse_{ba}(\fa)=\stse_{ba}(a\leftand\fa)=\stse_{ba}(b\leftand\fa)$, respectively. 

\begin{theorem}
\label{thm:sscl}
Let $\sigma\in A^u$.
For all $P,Q\in\SPf{\sigma}$, $\SSCL\vdash P=Q~\iff~ P=_{\stse,\sigma}^{\cal S} Q$.
\end{theorem}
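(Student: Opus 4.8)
The plan is to prove Theorem~\ref{thm:sscl} by reducing it, via the translation functions $f$ and $g$, to the already-established completeness result~\eqref{eq:BP152} for the conditional setting, exactly mirroring the structure used for the non-commutative case in Theorem~\ref{thm:mscl}. The two directions will be handled separately, and both will pivot on moving between $\SPf{\sigma}$ and $\PSf{\sigma}$ through $f$. Note that $\stse_{\sigma}$ and $=_{\stse,\sigma}^{\cal S}$ were defined precisely so as to make this reduction clean: by Definition~\ref{def:SSCL2}, $P=_{\stse,\sigma}^{\cal S} Q$ holds if and only if $f(P)=_{\stse,\sigma}^{\cal C} f(Q)$.

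For the ($\Leftarrow$) direction, I would start from $P=_{\stse,\sigma}^{\cal S} Q$, which by definition gives $f(P)=_{\stse,\sigma}^{\cal C} f(Q)$. Applying the completeness result~\eqref{eq:BP152} yields $\CPstat\vdash f(P)=f(Q)$, hence also $\CPstat(\neg,\leftand,\leftor)\vdash f(P)=f(Q)$, and then by Lemma~\ref{lemma1} (which applies verbatim, since $\CPstat(\neg,\leftand,\leftor)$ extends $\CPmem(\neg,\leftand,\leftor)$ and therefore still derives $f(t)=t$) we obtain $\CPstat(\neg,\leftand,\leftor)\vdash P=Q$. By the definition of \SSCL\ as the $\{\tr,\neg,\leftand\}$-export of $\CPstat$ extended with~\eqref{defneg},\eqref{defand}, it follows that $\SSCL\vdash P=Q$. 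This is the same chain used in the ($\Leftarrow$) part of Theorem~\ref{thm:mscl}, with $\CPmem$ replaced by $\CPstat$.

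For the ($\Rightarrow$) direction, I would assume $\SSCL\vdash P=Q$. By Theorem~\ref{thm:corr2} this gives $\SSCLe\vdash P=Q$, and so it suffices to show that $\SSCLe\vdash P=Q$ implies $P=_{\stse,\sigma}^{\cal S} Q$. The cleanest route is to verify, as in Lemma~\ref{lemma3}, that $=_{\stse,\sigma}^{\cal S}$ is a congruence on $\SPf{\sigma}$ and that each closed instance of each \SSCLe-axiom respects it; then soundness of equational derivation closes the argument. The congruence property transfers from $=_{\stse,\sigma}^{\cal C}$ on $\PSf{\sigma}$ (which holds by~\cite[Thm.6.16]{BP15}) through $f$, using that $f$ commutes appropriately with the connectives, just as in Lemma~\ref{lemma3}. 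For the axioms, the four \MSCLe-axioms~\eqref{Neg}--\eqref{Mem} are handled by the fact that $\CPstat\supseteq\CPmem$ together with the reductions already recorded for Lemma~\ref{lemma3}; the only genuinely new obligation is the commutativity axiom~\eqref{C1}, and for this one checks that $f(P\leftand Q)=f(Q)\lef f(P)\rig\fa$ and $f(Q\leftand P)=f(P)\lef f(Q)\rig\fa$ have equal static evaluation trees over $\sigma$, i.e.\ $f(P\leftand Q)=_{\stse,\sigma}^{\cal C} f(Q\leftand P)$—which is exactly the content of $\CPstat\vdash f(P\leftand Q)=f(Q\leftand P)$ via~\eqref{eq:BP152} and the \CPstat-derivation of commutativity in the proof of Theorem~\ref{thm:corr2}.

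The main obstacle I anticipate is the bookkeeping around the fixed ordering $\sigma$: unlike the global congruences $=_\se$ and $=_\memse^{\cal S}$, the relation $=_{\stse,\sigma}^{\cal S}$ is indexed by $\sigma\in\Au$ and only defined on $\SPf{\sigma}$, so I must ensure throughout that the terms involved (including intermediate terms in axiom instances and in the images under $f$) all have their atoms confined to $\sigma$, and that the congruence and soundness arguments stay within $\PSf{\sigma}$. Provided the atoms of $P$ and $Q$ lie in $\sigma$, substitution instances of the \SSCLe-axioms introduce no new atoms, so this confinement is maintained; confirming this closure is the one place where care beyond the $\MSCL$-case is needed, but it is routine rather than deep.
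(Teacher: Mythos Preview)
Your proposal is correct, and your $(\Leftarrow)$ direction coincides with the paper's argument. For $(\Rightarrow)$, however, the paper takes a more direct route than you do: it never passes through $\SSCLe$ or Theorem~\ref{thm:corr2} at all. Instead, the paper proves the whole statement as a single chain of equivalences. Starting from Lemma~\ref{lemma1} (which gives $\CPstat(\neg,\leftand,\leftor)\vdash R=f(R)$ for all $R\in\SPf{\sigma}$), one has
\[
\SSCL\vdash P=Q
\;\Longleftrightarrow\;
\CPstat(\neg,\leftand,\leftor)\vdash P=Q
\;\Longleftrightarrow\;
\CPstat(\neg,\leftand,\leftor)\vdash f(P)=f(Q)
\;\Longleftrightarrow\;
\CPstat\vdash f(P)=f(Q),
\]
where the last $\Rightarrow$ is Lemma~\ref{lemma2s}; then~\eqref{eq:BP152} and Definition~\ref{def:SSCL2} finish the job. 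Your detour via $\SSCLe$ and a Lemma~\ref{lemma3}-style soundness check is valid, but it costs you the extra work of verifying congruence of $=_{\stse,\sigma}^{\cal S}$ and soundness of each \SSCLe-axiom (including the separate treatment of~\eqref{C1}), whereas the paper's argument offloads all of that to the single cited result~\eqref{eq:BP152} and the purely syntactic Lemmas~\ref{lemma1} and~\ref{lemma2s}.
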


\begin{proof}
By Lemma~\ref{lemma1},  it follows that 
for all $R\in\SPf{\sigma}$,
$\CPstat\cup\{\eqref{defneg},\eqref{defand},\eqref{defor}\}\vdash R=f(R)$.
Hence, $\SSCL\vdash P=Q\Longleftrightarrow\CPstat\cup\{\eqref{defneg},\eqref{defand},\eqref{defor}\}\vdash P=Q 
\iff\CPstat\cup\{\eqref{defneg},\eqref{defand},\eqref{defor}\}\vdash f(P)=f(Q)
\iff\CPstat\vdash f(P)=f(Q)$, where the last implication $\Rightarrow$ follows from Lemma~\ref{lemma2s}.
By~\eqref{eq:BP152}, the latter derivability holds if and only if
$f(P)=_\memse^{\cal C} f(Q)$, that is, $P=_\memse^{\cal S} Q$.
\end{proof}

It is cumbersome, but not difficult to define static evaluation trees directly from
memorizing evaluation trees: 
adapt Definition~\ref{def:SSCL2} by defining 
$D_{a\rho}=(a\leftand\neg a)\leftor D_\rho$, $D_\epsilon=\fa$,
$\stse_\sigma(P)=\memse(D_\sigma\leftor P)$, and
$P=_{\stse,\sigma}^{\cal S} Q \Longleftrightarrow \stse_\sigma(P)=\stse_\sigma(Q)$.
This defines exactly the same static evaluation trees and relation $=_{\stse,\sigma}^{\cal S}$, 
and thus provides
a semantics for static short-circuit evaluations without use of the conditional connective.
In this case, Theorem~\ref{thm:sscl} can be proved in a similar way as Theorem~\ref{thm:mscl}
(which would then require the analogue of Lemma~\ref{lemma3}).

By Theorem~\ref{thm:corr2},
static short-circuit logic (\SSCL) is axiomatized by the equational logic \SSCLe.
By Theorem~\ref{thm:sscl}, \SSCL\ axiomatizes equality of static evaluation trees.
Thus, ``static short-circuit logic'' as a concept is independent of the conditional connective,
and leads to the following completeness theorem.

\begin{theorem}
\label{thm:sscl2}
Let $\sigma\in\Au$.
For all $P,Q\in\SPf{\sigma}$, $\SSCLe\vdash P=Q~\iff~ P=_{\memse,\sigma}^{\cal S} Q$.
\end{theorem}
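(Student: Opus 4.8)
The plan is to read this theorem as an immediate corollary of the two results already assembled for \SSCL, namely Theorem~\ref{thm:corr2} (which identifies the equational theory of \SSCLe\ with that of \SSCL) and Theorem~\ref{thm:sscl} (which identifies the theory of \SSCL\ with equality of static evaluation trees), chaining them through the intermediate logic \SSCL. This is exactly the pattern by which Theorem~\ref{thm:mscl2} follows from Theorems~\ref{thm:corr1} and~\ref{thm:mscl}, now transposed to the static setting. Concretely, I would fix $\sigma\in\Au$ and $P,Q\in\SPf{\sigma}$ and then run two biconditionals in sequence.

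First I would invoke Theorem~\ref{thm:corr2}. That theorem is stated for all terms over $\SigSCL$, so it applies in particular to the closed propositions $P,Q\in\SPf{\sigma}\subseteq\SP$, giving
\[\SSCLe\vdash P=Q\iff\SSCL\vdash P=Q.\]
Then I would invoke Theorem~\ref{thm:sscl}, which for $P,Q\in\SPf{\sigma}$ yields
\[\SSCL\vdash P=Q\iff P=_{\stse,\sigma}^{\cal S} Q,\]
where $=_{\stse,\sigma}^{\cal S}$ is the static $se$-congruence of Definition~\ref{def:SSCL2}. Transitivity of $\iff$ then gives $\SSCLe\vdash P=Q\iff P=_{\stse,\sigma}^{\cal S} Q$, which is the claim.

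Since both ingredients are already available, there is no genuine obstacle here; the only point deserving care is the matching of the domains of quantification. Theorem~\ref{thm:corr2} ranges over arbitrary (possibly open) terms over $\SigSCL$, whereas the static $se$-congruence is defined only on the closed propositions of $\SPf{\sigma}$, so I would explicitly note that specialising Theorem~\ref{thm:corr2} to members of $\SPf{\sigma}$ is legitimate because these are a particular case of terms over $\SigSCL$. I would also flag that the relation written $=_{\memse,\sigma}^{\cal S}$ in the statement is to be read as the static $se$-congruence $=_{\stse,\sigma}^{\cal S}$ of Definition~\ref{def:SSCL2}, since no separate relation $=_{\memse,\sigma}^{\cal S}$ has been introduced.
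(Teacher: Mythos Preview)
Your proposal is correct and matches the paper's own reasoning: the theorem is stated immediately after the sentence ``By Theorem~\ref{thm:corr2}, \ldots\ By Theorem~\ref{thm:sscl}, \ldots'' and is meant to be read as the direct combination of those two results, exactly as you chain them. Your remark that $=_{\memse,\sigma}^{\cal S}$ must be read as the static $se$-congruence $=_{\stse,\sigma}^{\cal S}$ of Definition~\ref{def:SSCL2} is also apt, as this is evidently a typographical slip in the statement.
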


Observe that this is again fully in line with Fact~\ref{fact:fscl} on ``free short-circuit logic''
and Theorem~\ref{thm:mscl2} on ``memorizing short-circuit logic''.

\begin{table}
\centering
\hrule
\begin{align*}
\tag{\ref{cp1}}
x \lef \tr \rig y &= x\\
\tag{\ref{cp2}}
x \lef \fa \rig y &= y
\\ 
\label{CPcomb}\tag{CP3s}
\qquad	(x\lef y\rig z)\lef y\rig \fa
&=y\lef x\rig\fa
\\ 
\tag{\ref{cp4}}
x \lef (y \lef z \rig u)\rig v &= 
(x \lef y \rig v) \lef z \rig (x \lef u \rig v)
\end{align*}
\hrule
\caption{An alternative set of \CP-axioms for defining \SSCL}
\label{tab:stat2}
\end{table}

We conclude this section with a few words on the definition of 
static short-circuit logic (Def.~\ref{def:SCL}).
In Table~\ref{tab:stat2} we provide an alternative set of axioms for defining \SSCL,
thus for defining static valuation congruence.
This axiomatization is independent (which easily follows with \emph{Mace4}~\cite{BirdBrain}), 
but is not a simple extension of \CP\ or \CPmem.
Note that the axiom~\eqref{CPcomb} with $y=\tr$ implies~\eqref{cp3},
and with $y=\fa$ the axiom~\eqref{CPstat}.
A proof of one of the axioms~\eqref{CPmem1} or~\eqref{CPmem3} by~\emph{Prover9}~\cite{BirdBrain} is relatively simple (with the option kbo);
for the first one, a convenient intermediate result is
\[\texttt{f(f(f(x,y,z),u,v),y,0)=f(f(x,u,v),y,0)},\]
that is,
\[((x\lef y\rig z)\lef u\rig v)\lef y\rig\fa=(x\lef u\rig v)\lef y\rig\fa,\]
and adding this as a fifth axiom yields a comprehensible proof of~\eqref{CPmem1}.

However, finding a more simple axiomatization of static valuation congruence is not a 
purpose of this paper: the axiomatization $\CPstat$ in Table~\ref{tab:stat} is 
sufficiently simple and expresses the fundamental intuitions in an appropriate way.
Reasons to present the axiomatization in Table~\ref{tab:stat2}
are its independence (contrary to \CPstat, see below) and, of course,
its striking simplicity (cf.~\cite{Hoa85}).

\begin{proposition}
\label{thm:stat}
$\CPstat \setminus\{\eqref{cp1}\}\vdash\eqref{cp1}$, and
the axioms of $\CPstat \setminus\{\eqref{cp1}\}$ are independent.
\end{proposition}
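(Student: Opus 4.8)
The statement has two parts: the derivability $\CPstat\setminus\{\eqref{cp1}\}\vdash\eqref{cp1}$ and the mutual independence of the five remaining axioms \eqref{cp2}, \eqref{cp3}, \eqref{cp4}, \eqref{CPmem}, \eqref{CPstat}. By Birkhoff's theorem the derivability claim amounts to finding a \emph{finite} equational derivation of \eqref{cp1} from those five, and my plan is to first stockpile a few helper identities that are themselves derivable \emph{without} using \eqref{cp1}, and then connect them. Three such helpers are available cheaply. First, the identity $x\lef y\rig x=x$ is exactly \eqref{eq:resu2}, which is derived from \eqref{cp2}, \eqref{cp4}, \eqref{CPstat} only. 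Second, a right-contraction law $x\lef y\rig(v\lef y\rig w)=x\lef y\rig w$ follows from \eqref{CPmem} by setting $u:=\fa$ and applying \eqref{cp2} twice. Third, writing the constant guard as $\tr=\tr\lef\tr\rig\fa$ (an instance of \eqref{cp3}), distributing by \eqref{cp4}, and simplifying the right branch by \eqref{cp2} yields the ``unfolding'' $x\lef\tr\rig y=(x\lef\tr\rig y)\lef\tr\rig y$.

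With these in hand, deriving \eqref{cp1} reduces to collapsing the self-referential identity $w=w\lef\tr\rig y$ (with $w:=x\lef\tr\rig y$) down to $w=x$. The natural tool is a \emph{left}-contraction law $(x\lef y\rig z)\lef y\rig w=x\lef y\rig w$, since together with \eqref{cp3} it gives \eqref{cp1} at once. The hard part is precisely here: every textbook route to left-contraction unfolds the outer guard via \eqref{cp3} and \eqref{cp4} and then wants to rewrite a subterm $(\cdots)\lef\tr\rig w$ back to $\cdots$, i.e. it silently re-invokes \eqref{cp1} itself. In other words, left-contraction and \eqref{cp1} are inter-derivable over the remaining axioms, so a naive hand derivation runs in a circle. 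Breaking the circle forces an essential use of \eqref{CPmem}: one must arrange a left-nested and a right-nested occurrence of a $\tr$-guard so that the memorizing axiom discards the spurious right branch. This bounded proof search is exactly what I would delegate to \emph{Prover9}~\cite{BirdBrain}, in line with the companion derivations elsewhere in the paper, feeding it the three helper identities above to keep the proof short.

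For the independence claim I would use \emph{Mace4}~\cite{BirdBrain} to produce, for each of the five axioms, a finite $\SigCP$-algebra (a carrier with distinguished elements $\tr,\fa$ and one ternary operation $\_\lef\_\rig\_$) that validates the other four but refutes the target. Concretely: a model with $\fa\lef a\rig\fa\neq\fa$ separates \eqref{CPstat}; a model in which a guard carries a side effect, so that re-reading it can change the outcome, separates \eqref{CPmem} (such models exist because the free semantics of Section~\ref{sec:FSCL} already violates memory); a model with $\tr\lef a\rig\fa\neq a$ separates \eqref{cp3}; one with $a\lef\fa\rig b\neq b$ separates \eqref{cp2}; and one in which the distribution equality fails separates \eqref{cp4}. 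I expect all five witnesses to be small (two to four elements), the only care being to verify that each genuinely satisfies the other four axioms. Finally, I would remark that this is coherent with the surrounding discussion: because \eqref{cp1} is derivable from the other five, $\CPstat$ itself is \emph{not} independent, whereas $\CPstat\setminus\{\eqref{cp1}\}$ is—so the only substantive work beyond invoking the tools is the derivation of \eqref{cp1}, whose entire difficulty is the circularity described above.
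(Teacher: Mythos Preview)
Your plan is sound and would work: the helpers you list are all derivable without \eqref{cp1}, delegating the closing step to \emph{Prover9} is consistent with how other results in the paper are handled, and the \emph{Mace4} independence strategy is exactly what the paper does (the paper adds only that one atom is needed to separate \eqref{cp4}).

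The difference is that the paper does \emph{not} invoke \emph{Prover9} here; it gives a short hand derivation that sidesteps precisely the left-contraction circularity you ran into. Instead of your unfolding helper (H3) and the attempt to collapse $(x\lef\tr\rig y)\lef\tr\rig y$ via left-contraction, the paper proves the auxiliary identity
\[
x\lef y\rig(z\lef u\rig y)=x\lef y\rig(z\lef u\rig\fa)
\]
by writing the inner $y$ as $\tr\lef y\rig\fa$ via \eqref{cp3} and then applying \eqref{CPmem} directly (no \eqref{cp1} needed). Specializing to $y=\tr$ gives $x\lef\tr\rig(z\lef u\rig\tr)=x\lef\tr\rig(z\lef u\rig\fa)$; with $z\lef u\rig\tr$ rewritten via your (H1)${}={}$\eqref{eq:resu2} and $z\lef u\rig\fa$ via \eqref{cp3}, this yields $x\lef\tr\rig y=x\lef\tr\rig\tr$ for all $y$. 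Taking $y:=x$ and applying \eqref{eq:resu2} once more finishes: $x\lef\tr\rig y=x\lef\tr\rig\tr=x\lef\tr\rig x=x$. So your (H1) is the right lemma, but your right-contraction (H2) and unfolding (H3) are replaced by the single identity above, which is where \eqref{CPmem} enters, and the whole thing stays on the \emph{right} side of the conditional so that left-contraction (and hence the circular dependence on \eqref{cp1}) never arises.
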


\begin{proof}
First derive
\begin{align}
\nonumber
x\lef y\rig(z\lef u\rig y)
&=x\lef y\rig(z\lef u\rig (\tr\lef y\rig\fa))
&&\text{by~\eqref{cp3}}\\
\label{eq:resu1}
&=x\lef y\rig(z\lef u\rig \fa),
&&\text{by~\eqref{CPmem}}
\\[2mm]
\nonumber
x\lef\tr\rig y
&=x\lef\tr\rig (\tr\lef y\rig \fa)
&&\text{by~\eqref{cp3}}\\
\nonumber
&=x\lef\tr\rig (\tr\lef y\rig\tr)
&&\text{by~\eqref{eq:resu1}}\\
\label{eq:resu3}
&=x\lef\tr\rig \tr.
&&\text{by~\eqref{eq:resu2}}
\end{align}
Hence,
\begin{align*}
x\lef\tr\rig y
&=x\lef\tr\rig x
&&\text{by~\eqref{eq:resu3}}\\
&=x.
&&\text{by~\eqref{eq:resu2}}
\end{align*}

The independence of $\CPstat \setminus\{\eqref{cp1}\}$
follows easily with help of the tool \emph{Mace4}~\cite{BirdBrain},
where one atom is needed to show the independence of axiom~\eqref{cp4}
(recall that $A\ne\emptyset$).
\end{proof}

\section{Conclusions}
\label{sec:Conc}

In~\cite{BP10} we introduced `proposition 
algebra', which is based on Hoare's conditional $x\lef y\rig z$
and the constants $\tr$ and $\fa$.
We defined a number of varieties of so-called
\emph{valuation algebras} in order to capture different semantics for the 
evaluation of conditional statements, and provided axiomatizations for
the resulting valuation congruences: 
$\CP$ (four axioms) characterizes the least identifying valuation congruence
we consider, and the extension $\CPmem$
(one extra axiom) characterizes the
most identifying valuation congruence below ``sequential propositional
logic''. Static valuation congruence can be axiomatized by adding the  
axiom $\fa\lef x\rig\fa=\fa$
to $\CPmem$, and can be seen as a characterization of (sequential) propositional logic.

In~\cite{HMA,BP12a} we introduced an alternative valuation semantics
for proposition algebra in the form of \emph{Hoare-McCarthy algebras} (HMA's)
that is more elegant than the semantical framework
provided in~\cite{BP10}: HMA-based semantics
has the advantage that one can define a valuation congruence
without first defining the 
valuation \emph{equivalence} it is contained in. 

In~\cite{BP15}, following the approach of Staudt in~\cite{Daan}, we defined evaluation trees
as a more simple and direct semantics for proposition algebra and proved several 
completeness results for the valuation congruences mentioned above.

In~\cite{BPS13} we introduced ``short-circuit logic'' as defined here (Def.~\ref{def:SCL}
and Def.~\ref{def:FSCL}).
In~\cite{PS17}, we dealt with the case of free short-circuit logic (\FSCL), as 
is summarized in Section~\ref{sec:FSCL}.

\medskip

In this paper we establish a setting in which memorizing short-circuit logic \MSCL\
and static short-circuit logic \SSCL\ can be understood and used without any reference to 
(or dependence on) the conditional connective. 

From this perspective, \MSCL\ can be seen as the equational logic 
defined by \MSCLe\ and with equality of memorizing evaluation trees as a simple semantics.
\MSCL\ can also be viewed as a short-circuited, operational variant of propositional logic: 
decisive for the meaning of a sequential proposition is the \emph{process} of 
its sequential evaluation, as is clearly demonstrated by its memorizing evaluation tree,
which also explains why the sequential connectives are taken to be non-commutative and why 
the constants $\tr$ and \fa\ are not definable (and thus included). 
It is important to realize that a number of familiar properties hold in \MSCL:
\begin{itemize}
\item
The duality principle, the double negation shift, and associativity of the sequential
connectives (all of these hold in \FSCL).
\item 
Idempotence of the sequential connectives, and
\begin{align*}
x\leftand (y\leftand x)&=x\leftand y,
&&\text{(see page~\pageref{typical})}
\\x\leftand(y\leftor z) &=(x\leftand y)\leftor(x\leftand z).
&&\text{left-distributivity~\eqref{D1}}
\end{align*}
(none of these hold in \FSCL).
\end{itemize}
Some perhaps less familiar properties of \MSCL, none of which hold in \FSCL,
are the following
two characterizations of ``\texttt{if $x$ then $y$ else $z$}'':
\begin{align*}
(x\leftand y)\leftor(\neg x\leftand z)&=(\neg x\leftor y)\leftand(x\leftor z),
&&\text{\eqref{M1}}\\
(x\leftand y)\leftor(\neg x\leftand z)&=(\neg x\leftand z)\leftor(x\leftand y),
&&\text{\eqref{M2}}
\end{align*}   
and the right-distributivity of ${\leftand}$ over ``\texttt{if $x$ then $y$ else $z$}'',
that is
\[(\texttt{if $x$ then $y$ else $z$})\leftand u=\texttt{if $x$ then $(y\leftand u)$ else $(z\leftand u)$},\]
which is characterized by
\begin{align*}
((x\leftand y)\leftor(\neg x\leftand z))\leftand u
&=(x\leftand (y\leftand u))\leftor(\neg x\leftand (z\leftand u)).
&&\text{\eqref{M3}}
\end{align*}   
Also, \eqref{M1} and the dual of~\eqref{M3} imply  right-distributivity of ${\leftor}$ 
over ``\texttt{if $x$ then $y$ else $z$}'':
\begin{align*}
((x\leftand y)\leftor(\neg x\leftand z))\leftor u
&=(x\leftand (y\leftor u))\leftor(\neg x\leftand (z\leftor u)).
\end{align*}   

Likewise, we can view \SSCL\ as the equational logic 
defined by \SSCLe, and with equality of static evaluation trees as its semantics.\footnote{%
  There is an innocent difference between
  the definition of static evaluation trees used in this paper
  (Def.~\ref{def:6.13}) and its origin
  \cite[Def.6.13]{BP15}: the $\sigma$'s in the current definition are reversed, which we view as 
  more natural.}
However, it is questionable whether equality of static evaluation trees is a useful
semantics for \SSCL\ (or \SSCLe), despite
the interest of short-circuit connectives and short-circuit evaluation 
in propositional logic. 
Consider for example the identity
$a\leftand b=b\leftand a$,
which implies that the associated static evaluation trees should be considered equal.
So, this either requires a transformation of $se$-evaluation trees according to an
ordering of a fixed set of atoms (that contains $a$ and $b$),
which may not agree with the evaluation order of atoms, or a non-intuitive equivalence 
relation between (ordinary) evaluation trees that does not respect this evaluation order.
The same problem occurs in the case of expressions with the conditional and their static 
evaluation trees: the mismatch is that 
$b\lef a\rig\fa$ models a sequential, short-circuited
evaluation of $a\wedge b$, while the (necessary) identification
$b\lef a\rig\fa=a\lef b\rig\fa$  
declares the sequential nature of this evaluation irrelevant. 

We conclude with some comments on the differences between \MSCL\ and \SSCL.
First, the constant
$\tr$ is not definable in \MSCL, but in \SSCL\ it is definable  by $x\leftor \neg x$
(cf.\ Theorem~\ref{thm:assoc}). Next, short-circuit evaluation and \emph{full evaluation} 
(prescribed by ${\fulland}$, see~\cite{PS17,Daan}) 
do not coincide in \MSCL, but they do in \SSCL:
\[x\fulland y\stackrel{\textit{def}}=(x\leftor(y\leftand\fa))\leftand y=(x\leftor(\fa\leftand y))\leftand y=x\leftand y.\]
Furthermore, in both \MSCL\ and \SSCL, the number of semantically different formulas
is bounded by a function on $|A|$. This is an essential difference with short-circuit logics
that identify less, such as \FSCL.
For $|A|=n$ (recall $n>0$), the number of memorizing evaluation trees is 
$T_n = n(T_{n-1})^2 + 2$ with $T_0=2$ (so the first few
values are $6, 74, 16430$),\footnote{%
  See \oest.} 
and
for $\sigma=a_1a_2...a_n\in\Au$, the number of static evaluation trees over $\sigma$
is $2^{(2^n)}$.
We finally note that the complexity of deciding satisfiability for 
both \MSCL\ and \SSCL\ is NP-complete (see~\cite{Veld,BP10}).
All in all, taking short-circuit evaluation and the absence of atomic side effects
as points of departure, we think that \MSCL\ provides  
a more natural view on (sequential) propositional logic than \SSCL\ does.
\\[2mm]
\textbf{Related work.} 
In this paper we focused on the intrinsic properties of the sequential connectives in the setting of 
memorizing and static short-circuit evaluation, and we have not yet any specific applications in mind.
Nevertheless, we mention a few  areas of potentially related research. 
First, decision trees on Boolean variables as discussed in for example~\cite{Moret}
are memorizing evaluation trees.
Secondly, other notations for the sequential connectives $\leftand$ and $\leftor$ with memorizing 
interpretation
are $\vartriangle$ and $\triangledown$ from computability logic (see, e.g.~\cite{Jap08}),
and $\otimes$ and $\oplus$ from transaction logic (see, e.g.~\cite{BK15}), there called
\emph{serial} connectives. However,  \MSCL\ is just a part of both these 
logics and it is questionable whether its axiomatization or semantics are of any relevance.

\noindent
\textbf{Future work / Challenging questions.} 
With respect to the proof of Theorem~\ref{thm:extraM}, that is, $\MSCLe\vdash\SCLe$,
find a shorter and more comprehensible proof of associativity.
Alternatively,
find another equational axiomatization for \MSCL\ that is short and simple, 
uses only three variables, and admits a simple proof of this theorem.

\appendix
\renewcommand{\leftand}{~
     \mathbin{\setlength{\unitlength}{.9ex}
     \begin{picture}(1.6,1.8)(-.4,0)
     \put(-.8,0){\small$\wedge$}
     \put(-.65,-0.1){\textcolor{white}{\circle*{0.6}}}
     \put(-.65,-0.1){\circle{0.6}}
     \end{picture}
     }}
\renewcommand{\leftor}{~
     \mathbin{\setlength{\unitlength}{.9ex}
     \begin{picture}(1.6,1.8)(-.4,0)
     \put(-.8,0){\small$\vee$}
     \put(-.64,1.5){\textcolor{white}{\circle*{0.6}}}
     \put(-.64,1.55){\circle{0.6}}
     \end{picture}
     }}

\small

\section{Detailed proofs}

\subsection{A proof of Theorem~\ref{thm:extraM}}
\label{app:mscl}

\textbf{Theorem~\ref{thm:extraM}.} $\MSCLe\vdash\SCLe$.

\begin{proof}
With help of the theorem prover \emph{Prover9}~\cite{BirdBrain}.
We derive the \SCLe-axioms in a particular order,
as to obtain useful intermediate results. 
Recall that $(n)'$ represents the dual of equation $(n)$.

\medskip\noindent
\textbf{Axiom~\eqref{SCL3}.} First derive
\begin{equation}
\label{Aux19}
\tr\leftor x
\stackrel{\eqref{SCL4}}=\tr\leftand(\tr\leftor x)
\stackrel{\eqref{MSCL1}}=\tr.
\end{equation}
Hence,
\begin{align}
\nonumber
x
&=(\tr\leftor\tr)\leftand x
&&\text{by~\eqref{SCL4}, \eqref{Aux19}}\\
\nonumber
&=(\fa\leftand (\tr\leftand x))\leftor(\tr\leftand x)
&&\text{by~\eqref{Mem}, \eqref{SCL1}}\\
\label{20jan}
&=(\fa\leftand x)\leftor x,
&&\text{by~\eqref{SCL4}}
\end{align}
and
\begin{align}
\nonumber
\neg(\fa\leftand \neg x)
&=\neg(\neg\tr\leftand \neg x)
&&\text{by~\eqref{SCL1}}\\
\nonumber
&=\tr\leftor x
&&\text{by~\eqref{SCL2}}\\
\label{200jan}
&=\tr.
&&\text{by~\eqref{Aux19}}
\end{align}
Hence, $\neg(\fa\leftand x)
\stackrel{\eqref{20jan}}=\neg(\fa\leftand ((\fa\leftand x)\leftor x))
\stackrel{\eqref{SCL2}}=\neg(\fa\leftand \neg(\neg(\fa\leftand x)\leftand \neg x))
\stackrel{\eqref{200jan}}=\tr$,
and thus
\begin{align*}
z
&=(\tr\leftor y)\leftand z
&&\text{by~\eqref{SCL4}, \eqref{Aux19}}\\
&=(\fa\leftand(y\leftand z))\leftor (\tr\leftand z)
&&\text{by \eqref{Mem}, \eqref{SCL1}}\\
&=\neg(\neg(\fa\leftand(y\leftand z))\leftand \neg z)
&&\text{by~\eqref{SCL4}, \eqref{SCL2}}\\
&=\neg(\tr\leftand \neg z)
&&\text{by $\neg(\fa\leftand x)=\tr$}\\
\tag{\ref{SCL3}}
&=\neg\neg z.
&&\text{by \eqref{SCL4}}
\end{align*}

\medskip\noindent
\textbf{Intermediate result 1 - Duality.}
By axioms~\eqref{SCL1}-\eqref{SCL3} the duality principle holds.

\medskip\noindent
\textbf{Axiom~\eqref{SCL6}.} $\fa\leftand x=\fa$ by~$\eqref{Aux19}'$. 

\medskip\noindent
\textbf{Axiom~\eqref{SCL5}.} Instantiate~\eqref{Mem} with $x=\fa$ and $y=\tr$,
and apply $\neg\fa=\tr$ and~\eqref{SCL4}, \eqref{SCL6}:
\[
\tag{\ref{SCL5}}
z=\tr\leftand z
\stackrel{\eqref{SCL4}'}=(\fa\leftor \tr)\leftand z
\stackrel{\eqref{MSCL3}}=(\tr\leftand(\tr\leftand z))\leftor (\fa\leftand z)
\stackrel{\eqref{SCL4},\eqref{SCL6}}=z\leftor \fa.
\]

\medskip\noindent
\textbf{Intermediate result 2 - Idempotence.} 
By axiom~\eqref{SCL5},
$x=x\leftand(x\leftor\fa)\stackrel{\eqref{MSCL1}}=x\leftand x$.

\medskip\noindent
\textbf{Axiom~\eqref{SCL8}.} We derive the dual equation.
First derive 
\begin{align}
\nonumber
x\leftor \tr
&=(x\leftor \tr)\leftand\tr
&&\text{by~\eqref{SCL4}}\\
\nonumber
&=(\neg x\leftand(\tr\leftand\tr))\leftor(x\leftand\tr)
&&\text{by~\eqref{Mem}}\\
\label{hulpje2}
&=\neg x\leftor x, 
&&\text{by~\eqref{SCL4}}
\end{align}
and  
\begin{align}
\label{133dec}
&\neg x\leftor \tr=x\leftor \neg x.
&&\text{by~\eqref{hulpje2}, \eqref{SCL3}}
\end{align}
Hence
\begin{align*}
x\leftor\tr
&=(\neg x\leftor x)\leftand\tr
&&\text{by \eqref{hulpje2}, \eqref{SCL4}}\\
&=(x\leftand (x\leftand\tr))\leftor (\neg x\leftand\tr)
&&\text{by~\eqref{Mem}}\\
&=x\leftor\neg x
&&\text{by~\eqref{SCL4}, idempotence}\\
\tag*{\eqref{SCL8}$'$}
&=\neg x\leftor \tr.
&&\text{by~\eqref{133dec}}
\end{align*}

\medskip\noindent
\textbf{Intermediate result 3 - four auxiliary results.} 
\begin{align}
\nonumber
x\leftand y
&=(x\leftor\fa)\leftand y
&&\text{by~\eqref{SCL5}}\\
\nonumber
&=(\neg x\leftand(\fa\leftand y))\leftor (x\leftand y)
&&\text{by~\eqref{Mem}}\\
\label{ar 2}
\tag{Ar1}
&=(x\leftand \fa)\leftor (x\leftand y).
&&\text{by~\eqref{SCL6}, \eqref{SCL8}}
\\[2mm]
\nonumber
x\leftor y
&=(x\leftor y)\leftand\tr
&&\text{by \eqref{SCL5}$'$}\\
\label{ar 1}
\tag{Ar2}
&=(\neg x\leftand y)\leftor x.
&&\text{by \eqref{Mem}}
\\[2mm]
\nonumber
x\leftor y
&=(x\leftor \tr)\leftand(x\leftor y)
&&\text{by~\eqref{ar 2}$'$}\\
\nonumber
&=(\neg x\leftand(x\leftor y))\leftor(x\leftand(x\leftor y))
&&\text{by~\eqref{Mem}}\\
\nonumber
&=(\neg x\leftand(x\leftor y))\leftor x
&&\text{by~\eqref{MSCL1}}\\
\label{ar 3}
\tag{Ar3}
&=x\leftor(x\leftor y).
&&\text{by~\eqref{ar 1}}
\\[2mm]
\nonumber
x\leftor y
&=(\neg x\leftand y)\leftor x
&&\text{by~\eqref{ar 1}}\\
\nonumber
&=(\neg x\leftand (\neg x\leftand y))\leftor x
&&\text{by~\eqref{ar 3}$'$}\\
\label{ar 4}
\tag{Ar4}
&=x\leftor (\neg x\leftand y).
&&\text{by~\eqref{ar 1}}
\end{align}

\medskip\noindent
\textbf{Axiom~\eqref{SCL9}.} 
First derive
\begin{align}
\nonumber
(x\leftor\tr)\leftand\fa
&=(\neg x\leftand\fa)\leftor(x\leftand\fa)
&&\text{by~\eqref{Mem}, \eqref{SCL4}}\\
\nonumber
&=(x\leftand\fa)\leftor(x\leftand\fa)
&&\text{by~\eqref{SCL8}}\\
\label{19dec}
&=x\leftand\fa.
&&\text{by~idempotence}
\end{align}
Hence,
\begin{align*}
(x\leftor\tr)\leftand y
&=((x\leftor \tr)\leftand \fa)\leftor ((x\leftor\tr)\leftand y)
&&\text{by~\eqref{ar 2}}\\
&=(x\leftand \fa)\leftor ((x\leftor\tr)\leftand y)
&& \text{by~\eqref{19dec}}\\
&=(x\leftand \fa)\leftor (\neg(x\leftand\fa)\leftand y)
&& \text{by~\eqref{SCL8}$'$}\\
\tag{\ref{SCL9}}
&=(x\leftand\fa)\leftor y.
&& \text{by~\eqref{ar 4}}
\end{align*}

\medskip\noindent
\textbf{Intermediate result 4 - three more auxiliary results.}
\begin{align*}
(x\leftand \fa)\leftand y&=x\leftand \fa
\tag{\ref{ar 5}}\\
x\leftand(y\leftand x)&=x\leftand y
\tag{\ref{ar 6}}\\
(x\leftand y)\leftand x&=x\leftand y
\tag{\ref{ar 7}}
\end{align*}
First derive
\begin{align}
\nonumber
(x \leftand\fa)\leftand\fa
&=\neg(x \leftand\fa)\leftand\fa
&&\text{by \eqref{SCL8}}\\
\nonumber 
&=(\neg x\leftor\tr)\leftand\fa\\
\nonumber 
&=\neg x \leftand\fa
&&\text{by~\eqref{19dec}}\\
\label{185dec}
&=x\leftand\fa,
&&\text{by~\eqref{SCL8}}
\end{align}
hence
\begin{align}
\nonumber
(x \leftand\fa) \leftand y
&=(x \leftand\fa) \leftand ((x \leftand\fa) \leftand y)
&&\text{by~\eqref{ar 3}$'$}\\
\nonumber
&=(x \leftand\fa)\leftand (((x\leftand\fa)\leftand\fa)\leftor((x \leftand\fa) \leftand y))
&&\text{by~\eqref{ar 2}}\\
\nonumber
&=(x \leftand\fa)\leftand ((x\leftand\fa)\leftor((x \leftand\fa) \leftand y))
&&\text{by~\eqref{185dec}}\\
\label{ar 5}
\tag{Ar5}
&=x \leftand\fa.
&&\text{by~\eqref{MSCL1}}
\\[2mm]
\nonumber
x\leftand (y\leftand x)
&=(x\leftand x)\leftand (y\leftand x)
&&\text{by idempotence}\\
\nonumber
&=((x\leftand\fa)\leftor x)\leftand (y\leftand x)
&&\text{by \eqref{ar 2}, idempotence}\\
\nonumber
&=(\neg (x\leftand\fa)\leftand (x\leftand (y\leftand x)))\leftor ((x\leftand \fa)\leftand (y\leftand x))
&&\text{by~\eqref{Mem}}\\
\nonumber
&=((\neg x\leftor\tr)\leftand (x\leftand (y\leftand x)))\leftor (x\leftand \fa)
&&\text{by~\eqref{ar 5}}\\
\nonumber
&=((x\leftand\fa)\leftor (x\leftand (y\leftand x)))\leftor (x\leftand \fa)
&&\text{by~\eqref{SCL9}, \eqref{SCL8}}\\
\nonumber
&=(x\leftand (y\leftand x))\leftor (x\leftand \fa)
&&\text{by~\eqref{ar 2}}\\
\nonumber
&=(x\leftand (y\leftand x))\leftor (\neg x\leftand x)
&&\text{by~\eqref{hulpje2}$'$}\\
\nonumber
&=(\neg x\leftor y)\leftand x
&&\text{by~\eqref{Mem}}\\
\label{ar 6}
\tag{Ar6}
&=x\leftand y.
&&\text{by \eqref{ar 1}$'$}
\\[2mm]
\nonumber
(x\leftand y)\leftand x
&=(x\leftand y)\leftand(x\leftand(x\leftand y))
&&\text{by \eqref{ar 6}}\\
\nonumber
&=(x\leftand y)\leftand (x\leftand y)
&&\text{by \eqref{ar 3}$'$}\\
\label{ar 7}
\tag{Ar7}
&=x\leftand y.
&&\text{by idempotence}
\end{align}

\medskip\noindent
\textbf{Axiom~\eqref{SCL7}.}
We use the following auxiliary results:
\begin{align*}
(x\leftand y)\leftand z&=(x\leftand\fa)\leftor ((x\leftand y)\leftand z)
&&\eqref{62}
\\
(x\leftor y)\leftand(y\leftand z)&=(x\leftand\fa)\leftor (y\leftand z)
&&\eqref{above 1}
\\
\neg x\leftor(y\leftand z)&=\neg x\leftor((x\leftand y)\leftand z)
&&\eqref{13784}
\end{align*}
and derive associativity of $\leftand$ as follows:
\begin{align*}
(x\leftand y)\leftand z
&=(x\leftand \fa)\leftor ((x\leftand y)\leftand z)
&&\text{by~\eqref{62}}\\
&=(x\leftor(x\leftand y))\leftand((x\leftand y)\leftand z)
&&\text{by~\eqref{above 1}}\\
&=x\leftand((x\leftand y)\leftand z)
&&\text{by~\eqref{MSCL1}$'$}\\
&=(\neg x \leftor((x\leftand y)\leftand z))\leftand x
&&\text{by~\eqref{ar 1}$'$}\\
&=(\neg x \leftor(y\leftand z))\leftand x
&&\text{by~\eqref{13784}}\\
\tag{\ref{SCL7}}
&=x\leftand (y\leftand z).
&&\text{by~\eqref{ar 1}$'$}
\end{align*}

We derive the above auxiliary results in order: 
\begin{align}
\nonumber
(x\leftand\fa)\leftor ((x\leftand y)\leftand z)
&=((x\leftand\fa)\leftor ((x\leftand y)\leftand z))\leftor(x\leftand\fa)
&&\text{by \eqref{ar 7}$'$}\\
\nonumber
&=((x\leftor\tr)\leftand ((x\leftand y)\leftand z))\leftor(x\leftand\fa)
&&\text{by~\eqref{SCL9}}\\
\nonumber
&=((\neg x\leftor\tr)\leftand ((x\leftand y)\leftand z))\leftor(x\leftand\fa)
&&\text{by~\eqref{SCL8}$'$}\\
\nonumber
&=(\neg(x\leftand\fa)\leftand ((x\leftand y)\leftand z))\leftor((x\leftand\fa)\leftand z)
&&\text{by \eqref{ar 5}}\\
\nonumber
&=((x\leftand\fa)\leftor (x\leftand y))\leftand z
&&\text{by~\eqref{Mem}}\\
\label{62}
&=(x\leftand y)\leftand z.
&&\text{by \eqref{ar 2}}
\end{align}
\begin{align}
\nonumber
(x\leftor y)\leftand(y\leftand z)
&=(\neg x\leftand(y\leftand(y\leftand z)))\leftor(x\leftand(y\leftand z))
&&\text{by~\eqref{Mem}}\\
\nonumber
&=(\neg x\leftand(y\leftand z))\leftor(x\leftand(y\leftand z))
&&\text{by \eqref{ar 3}$'$}\\
\nonumber
&=(x\leftor\tr)\leftand (y\leftand z)
&&\text{by~\eqref{Mem}}\\
\label{above 1}
&=(x\leftand\fa)\leftor (y\leftand z),
&&\text{by~\eqref{SCL9}}
\end{align}
The remaining auxiliary results lead to~\eqref{13784}:
\begin{align}
\nonumber
(x\leftor y)\leftand(x\leftor z)
&=(\neg x\leftand (y\leftand(x\leftor z)))\leftor
(x\leftand (x\leftor z))
&&\text{by \eqref{Mem}}\\
\nonumber
&=(\neg x\leftand (y\leftand(x\leftor z)))\leftor x
&&\text{by~\eqref{MSCL1}}\\
\label{39}
&=x\leftor(y\leftand(x\leftor z)).
&&\text{by \eqref{ar 1}}
\\[2mm]
\nonumber
x\leftor (\neg x\leftor y)
&=(\neg x\leftand(\neg x\leftor y))\leftor x
&&\text{by \eqref{ar 1}}\\
\nonumber
&=\neg x\leftor x
&&\text{by \eqref{MSCL1}}\\
\nonumber
&=(\neg x\leftand\neg x)\leftor x
&&\text{by idempotence}\\
\label{1701}
&=x\leftor \neg x.
&&\text{by \eqref{ar 1}}
\\[2mm]
\nonumber
x\leftor((x\leftand z)\leftor y)
&=x\leftor((\neg x\leftor(z\leftor y))\leftand(x\leftor y))
&&\text{by \eqref{Mem}}\\
\nonumber
&=(x\leftor(\neg x\leftor(z\leftor y)))\leftand(x\leftor y)
&&\text{by \eqref{39}}\\
\nonumber
&=(x\leftor\neg x)\leftand(x\leftor y)
&&\text{by \eqref{1701}}\\
\nonumber
&=x\leftor(\neg x\leftand(x\leftor y))
&&\text{by \eqref{39}}\\
\nonumber
&=x\leftor(x\leftor y)
&&\text{by \eqref{ar 4}}\\
\label{997}
&=x\leftor y.
&&\text{by \eqref{ar 3}}
\\[2mm]
\nonumber
x\leftor y
&=x\leftor((x\leftand z)\leftor y)
&&\text{by \eqref{997}}\\
\nonumber
&=x\leftor((x\leftand z)\leftor (y\leftor(x\leftand z)))
&&\text{by \eqref{ar 6}$'$}\\
\label{6048}
&=x\leftor (y\leftor(x\leftand z)).
&&\text{by \eqref{997}}
\\[2mm]
\nonumber
x\leftor(y\leftand z)
&=x\leftor(\neg x\leftand (y\leftand z))
&&\text{by \eqref{ar 4}}\\
\nonumber
&=x\leftor((\neg x\leftand (y\leftand z))\leftor(x\leftand z))
&&\text{by \eqref{6048}}\\
\label{13359}
&=x\leftor((x\leftor y)\leftand z).
&&\text{by~\eqref{Mem}}
\\[2mm]
\nonumber
\neg x\leftor(y\leftand z)
&=\neg x\leftor((\neg x\leftor y)\leftand z)
&&\text{by \eqref{13359}}\\
\nonumber
&=\neg x\leftor((\neg x\leftor (x\leftand y))\leftand z)
&&\text{by \eqref{ar 4}}\\
\label{13784}
&=\neg x\leftor((x\leftand y)\leftand z).
&&\text{by \eqref{13359}}
\end{align}
We write ``Assoc'' for (repeated) applications of associativity
of $\leftand$ and $\leftor$. 

\newpage
\noindent
\textbf{Intermediate result 5.}  
In order to derive axiom~\eqref{SCL10} we use the following two intermediate results:
\begin{align*}
(x\leftand y)\leftor(\neg x\leftand z)&=(\neg x\leftor y)\leftand(x\leftor z)
\tag{\ref{M1}}\\
(x\leftand y)\leftor(\neg x\leftand z)&=(\neg x\leftand z)\leftor(x\leftand y)
\tag{\ref{M2}}
\end{align*}
Equation \eqref{M1}.
\begin{align}
\nonumber
(x\leftand y)\leftor(\neg x\leftand z)
&=(x\leftand y)\leftor(\neg x\leftand(x\leftor z))
&&\text{by (ir3)$'$}\\
\nonumber
&=(x\leftand(y\leftand(x\leftor z)))\leftor(\neg x\leftand(x\leftor z))
&&\text{by~\eqref{6048}$'$}\\
\label{M1}
\tag{M1}
&=(\neg x\leftor y)\leftand(x\leftor z).
&&\text{by~\eqref{Mem}}
\end{align}

\noindent
Equation \eqref{M2}.
First derive
\begin{align}
\nonumber
(x\leftor y)\leftand z
&=(x\leftor y)\leftand ((x\leftor y)\leftand z)
&&\text{by \eqref{ar 3}$'$}\\
\nonumber
&=(x\leftor (x\leftor y))\leftand ((x\leftor y)\leftand z)
&&\text{by \eqref{ar 3}}\\
\label{638}
&=(x\leftand\fa)\leftor ((x\leftor y)\leftand z),
&&\text{by~\eqref{above 1}}
\\[2mm]
\nonumber
(x\leftand \fa)\leftor y
&=(\neg x\leftand \fa)\leftor y
&&\text{by~\eqref{SCL8}}\\
\nonumber
&=(\neg x\leftor\tr)\leftand y
&&\text{by~\eqref{SCL9}}\\
\nonumber
&=(x\leftand y)\leftor(\neg x\leftand y)
&&\text{by~\eqref{Mem}, \eqref{SCL4}}\\
\nonumber 
&=(\neg x\leftor y)\leftand(x\leftor y)
&&\text{by~\eqref{M1}}\\
\nonumber 
&=(\neg x\leftor (y\leftor y))\leftand(x\leftor y)
&&\text{by~idempotence}\\
\label{3296}
&=(x\leftand y)\leftor y,
&&\text{by~\eqref{Mem}$'$}
\\[2mm]
\nonumber
(x\leftor y)\leftand z
&=(x\leftand\fa)\leftor((x\leftor y)\leftand z)
&&\text{by~\eqref{638}}\\
\nonumber
&=(x\leftand((x\leftor y)\leftand z))\leftor((x\leftor y)\leftand z)
&&\text{by~\eqref{3296}}\\
\nonumber
&=((x\leftand (x\leftor y))\leftand z)\leftor((x\leftor y)\leftand z)
&&\text{by Assoc}\\
\label{3381}
&=(x\leftand z)\leftor((x\leftor y)\leftand z).
&&\text{by \eqref{MSCL1}}
\end{align}
Hence,
\begin{align}
\nonumber
(x\leftand y)\leftor(\neg x\leftand z)
&=(\neg x\leftor y)\leftand(x\leftor z)
&&\text{by~\eqref{M1}}\\
\nonumber
&=(\neg x\leftor y)\leftand((x\leftor z)\leftand(\neg x\leftor y))
&&\text{by~\eqref{ar 6}}\\
\nonumber
&=(\neg x\leftor (x\leftand y))\leftand((x\leftor z)\leftand(\neg x\leftor y))
&&\text{by~\eqref{ar 4}}\\
\nonumber
&=(\neg x\leftor (x\leftand y))\leftand((\neg x\leftand z)\leftor(x\leftand y))
&&\text{by~\eqref{M1}$'$}\\
\label{M2}
\tag{M2}
&=(\neg x\leftand z)\leftor(x\leftand y).
&&\text{by~\eqref{3381}$'$}
\end{align}

\medskip\noindent
\textbf{Axiom~\eqref{SCL10}.}
First derive
\begin{align}
\nonumber
(x\leftor y)\leftand z
&=(\neg x\leftand (y\leftand z))\leftor(x\leftand z)
&&\text{by~\eqref{Mem}}\\
\label{eq:8}
&=(x\leftor(y\leftand z))\leftand (\neg x\leftor z).
&&\text{by \eqref{M1}, \eqref{SCL3}}
\end{align}
Hence,
\begin{align*}
(x\leftand y)\leftor(z\leftand\fa)
&=
(x\leftand(y\leftor(z\leftand\fa)))\leftor(\neg x\leftand(z\leftand\fa))
&& \text{by~\eqref{eq:8}$'$}\\
&=(x\leftor(z\leftand\fa))\leftand
(\neg x\leftor (y\leftor(z\leftand\fa)))
&&\text{by \eqref{M1}, \eqref{M2}}\\
&=(x\leftor[(z\leftand\fa)\leftand(y\leftor(z\leftand\fa))])
\leftand
(\neg x\leftor(y\leftor(z\leftand\fa)))
&&\text{by \eqref{ar 5}}\\
\tag{\ref{SCL10}}
&=(x\leftor (z\leftand\fa))\leftand(y\leftor (z\leftand\fa)).
&& \text{by~\eqref{eq:8}}
\end{align*}
\end{proof}
\newpage
\subsection{A proof of Theorem~\ref{thm:dist}}
\label{app:LD}

\textbf{Theorem~\ref{thm:dist}.} 
The following equations are derivable from \MSCLe, where~\eqref{D1} abbreviates left-distributivity.
\begin{align}
\tag{\ref{M3}}
&((x\leftand y)\leftor(\neg x\leftand z))\leftand u
=(x\leftand (y\leftand u))\leftor(\neg x\leftand (z\leftand u)),
\\
\tag{\ref{D1}}
&x\leftand(y\leftor z)=(x\leftand y)\leftor(x\leftand z).
\end{align}

\smallskip
\begin{proof}
With help of the theorem prover \emph{Prover9}~\cite{BirdBrain}.

\medskip\noindent
\textbf{Equation \eqref{M3}.}
First derive
\begin{align}
\nonumber
x\leftand(y\leftand((x\leftor z)\leftand u))
&=(x\leftand (y\leftand(x\leftor z)))\leftand u
&&\text{by Assoc}\\
\nonumber
&=(x\leftand y)\leftand u
&&\text{by~\eqref{6048}$'$}\\
\label{above n3}
&=x\leftand (y\leftand u).
&&\text{by Assoc}
\end{align}
Hence,
\begin{align}
\nonumber
((x\leftand y)\leftor(\neg x\leftand z))\leftand u
&=((\neg x\leftor y)\leftand(x\leftor z))\leftand u
&&\text{by~\eqref{M1}}\\
\nonumber
&=(\neg x\leftor y)\leftand((x\leftor z)\leftand u)
&&\text{by Assoc}\\
\nonumber
&=(x\leftand (y\leftand((x\leftor z)\leftand u)))~\leftor\\
\nonumber
&\phantom{=~}(\neg x\leftand((x\leftor z))\leftand u)
&&\text{by~\eqref{Mem}}\\
\nonumber
&=(x\leftand (y\leftand u))\leftor(\neg x\leftand((x\leftor z))\leftand u)
&&\text{by~\eqref{above n3}}\\
\nonumber
&=(x\leftand (y\leftand u))\leftor((\neg x\leftand(x\leftor z))\leftand u)
&&\text{by Assoc}\\
\nonumber
&=(x\leftand (y\leftand u))\leftor((\neg x\leftand z)\leftand u)
&&\text{by \eqref{ar 4}$'$}\\
\tag{\ref{M3}}
&=(x\leftand (y\leftand u))\leftor(\neg x\leftand (z\leftand u)).
&&\text{by Assoc}
\end{align}

\medskip\noindent 
\textbf{Equation~\eqref{D1}.}
First derive
\begin{align}
\nonumber
x\leftor(y\leftor z)
&=(x\leftor y)\leftor z
&&\text{by Assoc}\\
\nonumber
&=((x\leftor y)\leftor x)\leftor z
&&\text{by~\eqref{ar 7}$'$}\\
\label{94}
&=x\leftor (y\leftor (x\leftor z)),
&&\text{by Assoc}
\\[2mm]
\nonumber
\neg x\leftor(y\leftor (x\leftand z))
&=\neg x\leftor(y\leftor (\neg x\leftor (x\leftand z)))
&&\text{by~\eqref{94}}\\
\nonumber
&=\neg x\leftor(y\leftor (\neg x\leftor z))
&&\text{by~\eqref{ar 4}}\\
\label{1660}
&=\neg x\leftor(y\leftor z).
&&\text{by~\eqref{94}}
\end{align}
Hence,
\begin{align}
\nonumber
x\leftand(y\leftor z)
&=(\neg x\leftor(y\leftor z))\leftand x
&&\text{by~\eqref{ar 1}$'$}\\
\nonumber
&=(\neg x\leftor(y\leftor z))\leftand (x\leftor(x\leftand z))
&&\text{by~\eqref{MSCL1}$'$}\\
\nonumber
&=(\neg x\leftor(y\leftor (x\leftand z)))\leftand (x\leftor(x\leftand z))
&&\text{by~\eqref{1660}}\\
\tag{\ref{D1}}
&=(x\leftand y)\leftor(x\leftand z).
&&\text{by~\eqref{Mem}$'$}
\end{align}
\end{proof}

\newpage

\subsection{A proof of Theorem~\ref{thm:assoc}}
\label{app:assoc}

\textbf{Theorem~\ref{thm:assoc}.}
The four \SSCLe-axioms~\eqref{Or}, \eqref{MSCL1}, \eqref{Mem}, and~\eqref{C1}
imply idempotence and associativity of ${\leftand}$ and $\leftor$, 
the double negation shift $\neg\neg x=x$ (that is, axiom~\eqref{SCL3}), and  
the equations
\begin{align}
\tag{\ref{Tdef}}
(x\leftor\neg x)\leftand y&=y,\\
\nonumber
\tag{\ref{D1}}
x\leftand(y\leftor z)
&=(x\leftand y)\leftor(x\leftand z).
\end{align}
Furthermore, if $|A|\geq 2$, these four axioms are independent.

\smallskip

\begin{proof}
With help of the theorem prover \emph{Prover9}~\cite{BirdBrain}.
Observe that $x\leftor y=y\leftor x$ readily follows from
the axioms~\eqref{Or} and~\eqref{C1}; we refer
to this equation by~\eqref{C1}$'$. 

\medskip\noindent
\textbf{\small Idempotence of $\leftand$.} We first derive
\begin{align}
\nonumber
(y\leftand x)\leftand((y\leftor z)\leftand x)
&=(y\leftand x)\leftand((y\leftand x)\leftor(\neg y\leftand (z\leftand x)))
&&\text{by~\eqref{Mem}, \eqref{C1}$'$}\\
\nonumber
&=y\leftand x,
&&\text{by~\eqref{MSCL1}}
\end{align}
hence
\begin{align}
\label{AHAHA}
(x\leftand y)\leftand((y\leftor z)\leftand x)
&=y\leftand x.
&&\text{by the above and \eqref{C1}}
\end{align}
Finally, 
\begin{align*}
x
&=(x\leftor y)\leftand x
&&\text{by~\eqref{MSCL1}, \eqref{C1}}\\
&=(x\leftand (x\leftor y))\leftand(((x\leftor y)\leftor x)\leftand x)
&&\text{by~\eqref{AHAHA} (substitute $x\leftor y$ for $y$, and $x$ for $z$)}\\
&=x\leftand(((x\leftor y)\leftor x)\leftand x)
&&\text{by~\eqref{MSCL1}}\\
&=x\leftand(x\leftand (x\leftor (x\leftor y)))
&&\text{by~\eqref{C1}, \eqref{C1}$'$}\\
&=x\leftand x.
&&\text{by~\eqref{MSCL1}}
\end{align*}

\noindent
\textbf{Idempotence of $\leftor$.} We first derive three auxiliary results:
\begin{align}
\nonumber
x\leftor y
&=(x\leftor y)\leftand (x\leftor y)\\
\nonumber
&=(x\leftand(x\leftor y))\leftor(\neg x\leftand(y\leftand(x\leftor y)))
&&\text{by~\eqref{Mem}, \eqref{C1}$'$}\\
\nonumber
&=x\leftor(\neg x\leftand(y\leftand(x\leftor y)))
&&\text{by~\eqref{MSCL1}}\\
\label{ACA}
&=x\leftor(\neg x\leftand y),
&&\text{by~\eqref{C1}$'$, \eqref{MSCL1}}
\\[2mm]
\nonumber
x
&=(x\leftor y)\leftand x
&&\text{by~\eqref{MSCL1}, \eqref{C1}}\\
\nonumber
&=(\neg x\leftand(y\leftand x))\leftor(x\leftand x)
&&\text{by~\eqref{Mem}}\\
\nonumber
&=x\leftor(\neg x\leftand(y\leftand x))
&&\text{by~idempotence of $\leftand$ and \eqref{C1}$'$}\\
\nonumber
&=x\leftor(y\leftand x)
&&\text{by~\eqref{ACA}}\\
\label{ADA}
&=x\leftor(x\leftand y),
&&\text{by~\eqref{C1}}
\\[2mm]
\nonumber
(x\leftor y)\leftand(x\leftor z)
&=(\neg x\leftand(y\leftand(x\leftor z)))\leftor (x\leftand(x\leftor z))
&&\text{by~\eqref{Mem}}\\
\nonumber
&=x\leftor (\neg x\leftand(y\leftand(x\leftor z)))
&&\text{by~\eqref{MSCL1}, \eqref{C1}$'$}\\
\label{AEA}
&=x\leftor (y\leftand(x\leftor z)).
&&\text{by~\eqref{ACA}}
\end{align}
Hence,
\begin{align*}
x\leftor x
&=(x\leftor x)\leftand(x\leftor x)
\\
&=x\leftor(x\leftand(x\leftor x))
&&\text{by~\eqref{AEA}}\\
&=x.
&&\text{by~\eqref{ADA}}\\[-2mm]
\end{align*}

\medskip\noindent
\textbf{Double negation shift.}
With idempotence, the double negation shift follows immediately:
\[
\neg\neg x=\neg(\neg x\leftand \neg x)\stackrel{\eqref{Or}}=x\leftor x=x.
\]

Hence the duality principle holds in the setting without \tr\ and \fa,
which justifies the name \eqref{C1}$'$ for the equation $x\leftor y=y\leftor x$.

\medskip\noindent
\textbf{Equation~\eqref{Tdef}}, that is $(x\leftor \neg x)\leftand y=y$.
First derive
\begin{align}
\nonumber
x\leftand(x\leftand y)
&=(x\leftand y)\leftand x
&&\text{by~\eqref{C1}}\\
\nonumber
&=(x\leftand y)\leftand(x\leftor(x\leftand y))
&&\text{by~\eqref{ADA}}\\
\nonumber
&=(x\leftand y)\leftand((x\leftand y)\leftor x)
&&\text{by~\eqref{C1}$'$}\\
\label{AFA}
&=x\leftand y,
&&\text{by~\eqref{MSCL1}}
\end{align}
hence
\begin{align}
\nonumber
(x\leftor\neg x)\leftand y
&=(\neg x\leftand(\neg x\leftand y))\leftor(x\leftand y)
&&\text{by~\eqref{Mem}}\\
\nonumber
&=(\neg x\leftand y)\leftor(x\leftand y)
&&\text{by~\eqref{AFA}}\\
\nonumber
&=(\neg x\leftand (y\leftand y))\leftor(x\leftand y)
&&\text{by idempotence}\\
\nonumber
&=(x\leftor y)\leftand y
&&\text{by~\eqref{Mem}}\\
\tag{\ref{Tdef}}
&=y.
&&\text{by~\eqref{C1}, \eqref{C1}$'$, \eqref{MSCL1}}
\end{align}

\medskip

For the remaining statements of the theorem, that is, 
associativity and  left-distributivity~\eqref{D1},
we can refer to the associated \MSCLe-derivations: by duality,
equation~\eqref{Tdef},
and the observation that in each \MSCLe-derivation, the constant \tr\ can be represented by 
$u\leftor\neg u$ with $u$ a fresh variable, the 
counterparts of the axioms~\eqref{Tand} and~\eqref{Neg} are available, and therefore
the \MSCLe-derivations of these equations can be adapted in this way.

\medskip

The independence of the four \SSCLe-axioms~\eqref{Or}, \eqref{MSCL1}, \eqref{Mem}, and~\eqref{C1}
requires that $|A|\geq 2$ and is proved in Appendix~\ref{app:Indep2}.
\end{proof}

\newpage

\subsection{A proof of Theorem~\ref{thm:Indep2}}
\label{app:Indep2}
\textbf{Theorem~\ref{thm:Indep2}.} The axioms of $\SSCLe$ are independent.

\begin{proof}
All independence models were found with  
the tool \emph{Mace4}~\cite{BirdBrain}.
In each model \M\ defined below, $\llbracket \fa\rrbracket^\M=0$ and 
$\llbracket \tr\rrbracket^\M=1$.
Recall that $A\ne\emptyset$ and observe that one atom $a$ is used to show the independence of 
axioms~\eqref{Or}, \eqref{Mem}, and~\eqref{C1}.
The independence result stated in Theorem~\ref{thm:assoc} follows 
by using in the refutations below two atoms 
instead of \fa\ and \tr, so this result requires that $|A|\geq 2$.

Independence of axiom~\eqref{Neg}.
A model \M\ for $\SSCLe\setminus\{\eqref{Neg}\}$
with domain $\{0,1,2,3\}$ that refutes $\fa=\neg\tr$ is the following:
\[
\begin{array}{r@{\hspace{5pt}}|@{\hspace{5pt}}c}
\neg\\\hline\\[-3mm]
0&3\\
1&2\\
2&1\\
3&0
\end{array}
\qquad\qquad
\begin{array}{r@{\hspace{5pt}}|@{\hspace{5pt}}c@{\hspace{5pt}}c@{\hspace{5pt}}c@{\hspace{5pt}}c@{\hspace{5pt}}c}
\leftand&
0&1&2&3\\\hline\\[-3mm]
0&
0&0&2&2\\
1&
0&1&2&3\\
2&
2&2&2&2\\
3&
2&3&2&3
\end{array}
\qquad\qquad
\begin{array}{r@{\hspace{5pt}}|@{\hspace{5pt}}c@{\hspace{5pt}}c@{\hspace{5pt}}c@{\hspace{5pt}}c@{\hspace{5pt}}c}
\leftor&
0&1&2&3\\\hline\\[-3mm]
0&
0&1&0&1\\
1&
1&1&1&1\\
2&
0&1&2&3\\
3&
1&1&3&3
\end{array}
\] 

 Independence of axiom~\eqref{Or}.
A model \M\ for $\SSCLe\setminus\{\eqref{Or}\}$
with domain $\{0,1,2\}$ and $\llbracket a\rrbracket^\M=2$ for some $a\in A$
that refutes $\fa\leftor a=\neg(\neg\fa\leftand \neg a)$ is the following:
\[
\begin{array}{r@{\hspace{5pt}}|@{\hspace{5pt}}c}
\neg\\\hline\\[-3mm]
0&1\\
1&0\\
2&0
\end{array}
\qquad\qquad
\begin{array}{r@{\hspace{5pt}}|@{\hspace{5pt}}c@{\hspace{5pt}}c@{\hspace{5pt}}c@{\hspace{5pt}}c@{\hspace{5pt}}c}
\leftand&
0&1&2\\\hline\\[-3mm]
0&
0&0&0\\
1&
0&1&2\\
2&
0&2&2
\end{array}
\qquad\qquad
\begin{array}{r@{\hspace{5pt}}|@{\hspace{5pt}}c@{\hspace{5pt}}c@{\hspace{5pt}}c@{\hspace{5pt}}c@{\hspace{5pt}}c}
\leftor&
0&1&2\\\hline\\[-3mm]
0&
0&1&2\\
1&
1&1&1\\
2&
2&1&2
\end{array}
\] 

 Independence of axiom~\eqref{Tand}.
A model \M\ for $\SSCLe\setminus\{\eqref{Tand}\}$
with domain $\{0,1\}$ that refutes $\tr\leftand\fa=\fa$ is the following:
\[
\begin{array}{r@{\hspace{5pt}}|@{\hspace{5pt}}c}
\neg\\\hline\\[-3mm]
0&1\\
1&0
\end{array}
\qquad\qquad
\begin{array}{r@{\hspace{5pt}}|@{\hspace{5pt}}c@{\hspace{5pt}}c@{\hspace{5pt}}c@{\hspace{5pt}}c@{\hspace{5pt}}c}
\leftand&
0&1\\\hline\\[-3mm]
0&
0&1\\
1&
1&1
\end{array}
\qquad\qquad
\begin{array}{r@{\hspace{5pt}}|@{\hspace{5pt}}c@{\hspace{5pt}}c@{\hspace{5pt}}c@{\hspace{5pt}}c@{\hspace{5pt}}c}
\leftor&
0&1\\\hline\\[-3mm]
0&
0&0\\
1&
0&1
\end{array}
\] 

 Independence of axiom~\eqref{MSCL1}.
A model \M\ for $\SSCLe\setminus\{\eqref{MSCL1}\}$
with domain $\{0,1\}$ that refutes
\\
$\tr\leftand(\tr\leftor\fa)=\tr$ is the following:
\[
\begin{array}{r@{\hspace{5pt}}|@{\hspace{5pt}}c}
\neg\\\hline\\[-3mm]
0&0\\
1&0
\end{array}
\qquad\qquad
\begin{array}{r@{\hspace{5pt}}|@{\hspace{5pt}}c@{\hspace{5pt}}c@{\hspace{5pt}}c@{\hspace{5pt}}c@{\hspace{5pt}}c}
\leftand&
0&1\\\hline\\[-3mm]
0&
0&0\\
1&
0&1
\end{array}
\qquad\qquad
\begin{array}{r@{\hspace{5pt}}|@{\hspace{5pt}}c@{\hspace{5pt}}c@{\hspace{5pt}}c@{\hspace{5pt}}c@{\hspace{5pt}}c}
\leftor&
0&1\\\hline\\[-3mm]
0&
0&0\\
1&
0&0
\end{array}
\] 

 Independence of axiom~\eqref{Mem}.
A model \M\ for $\SSCLe\setminus\{\eqref{MSCL3}\}$
with domain $\{0,1,2\}$ and $\llbracket a\rrbracket^\M=2$ for some $a\in A$
that refutes $(\fa \leftor \fa) \leftand a
=(\neg\fa \leftand (\fa \leftand a)) \leftor (\fa \leftand a)$ 
is the following:
\[
\begin{array}{r@{\hspace{5pt}}|@{\hspace{5pt}}c}
\neg\\\hline\\[-3mm]
0&1\\
1&0\\
2&0
\end{array}
\qquad\qquad
\begin{array}{r@{\hspace{5pt}}|@{\hspace{5pt}}c@{\hspace{5pt}}c@{\hspace{5pt}}c@{\hspace{5pt}}c@{\hspace{5pt}}c}
\leftand&
0&1&2\\\hline\\[-3mm]
0&
0&0&2\\
1&
0&1&2\\
2&
2&2&0
\end{array}
\qquad\qquad
\begin{array}{r@{\hspace{5pt}}|@{\hspace{5pt}}c@{\hspace{5pt}}c@{\hspace{5pt}}c@{\hspace{5pt}}c@{\hspace{5pt}}c}
\leftor&
0&1&2\\\hline\\[-3mm]
0&
0&1&1\\
1&
1&1&1\\
2&
1&1&1
\end{array}
\] 

 Independence of axiom~\eqref{C1}.
A model \M\ for $\SSCLe\setminus\{\eqref{C1}\}$
with domain $\{0,1,2\}$ and $\llbracket a\rrbracket^\M=2$ for some $a\in A$
that refutes $\fa \leftand a
=a \leftand \fa$ 
is the following:
\[
\begin{array}{r@{\hspace{5pt}}|@{\hspace{5pt}}c}
\neg\\\hline\\[-3mm]
0&1\\
1&0\\
2&2
\end{array}
\qquad\qquad
\begin{array}{r@{\hspace{5pt}}|@{\hspace{5pt}}c@{\hspace{5pt}}c@{\hspace{5pt}}c@{\hspace{5pt}}c@{\hspace{5pt}}c}
\leftand&
0&1&2\\\hline\\[-3mm]
0&
0&0&0\\
1&
0&1&2\\
2&
2&2&2
\end{array}
\qquad\qquad
\begin{array}{r@{\hspace{5pt}}|@{\hspace{5pt}}c@{\hspace{5pt}}c@{\hspace{5pt}}c@{\hspace{5pt}}c@{\hspace{5pt}}c}
\leftor&
0&1&2\\\hline\\[-3mm]
0&
0&1&2\\
1&
1&1&1\\
2&
2&2&2
\end{array}
\] 

\end{proof}

\begin{thebibliography}{99}

\addcontentsline{toc}{section}{References}

\bibitem{BK15}
Basseda, R. and Kifer, M. (2015).
Planning with Regression Analysis in Transaction Logic.
In: ten Cate, B. and Mileo, A.  (eds.),
RR 2015: Web Reasoning and Rule Systems. 
LNCS 9209, pages 45-60, 2015. Springer. DOI: 10.1007/978-3-319-22002-4\_5.

\bibitem{BBR95}
Bergstra, J.A., Bethke, I., and Rodenburg, P.H. (1995).
A propositional logic with 4 values: true, false, divergent and 
meaningless. 
\newblock \emph{Journal of Applied Non-Classical Logics}, 
5(2):199-218.

\bibitem{BHK90}
Bergstra, J.A., Heering, J., and Klint, P. (1990).
Module algebra. 
\newblock \emph{Journal of the ACM}, 
37(2):335-372.

\bibitem{BP10}
Bergstra, J.A. and Ponse, A. (2011). 
Proposition algebra. \emph{ACM
\newblock Transactions on Computational Logic}, Vol.~12, No.~3, 
Article 21 (36 pages).

\bibitem{HMA}
Bergstra, J.A. and Ponse, A. (2010). 
On Hoare-McCarthy algebras. 
\newblock Available at 
\\
\url{http://arxiv.org/abs/1012.5059}
[cs.LO].

\bibitem{BP12a}
Bergstra, J.A. and Ponse, A. (2012). 
Proposition algebra and short-circuit logic. 
\newblock In: Arbab, F.  and Sirjani, M. (eds.), 
Proceedings of the 4th International Conference on Fundamentals 
of Software Engineering (FSEN 2011, Tehran), LNCS 7141, 
pages 15-31, Springer-Verlag.


\bibitem{BP15}
Bergstra, J.A. and Ponse, A. (2015). 
Evaluation trees for proposition algebra. 
\newblock Available at 
\url{http://arxiv.org/abs/1504.0832}
[cs.LO].

\bibitem{BPS13}
Bergstra, J.A., Ponse, A., and Staudt, D.J.C. (2013). 
Short-circuit logic. 
\newblock Available at 
\\
\url{arXiv:1010.3674v4} [cs.LO,math.LO]. 
(First version appeared in 2010.)


\bibitem{Groote}
Groote, J.F (1990). 
A new strategy for proving omega-completeness applied to process algebra. 
In Baeten, J.C.M. and Klop, J.W. (eds.), 
\emph{Theories of Concurrency:  Unification and Extension}
(CONCUR 1990, Amsterdam), LNCS 458, pages 314-331. Springer-Verlag.

\bibitem{Hoa85}
Hoare, C.A.R. (1985).
A couple of novelties in the propositional calculus. 
\newblock \emph{Zeitschrift f\"ur Mathematische Logik und 
Grundlagen der Mathematik}, 31(2):173-178.

\bibitem{Jap08}
Japaridze, G. (2008).
Sequential operators in computability logic.
\newblock \emph{Information and Computation}, 206:1443-1475.

\bibitem{BirdBrain}
McCune, W. (2008).
The GUI: Prover9 and Mace4 with a Graphical User Interface.
Prover9-Mace4-v05B.zip (March 14, 2008). Available at
\\
\url{https://www.cs.unm.edu/~mccune/prover9/gui/v05.html}.

\bibitem{Moret}
Moret, B.M.E. (1982).
Decision trees and diagrams. 
\newblock \emph{Computing Surveys}, 14(4):593-623. 
DOI: 10.1145/356893.356898. 

\bibitem{PS17}
Ponse, A. and Staudt, D.J.C. (2018).
An independent axiomatisation for free short-circuit logic. 
\newblock \emph{Journal of Applied Non-Classical Logics}, 28(1):35-71. 
Online available.
DOI: 10.1080/11663081.2018.1448637. (Also available at \url{arXiv:1707.05718v2} [cs.LO].)

\bibitem{Daan}
Staudt, D.J.C. (2012).
Completeness for Two Left-Sequential Logics.
MSc.\ thesis Logic, University of Amsterdam (May 2012).
Available at \url{arXiv:1206.1936v1} [cs.LO].

\bibitem{Veld}
Veld, S.L. in 't (2014).
Satisfiability of Short Circuit Logic.
BSc.\ thesis Mathematics and Computer Science, University of Amsterdam (July 2014).
Available at \url{arXiv:1510.05162v1} [cs.LO].
\end{thebibliography}
\end{document}